\newtheorem{theorem}{Theorem}[section]
\newtheorem{definition}[theorem]{Definition}
\newtheorem{lemma}[theorem]{Lemma}
\newcommand{\subheader}[1]{\noindent{\bf{}{#1}}}
\newcommand{\etal}{{\it et.\ al.}}
\newcommand{\defequal}{\ensuremath{\overset{\textup{\tiny def}}{=}}}
\newcommand{\Pred}[1]{\ensuremath{\textsf{\textup{\small{}#1}}}}
\newcommand{\tinyPred}[1]{\ensuremath{\textsf{\textup{\scriptsize{}#1}}}}
\newcommand{\ComputationSet}[1]{\ensuremath{\mathcal{C}\left({#1}\right)}}
\newcommand{\computationSet}[1]{\ensuremath{\mathcal{C}({#1})}}
\newcommand{\PCName}{partition consistency}
\newcommand{\POBName}{partial-order broadcast}
\newcommand{\NWName}{network}
\newcommand{\PCShort}{PC}
\newcommand{\POBShort}{POB}
\newcommand{\PC}[1]{\ifthenelse{\equal{#1}{}}{\ensuremath{\text{\PCShort}}}{\ensuremath{\text{\PCShort}(#1)}}}
\newcommand{\POB}[1]{\ifthenelse{\equal{#1}{}}{\ensuremath{\text{\POBShort}}}{\ensuremath{\text{\POBShort}(#1)}}}
\newcommand{\NW}{NW}
\newcommand{\tstamp}{TS}
\newcommand{\token}{TKN}
\DeclareRobustCommand{\smartBoxArg}[1]{\ifthenelse{\equal{#1}{}}{}{\ensuremath{[#1]}}}
\DeclareRobustCommand{\smartRoundArg}[1]{\ifthenelse{\equal{#1}{}}{}{\ensuremath{(#1)}}}
\DeclareRobustCommand{\PredNW}[1]{\ensuremath{\Pred{NW}\smartBoxArg{#1}}}
\DeclareRobustCommand{\PredNWVoid}{\ensuremath{\Pred{NW}}}
\DeclareRobustCommand{\PredPC}[2]{\ensuremath{\Pred{PC}\smartRoundArg{#1}\smartBoxArg{#2}}}
\DeclareRobustCommand{\PredPCVoid}{\ensuremath{\Pred{PC}}}
\DeclareRobustCommand{\PredPOB}[2]{\ensuremath{\Pred{POB}\smartRoundArg{#1}\smartBoxArg{#2}}}
\DeclareRobustCommand{\PredPOBVoid}{\ensuremath{\Pred{POB}}}
\DeclareRobustCommand{\WitOrd}[3]{\ensuremath{\Pred{Witnesses}[{#1},{#2},{#3}]}}
\DeclareRobustCommand{\impl}[4]{\ensuremath{\text{#1}^{\text{\tiny #2}}_{\text{\tiny #3}}\smartRoundArg{#4}}}
\DeclareRobustCommand{\tsimpl}[2]{\impl{\tstamp{}}{\POB{#1}}{\NW{}}{#2}}
\DeclareRobustCommand{\tokenimpl}[2]{{\impl{\token{}}{\POB{#1}}{\NW{}}{#2}}}
\DeclareRobustCommand{\topimpl}[4]{{\impl{#1}{\PC{#2}}{\POB{#3}}{#4}}}
\DeclareRobustCommand{\swimpl}[3]{\topimpl{SWFR}{#1}{#2}{#3}}
\DeclareRobustCommand{\fwimpl}[3]{\topimpl{FWSR}{#1}{#2}{#3}}
\DeclareRobustCommand{\wimpl}[3]{\topimpl{WR}{#1}{#2}{#3}}
\newcommand{\MapSet}[1]{\ensuremath{\langle\; #1 \;\rangle}}
\newcommand{\targ}[1]{\ensuremath{\widehat{#1}}}
\newcommand{\orderArrow}[1]{\ensuremath{\xrightarrow{{\scriptscriptstyle #1}}}}
\newcommand{\simArrow}[1]{\ensuremath{\underset{{#1}}{{\sim}}}}
\newcommand{\lift}[1]{lift$(\mbox{#1})$}
\newcommand{\Lift}[1]{Lift$(\mbox{#1})$}
\newcommand{\seqlinearp}[1][p]{\ensuremath{L_{#1}}}
\newcommand{\seqLinearp}[1][\widehat{p}]{\ensuremath{\widehat{L}_{#1}}}
\newcommand{\rellinearp}[1][p]{\ensuremath{\orderArrow{{\seqlinearp[{#1}]}}{}}}
\newcommand{\relLinearp}[1][\widehat{p}]{\ensuremath{\orderArrow{{\seqLinearp[{#1}]}}}}
\newcommand{\delorder}{\textnormal{delOrder}}
\newcommand{\programOrder}{\ensuremath{\textnormal{prog}}}
\newcommand{\programOrderp}{\ensuremath{\textnormal{prog}_p}}
\newcommand{\ProgramOrder}{\ensuremath{\widehat{\textnormal{prog}}}}
\newcommand{\HappensBefore}{\ensuremath{\widehat{\textnormal{HappensBefore}}}}
\newcommand{\happensBefore}{\ensuremath{\textnormal{HappensBefore}}}
\newcommand{\MessageOrder}{\ensuremath{\widehat{\textnormal{MessageOrder}}}}
\newcommand{\messageOrder}{\ensuremath{\textnormal{MessageOrder}}}
\newcommand{\WritesInto}{\ensuremath{\widehat{\textnormal{WritesInto}}}}
\newcommand{\writesInto}{\ensuremath{\textnormal{WritesInto}}}
\newcommand{\FifoChannel}{\ensuremath{\widehat{\textnormal{FifoChannel}}}}
\newcommand{\fifoChannel}{\ensuremath{\textnormal{FifoChannel}}}
\newcommand{\mfalse}{\const{false}}
\newcommand{\Processes}{\ensuremath{P}}
\newcommand{\computationC}{\ensuremath{C}}
\newcommand{\ComputationC}{\ensuremath{\widehat{C}}}
\newcommand{\cutwrites}{\ensuremath{{\tinyPred{wrts}}}}
\newcommand{\cutwritesgroup}[1]{\ensuremath{{\tinyPred{wrts}(#1)}}}
\newcommand{\cutdeliverslabel}[1]{\ensuremath{{\tinyPred{delivers}(#1)}}}
\newcommand{\cutmultiwriters}{\ensuremath{{\tinyPred{multi-wrtrs}}}}
\newcommand{\myconst}[1]{\textit{\footnotesize #1}}
\newcommand{\Set}[1]{\ensuremath{\left\{#1\right\}}}
\providecommand{\abs}[1]{\lvert#1\rvert}
\newcommand{\memCon}{\Pred{MC}}
\newcommand{\MemCon}{\ensuremath{\widehat{\Pred{MC}}}}
\newcommand{\FilterMemCon}[2]{\ensuremath{\computationSet{{#2}, {#1}}}}
\newcommand{\FiltermemConVoid}[1]{\FilterMemCon{\memCon}{#1}}
\newcommand{\FilterMemConVoid}[1]{\FilterMemCon{\MemCon}{#1}}
\newcommand{\transform}[1]{\ensuremath{\tau({#1})}}
\newcommand{\opCommon}[3][{}]{\ensuremath{\ifthenelse{\equal{#2}{low}}{{\textnormal{\texttt{\small{}#3}}}_{#1}}{{\textnormal{\textsc{\small{}#3}}}_{#1}}}}
\newcommand{\opCommonArg}[4][{}]{\ensuremath{\opCommon[#1]{#2}{#3}(#4)}}
\newcommand{\opNoReturn}[4][{}]{\ensuremath{{\opCommonArg[#1]{#2}{#3}{#4}}}}
\newcommand{\opReturn}[5][]{\ensuremath{\genfrac{}{}{}{1}{\opCommonArg[#1]{#2}{#3}{#4}}{#5}}}
\newcommand{\codeOp}[2]{\ensuremath{\func{#1}(#2)}}
\newcommand{\codeOpPattern}[3]{\ensuremath{{#3} \dashv \codeOp{#1}{#2}}}
\newcommand{\opwriteVoid}{\opCommon{}{write}{}}
\newcommand{\opreadVoid}{\opCommon{}{read}{}}
\newcommand{\opWriteVoid}{\opCommon{low}{write}{}}
\newcommand{\opReadVoid}{\opCommon{low}{read}{}}
\newcommand{\opReadp}[3]{\opReturn[#3]{low}{read}{#1}{#2}}
\newcommand{\opwrite}[1]{\opNoReturn{}{write}{#1}}
\newcommand{\opwritep}[2]{\opNoReturn[#2]{}{write}{#1}}
\newcommand{\opWrite}[1]{\opNoReturn{low}{write}{#1}}
\newcommand{\opWritep}[2]{\opNoReturn[#2]{low}{write}{#1}}
\newcommand{\opread}[2]{\opReturn{}{read}{#1}{#2}}
\newcommand{\opreadpNoReturn}[2]{\opNoReturn[#2]{}{read}{#1}}
\newcommand{\opreadNoReturn}[1]{\opNoReturn{}{read}{#1}}
\newcommand{\opRead}[2]{\opReturn{low}{read}{#1}{#2}}
\newcommand{\opReadNoReturn}[1]{\opNoReturn{low}{read}{#1}}
\newcommand{\opEnqueue}[1]{\opNoReturn{low}{priority-enQ}{#1}}
\newcommand{\opEnqueueVoid}{\opCommon{low}{priority-enQ}{}}
\newcommand{\opEnqueuep}[2]{\opNoReturn[#2]{low}{priority-enQ}{#1}}
\newcommand{\opDequeuep}[3]{\opReturn[#3]{low}{extractmin}{#1}{#2}}
\newcommand{\opDequeueVoid}{\opCommon{low}{extractmin}{}}
\newcommand{\opDequeue}[2]{\opReturn{low}{extractmin}{#1}{#2}}
\newcommand{\opDequeueNoReturn}[1]{\opNoReturn{low}{extractmin}{#1}}
\newcommand{\opPeekminNoReturn}[1]{\opNoReturn{low}{peek-min}{#1}}
\newcommand{\opIsemptyNoReturn}[1]{\opNoReturn{low}{priority-isempty}{#1}}
\newcommand{\opFIFOEnqueue}[1]{\opNoReturn{low}{fifo-enQ}{#1}}
\newcommand{\opFIFOEnqueueVoid}{\opCommon{low}{fifo-enQ}{}}
\newcommand{\opFIFOPeekHeadNoReturn}[1]{\opNoReturn{low}{peek-head}{#1}}
\newcommand{\opFIFODequeue}[2]{\opReturn{low}{fifo-deQ}{#1}{#2}}
\newcommand{\opFIFODequeueNoReturn}[1]{\opNoReturn{low}{fifo-deQ}{#1}}
\newcommand{\opFIFODequeueVoid}{\opCommon{low}{fifo-deQ}{}}
\newcommand{\opFIFOIsEmptyNoReturn}[1]{\opNoReturn{low}{isempty}{#1}}
\newcommand{\opEitherEnqueue}[1]{\opNoReturn{low}{enQ}{#1}}
\newcommand{\opEitherEnqueueVoid}{\opCommon{low}{enQ}{}}
\newcommand{\opEitherDequeueVoid}{\opCommon{low}{deQ}{}}
\newcommand{\opEitherDequeuep}[3]{\opReturn[#3]{low}{deQ}{#1}{#2}}
\newcommand{\opEitherDequeuepVoid}[1]{\opCommon[#1]{low}{deQ}{}}
\newcommand{\opEitherEnqueuep}[2]{\opNoReturn[#2]{low}{enQ}{#1}}
\newcommand{\opsend}[1]{\opNoReturn{}{send}{#1}}
\newcommand{\opsendVoid}{\opCommon{}{send}{}}
\newcommand{\opSend}[1]{\opNoReturn{low}{send}{#1}}
\newcommand{\opSendVoid}{\opCommon{low}{send}{}}
\newcommand{\codeRecvPattern}[1]{\codeOpPattern{recv}{}{(#1)}}
\newcommand{\oprecv}[1]{\opReturn{}{recv}{}{#1}}
\newcommand{\oprecvVoid}{\opCommon{}{recv}{}}
\newcommand{\opRecvNoReturn}{\opNoReturn{low}{recv}{}}
\newcommand{\opRecvVoid}{\opCommon{low}{recv}{}}
\newcommand{\opRecv}[1]{\opReturn{low}{recv}{}{#1}}
\newcommand{\opbcast}[1]{\opNoReturn{}{bcast}{#1}}
\newcommand{\opbcastp}[2]{\opNoReturn[#2]{}{bcast}{#1}}
\newcommand{\opbcastVoid}{\opCommon{}{bcast}{}}
\newcommand{\opbcastpVoid}[1]{\opCommon[#1]{}{bcast}{}}
\newcommand{\opBcast}[1]{\opNoReturn{low}{bcast}{#1}}
\newcommand{\opBcastVoid}{\opCommon{low}{bcast}{}}
\newcommand{\opdeliver}[1]{\opReturn{}{deliver}{}{#1}}
\newcommand{\opdeliverNoReturn}{\opNoReturn{}{deliver}{}}
\newcommand{\opdeliverVoid}{\opCommon{}{deliver}{}}
\newcommand{\opDeliverNoReturn}{\opNoReturn{low}{deliver}{}}
\newcommand{\opDeliverVoid}{\opCommon{low}{deliver}{}}
\newcommand{\opDeliver}[1]{\opReturn{low}{deliver}{}{#1}}
\newcommand{\opdeliverp}[2]{\opReturn[#2]{}{deliver}{}{#1}}
\newcommand{\opdeliverpVoid}[1]{\opCommon[#1]{}{deliver}{}}
\newcommand{\blankfrac}[2]{\ensuremath{\genfrac{}{}{0pt}{}{#1}{#2}}}
\begin{document}

\begin{titlepage}

\title{Partition Consistency:\\ A Case Study in Modeling Systems with Weak Memory Consistency and Proving Correctness of their Implementations}

\author{Steven Cheng, Lisa Higham, Jalal Kawash \\
   University of Calgary \\
   stevechy@gmail.com,\{higham,jkawash\}@ucalgary.ca \\
   }

\date{}

\maketitle \thispagestyle{empty}


\begin{abstract}
Multiprocess systems, including grid systems, multiprocessors and multicore computers, 
incorporate a variety of specialized hardware and software mechanisms, 
which speed computation, but result in complex memory behavior. 
As a consequence, the possible outcomes of a concurrent program can be unexpected.
A \emph{memory consistency model} is a description of the behaviour of such a system.
Abstract memory consistency models aim to capture the concrete implementations and architectures. 
Therefore, formal specification of the implementation or architecture is necessary, 
and proofs of correspondence between the abstract and the concrete models are required. 

This paper provides a case study of this process.
We specify a new  model, \emph{partition consistency}, 
that generalizes many existing consistency models.
A concrete message-passing network model is also specified. 
Implementations of partition consistency on this network model are then presented and proved correct. 
A middle level of abstraction is utilized to facilitate the proofs. 
All three levels of abstraction are specified using the same framework.
The paper aims to illustrate a general methodology and techniques 
for specifying memory consistency models and proving the correctness of their implementations. 
\end{abstract}

\bigskip

\paragraph{Keywords:} weak memory consistency models,  distributed-shared memory,
sequential consistency, processor consistency, correctness of distributed implementations,
partial-order broadcast.
\end{titlepage}


\section{Introduction}
\label{intro.section}

Multiprocess systems including networks, grid systems, multiprocessors and multicore computers, 
incorporate a variety of specialized hardware and software mechanisms
such as replicated memory, multi-level caches, 
write-buffers, multiple buses, and complex support for message passing. 
These features help speed computation by hiding latency and avoiding bottlenecks when accessing shared memory. 
An unfortunate consequence is that the possible outcomes of concurrent programs can be unexpected.
Because the processes' views of the current state of memory at any moment do not
completely agree, the execution of a concurrent program may not be sequentially consistent.

Sequential consistency is important. It can be efficiently implemented in the absence of data-races; 
it supports platform independent code; it is easier to reason about sequential consistency than weaker models. 
Nonetheless, real systems typically deviate from implementing sequential consistency, resulting in complex memory behavior.

It is therefore necessary to formally and precisely specify what computations can arise on a given system.
Such a specification is a \emph{memory consistency model}.
A concrete or low-level memory consistency model would describe possible outcomes in terms of the behavior of the actual 
system due to its hardware and software architecture. 
For example, consider a system where each process has a write-buffer. 
The behaviour of each store($x,\nu$) instruction could be separated into a sequence of low-level events: 
first the value $\nu$ for location $x$ is recorded in the local write-buffer,
later, the pair $(x, \nu)$ is removed from the write-buffer and the shared memory location $x$ is updated to contain $\nu$. 
Similarly, the behaviour of each load($x$) instruction could be separated into the events: 
consult the write-buffer for a pending store to location $x$; 
if one exists, return the associated value; 
otherwise fetch the value of location $x$ from main memory and return that value.
Such an operational specification in term of the events that occur in the system when an instruction is executed 
is a good way to capture exactly what can arise when a concurrent program is executed.
A programmer, however, should be spared having to deal with the architectural details;
she should be able to reason about her program in terms of the instructions of the program, rather than low level events.
An abstract memory consistency model aims to provide such a non-operational specification of the behaviors of concrete architectures and systems.
How can we be sure that such an abstract model is correct? 
What is required is a specification of both the abstract model and the concrete architecture and a proof of their equivalence. 

This paper provides a case study of this process.
We introduce a new general class of memory consistency models, called \emph{\PCName{}},
which captures different degrees of consistency between various sets of shared variables, 
as is common in distributed and multiprocessor environments.  
Sequential consistency \cite{lam79}, Goodman's processor consistency \cite{aha93}, and the pipelined random-access machine \cite{lip88} are all instances of \PCName{}.
The definition of \PCName{} is a natural extension of sequential consistency. 
Implementations of \PCName\ on the message-passing network model are presented and proved correct.
In contrast to the high-level \PCName{} definition, the message passing network 
is modelled at the level of message sends and receives combined with operations on local memory.
It requires several partial orders to capture the relationships between message operations and local memory operations.
Although \PCName\ is an over-simplification, it is non-trivial and it captures some essential properties of  actual implementations. 



Our implementations and their proofs of correctness are facilitated by 
a middle level of abstraction, 
which generalizes a totally ordered broadcast model to a partially ordered one.
Thus we introduce three levels of abstraction:  
the abstract \PCName{} model, 
the intermediate \POBName{} model, and 
the concrete message-passing network model. 
The same framework is used to define the memory consistency model of each level. 
So the intermediate \POBName{} model is first the 
\emph{target} of the implementation of the 
\emph{specified} \PCName{} model. 
Next, the same \POBName{} abstraction serves as the
specified model, which is implemented on the target message-passing network model.
We define the fast-read/slow-write and fast-write/slow-read
implementations of \PCName{} on the \POBName{} model. 
Next we give two implementations of \POBName{} (one token-based and one timestamp-based) on message-passing networks. 
This results in four compositions of transformations; 
each implements \PCName{} on message-passing networks.

Proofs of correctness of implementations of shared memory models on multiprocessors or networks 
typically involve a great deal of tedious but essential detail, 
and are thus prone to imprecision and error. 
Layering the implementation on levels that are defined using a common framework helps to overcome these problems
by allowing us to focus on only part of the proof obligation at each level. 
We also introduce a diagrammatic notation that provides
a visual representation of logical statements and inferences.
The  precision and conciseness of mathematical logic aids in avoiding some potential ambiguities of consistency proofs, 
while its representation in diagram format helps keep our intuitions aligned with the proof. 
This notation helped us uncover errors in our initial implementations and our attempted proofs. 

Partition consistency is of independent interest because it supports different consistency guarantees for different partitions of variables.
Itanium \cite{IntelArch2} and Java \cite{DBLP:conf/popl/MansonPA05}, for example,
exhibit differing degrees of consistency depending on how variables are declared. 
Similarly, abstract consistency models are typically defined by partitioning operations into classes, 
where processes have different levels of agreement on the order of operations in each class.
For example,
Sequential consistency (SC) \cite{lam79} requires complete agreement on all operations; 
pipelined random-access machine (P-RAM) \cite{lip88} requires agreement on the write operations by any individual process; 
in addition to P-RAM, processor consistency (PC-G) \cite{aha93} requires agreement on the operations on the same variables. 
SC, P-RAM, and PC-G are all special cases of \PCName{}. 
This paper provides implementations of any instance of the \PCName{} class 
on message-passing networks with multi-threaded nodes. 

Some of the authors have used earlier versions of this framework in previous research. For example, it was used to expose problems with the Java specifications when applied to long-lived programs \cite{HK98}, to provide a simple abstract definition for TSO \cite{DBLP:journals/tocs/HighamJK07}, to study the Intel Itanium memory consistency model \cite{DBLP:journals/entcs/HighamJK07}, and to compare Itanium with the SPARC models \cite{DBLP:conf/spaa/HighamJ06}.

\subsubsection*{Organization of the rest of the paper:} 
Section \ref{related-work.section} discusses related research on modeling and proof techniques as applied to weak memory consistency models.
Section \ref{model.section} provides the framework and the three levels of abstraction used in this paper, 
by defining the memory consistency of each level.  
The setup for transformations and their proofs of correctness is given in Section \ref{setup.section}.
The implementations of \PCName{} on \POBName{} and their proofs of correctness are given in Section \ref{implement.section}. 
Section \ref{token.section} (respectively, \ref{timestamp.section}) provides 
the token (respectively, timestamp) implementation of \POBName{} on the message-passing network model,
and the proof that the implementation is correct.
Section \ref{concl.section} concludes by summarizing the paper and discussing future directions.

%

\section{Related Work}
\label{related-work.section}

Research concerning systems with weak memory consistency proceeds in several directions.
In the following we look at those directions that are related to our work. 
We first review some formalisms used to specify given systems focusing on the ones that closely resemble our framework. 
There is a proliferation of abstract memory consistency models aimed at capturing those relaxations of 
sequential consistency that are common in real systems.
We concentrate the next part of our review on the subset of these that are instances of partition consistency.
Our work implements partition consistency on the message-passing network model, 
so we next discuss its relationship to the large body of research concerned with
how to make algorithms that are correct for sequentially consistent machines 
remain correct when run on systems with weaker memory consistency guarantees.
The principal goal of this paper is to demonstrate (through the case study of partition consistency) 
one strategy for proving the  correctness of an implementation of an abstract memory consistency model on a concrete operational model.
While an abstract memory consistency model is often used to provide a non-operational model of 
a multiprocess system, 
a proof of the correctness of the proposed model is less common.
Our final review section discusses other instances of this activity, and related proof techniques.

\subsection{Formalisms for modelling systems}

There are several formalisms use to specifying concurrent systems --- 
the most common types are based on process algebras and automata theory.
Process algebras start with a basic set of processes, then combine them into larger systems using various algebraic operators.
Communicating sequential processes (CSP) \cite{HoareCSP1985} and 
the calculus of communicating systems (CCS) \cite{Milner199509} are classic examples.

In the input-output automata (IOA) language \cite{Lynch199704}, processes are specified as automata that
communicate by action synchronization. Actions are designated as input, output, or internal, where 
each input action is jointly performed with all matching output actions.
Actions in IOA are automata transitions that can arbitrarily modify local state.
Their effect on local state is specified in an imperative language that is essentially pseudocode.
This can make IOA useful for reasoning about algorithms written in imperative programming languages.
The input/output matching is similar to CCS,
 and the fact that more than two processes can participate in a communication action is similar to CSP.

The temporal logic of actions, TLA \cite{Lamport200207}, specifies automata using a mathematical language.
It avoids programming languages, making it attractive to hardware designers.
TLA also provides many tools for reasoning about automata in general, and can be used with IOA
 \cite{DBLP:conf/tphol/Muller98}.

Another approach is to regard a system as a collection of processes, 
where each process produces a sequence of instructions invocations (program order).
A memory consistency model is a set of partial order constraints on the collection of all instructions of the system.
(The model will specify what subsets of program order must be maintained by which processes (their ``views''), 
and to what degree processes' views must agree.)
This is the approach used in this paper;  it is defined in Section \ref{model.section}.
It is most similar to that used by Steinke and Nutt \cite{DBLP:journals/jacm/SteinkeN04},
who observed that all of the models that they were aware of from the literature could be expressed 
in terms of the existence of serial views that extend certain partial orders.

\subsection{Memory consistency models}

There is a large body of literature examining various memory consistency models.
Steinke and Nutt \cite{DBLP:journals/jacm/SteinkeN04}
re-specified several models including Goodman's processor consistency, sequential consistency, pipelined random-access, and causal consistency 
in terms of processes' views, and 
proved that their definitions are equivalent to those in the literature. 
Then, they used their partial order definitions to compare the relative strengths of  models.
Specifically, they arranged 12 models into a lattice, with SC being the strongest model.  
This research demonstrated a definitional style that is general enough to capture 
the models in the literature and to facilitate their comparisons.
We now informally describe some specific memory consistency models. 
It is straightforward to recast each using the partial order formalism of Steinke and Nutt as 
adapted in Section \ref{model.section}.

Sequential consistency is a strong memory consistency model introduced by Lamport \cite{lam79}; 
it requires agreement between all system processes on a single view of the all the operations of all processes. 
This view agrees with the order in which the instructions producing these operations appear in their programs, 
called \emph{program order}.
An even stronger model, \emph{atomic objects} as defined by Lamport and by Lynch \cite{Lyn96}, 
and \emph{linearizability}, defined by Herlihy and Wing \cite{her90}
requires agreement on global timing of operations in addition to sequential consistency.  

Lipton and Sandberg \cite{lip88} introduced a much weaker consistency model than sequential consistency, 
the pipelined random-access machine (P-RAM) memory model. P-RAM is an example of distributed-shared memory (DSM).
It requires a process's view to include its own operations and all other processes' writes. 
The view of a process must be consistent with program order. 
However, P-RAM allows processes to disagree on the order of two writes performed by two different processes. 
As a result, P-RAM is so weak that it cannot support a solution to mutual exclusion 
with only read/write variables \cite{DBLP:journals/tpds/HighamK06}.

Coherence requires that processes agree on the ordering of operations on each object separately, 
but not on how the operations on different objects interleave. 
Coherence is also too weak to support mutual exclusion with only read/write variables \cite{HK97}. 
Coherence is a memory consistency model that captures a property that we might expect from any multiprocessing system. 
Nevertheless, some language memory models, such as Java, are incomparable to coherence.  

Goodman's version of processor consistency (PC-G) \cite{goo89}, 
as formalized by Ahamad \etal{} \cite{aha93}, strengthens P-RAM by adding coherence to it. 
PC-G executions must simultaneously satisfy both P-RAM and coherence. 
PC-G is weaker than SC, 
but it supports mutual exclusion with only read/write variables \cite{DBLP:journals/tpds/HighamK06}. Hence, 
PC-G is one of the few weak models that can be used to implement SC 
with only read/write variables \cite{HighamKawash2008}.
Other versions of processor consistency exist, and these versions are incomparable \cite{verthesis}.

In this paper, we introduce \emph{partition consistency}, 
which defines a family of memory consistency models inspired by PC-G, P-RAM, and SC. 
Each of these three models is a special case of partition consistency.

In addition to such abstract memory consistency models, 
the literature contains formalizations of the memory consistency models implemented by concrete multiprocessor machines.
Higham, Jackson, and Kawash explore memory consistency models for 
SPARC \cite{DBLP:journals/tocs/HighamJK07} and Itanium \cite{HJK-disc-2006,HJK-ICDCN-2006} multiprocessors, 
including the TSO model, 
which is claimed to be the consistency model of Intel x86 multiprocessors. 
The consistency model for Alpha processors is described in the Alpha manual \cite{CCC98} 
and is formally defined and investigated by Attiya and Friedman \cite{af94}.
The PowerPC consistency model is formalized by Corella, Stone, and Barton \cite{CSB94}.
Sarkar, Sewell, Alglave, Maranget and Williams \cite{DBLP:conf/pldi/SarkarSAMW11} also 
aim at faithfully representing the memory model of POWER multiprocessors. 
This research defines an abstract machine that implements the model, 
and provides programmers with a high level explanation of how the memory model is implemented by POWER multiprocessors.
A large number of manually coded and automatically generated litmus tests are
run on various POWER processors to provide confidence that the hardware behaves as the memory model predicts.
The Intel architecture developer manual provides some description of an x86 memory model \cite{intel64softwaredevmanual},
which is also clarified in an Intel whitepaper \cite{intel64memwhitepaper}. 
Owens, Sarkar, Sewell, Nardelli, Ridge, Baribant, Myreen, and Alglave 
studied the memory consistency model of the x86 architecture 
\cite{DBLP:journals/cacm/SewellSONM10,DBLP:conf/popl/SarkarSNORBMA09, DBLP:conf/tphol/OwensSS09}.

It is important to study the behavior of highly used programming languages when there is more than one thread of execution and to formalize the resulting memory models.
 The Java memory model was the subject of a few studies (for instance, see \cite{DBLP:conf/popl/MansonPA05}, \cite{DBLP:conf/tphol/AspinallS07}, and \cite{HK98}). The C++ memory model was studied by Boehm and Adve \cite{Boehm:2008:FCC:1379022.1375591} and recently formalized by Batty, Owens, Sarkar, Sewell, and Weber \cite{DBLP:conf/popl/BattyOSSW11}. The partition consistency model is simpler than the models arising from such languages; our focus in this paper is on proving the correctness of implementations of a model.

\subsection{Implementations of sequential consistency on systems with weaker memory consistency guarantees}
\label{dsm-related-work.subsec}

Many researches have address the question of how to ensure that a program that is correct under sequential consistency 
remains correct and efficient under a weaker consistency model.

Attiya and Welch provide DSM implementations for sequential consistency and linearizability\cite{DBLP:journals/tocs/AttiyaW94}, 
and they examine the difference in these implementations in terms of message delay.
Building on Lipton and Sandberg's lower bound on sequential consistency \cite{lip88}, 
Attiya and Welch establish that it is more expensive to implement linearizability 
than it is to implement SC.
Cholvi, Fernandez, Jimenez, and Raynal \cite{DBLP:conf/nca/CholviFJR04} improve the best-case performance given by Attiya and Welch 
by showing that sometimes a read can be guaranteed not to incur any message delay. 
Cholvi \etal{} present a sequentially consistent DSM protocol that ensures fast writes, 
but not all reads can be fast. 
Their implementation uses a single circulated token to synchronize the processors' copies of memory.

In this paper, we generalize Attiya and Welch's total-order broadcast algorithm \cite{DBLP:journals/tocs/AttiyaW94} 
to a partial-order one, 
allowing us to a implement a class of weak memory consistency models. 
Since our focus is weak memory consistency, our generalization is based on their implementation for SC, 
rather than  linearizability.
Brzezinski and Szychowiak \cite{DBLP:conf/iscis/BrzezinskiS03} provide a DSM implementation for PC-G and prove its correctness.
It statically assigns a master node to each variable to ensure coherence.
In contrast, our implementations use a timestamp protocol or circulating tokens and is fully distributed. 

Agrawal, Choy, Leong and Singh created the Maya DSM \cite{DBLP:conf/hpdc/AgrawalCLS94} 
to experiment with weak memory consistency models.
Amza, Cox, Dwarkadas, Keleher, Lu, Rajamony, Yu, and Zwaenepoel implemented the weak memory consistency model 
release consistency in their Treadmarks DSM \cite{DBLP:conf/usenix/KeleherCDZ94}. 
Adve introduced data-race-free (DRF) programs \cite{adv96,DBLP:journals/cacm/Adve10}.
DRF programs have that property that there are no data races in any sequentially consistent execution, 
which can be achieved by insertion of appropriate synchronization instructions.
DRF programs are guaranteed to yield sequentially consistent computations on several weak models including 
release consistency \cite{gha90,gha93}.

Shasta and Snir \cite{DBLP:journals/toplas/ShashaS88} implement sequential consistency on MIMD machines such as the NYU Ultracomputer and IBM RP3.
In such machines, 
a packet-switched network connects processors to multi-ported memory modules that can be simultaneously accessed.
Their goal is to gain efficiency by exploiting potential simultaneous accesses to memory.
To ensure that sequential consistency is not violated, 
control instructions are added to delay accesses until the previous one by the same processor is completed, and 
synchronization code (locks) are used to deal with cases when some memory accesses need to have stronger atomicity than the word-level atomicity provided by the machine.  
For efficiency,  it is important to the use these constructs only when necessary.
Analysis of interdependence of processes is used to minimize their use.
By doing this analysis of a program before it is executed, 
they show that their implementation ``requires far less locking than database control theory would lead one to expect''.
Their proofs are primarily set-theoretic in structure. In this paper, we similarly implement sequentially consistency (and other models) but on a message-passing platform.

Kuperstein, Vechev and Yahav \cite{DBLP:conf/fmcad/KupersteinVY10} developed an algorithm and its implementation (Fender) 
that infers where memory fences are needed to maintain correctness.
Fender takes as input a finite state program, a safety specification and a memory model described by a 
transition system. 
For each state, Fender computes an avoid formula that captures all the ways to prevent an execution from reaching the state.
Once transitions to invalid states are identified, provided they can be avoided by local fences, 
such fences can be inserted to ensure that the invalid states are not reachable.
This approach is distinctly different from the approach of this paper.  
It uses an operational definition of the memory consistency model, and a state-based notion of safety,
whereas we use a partial-order definition of computations and a predicate on computations to 
define safety.
Our techniques, however, do not provide an automated way to infer where and what kind of synchronization is required.

The CheckFence tool of Burckhardt, Alur, and Martin \cite{DBLP:conf/pldi/BurckhardtAM07} is another tool to ensure that
programs remain correct when executed under weak memory models. 
It takes as input a program written in a subset of C and an axiomatic memory model, and 
determine if there is an execution that violates its specification. 
CheckFence works by combining the axiomatic memory model definition with a compiled version of input program to 
to form a boolean satisfiability problem.
It then calls a SAT solver to find violating executions for finite unrolling of the program.
Like Fender, the advantage of CheckFence is that much of the work of checking correctness is automated.
It does not, however, directly shed much light on how to fix a program that does permit executions that violate the specification.

Alglave and Maranget \cite{DBLP:conf/pldi/SarkarSAMW11} provide a class of memory models that can be instantiated to produce sequential consistency, 
Sparc-TSO and a model based on Power processors.
They provide theorems on barrier placement needed to regain sequential consistency 
and a tool called \emph{diy} to automatically generate litmus tests that detect relaxations of SC.
They use a specification style that is similar to what we use in this paper.

Huseynov's Distributed Shared Memory webpage \cite{dsmweb} tracks the available academic and commercial DSM implementations.

In contrast with all of these papers, 
this paper is concerned with developing techniques to prove that a given (weak) abstract memory consistency model is a correct abstraction of a given architecture.

\subsection{Proofs of correctness of concurrent systems}

There are several methods to prove the correctness of programs.
Hoare established a logic-based approach, which proceeds by showing that a program satisfies specified post-conditions given that its satisfies specified pre-conditions (see Backhouse \cite{Backhouse03}). 
Owicki and Gries extended Hoare's logic to apply to  multiprocessor systems (see Feijen and van Gasteren \cite{FeijenVanGasteren99}). 
Reynolds generalized Hoare's logic to separation logic \cite{DBLP:conf/lics/Reynolds02}, 
which facilitates proofs of programs because it allows reasoning about parts of memory independent of the entire global state. 
Concurrent separation logic \cite{DBLP:journals/tcs/Brookes07} combines these two extensions. 
It is aimed at reasoning about concurrent programs and their shared mutable data structures.

CCS \cite{Milner199509} and $\pi$-calculus \cite{AcetoReactiveSystems2007} establish correctness proofs by simulating one automaton with another. 
Proofs proceed by showing that the properties of a simulation imply the specifications. 
In these approaches, a system execution is a sequence of events. 
Lamport's temporal logic of actions (TLA) \cite{Lamport200207} works similarly except that an execution in TLA is a sequence of states.
Lamport's system executions framework \cite{lam86b} uses a collection of events together with 
a happens-before partial order and a can-causally-affect relation, 
where some general axioms for system executions are satisfied. 
A proof of correctness is constructed by defining a mapping of sets of events to an abstract new event. 
This mapping induces a happens-before order and a can-causally-affect relation on the set of new abstract events. 
Hence, a new abstract (high-level) set of system executions are produced from a set of concrete (low-level) system executions. 
The proof is completed by showing that these high-level executions satisfy the specification. 
These high-level and low-level descriptions use the same mathematical language, 
allowing the abstraction process to be applied repeatedly.
Lamport's system executions framework cannot be easily adapted to weak memory consistency models. 
To more naturally capture weak memory consistency models, 
we typically use  more than one partial order in addition to agreement properties on these orders.

Lamport's system executions framework does not specify how these executions are generated.
Gischer \cite{DBLP:journals/tcs/Gischer88} and Pratt \cite{Pr91a} address this problem. 
The execution of an entire program is described as a collection of partially-ordered sets. 
These posets are generated by applying process algebra operations, 
such as sequential and parallel compositions, to smaller programs. 
Thus, a set of posets that represent all system executions can be recursively constructed. 
Then, simulation techniques are used to prove that the system executions match the specifications.

Many proofs of shared-memory algorithms assume Herlihy and Wing's strong consistency model, 
linearizability \cite{her90}.
A particularly useful and powerful property of linearizability is its \emph{locality}: 
proving separately that the implementation of each object in the system is linearizable 
implies  the correctness of the whole system implementation. 
Usually weak memory consistency models do not have such a locality property, complicating proofs of correctness.

Aspinal and Sevcik use a partial order formalism to represent the Java Memory Model (JMM) \cite{DBLP:conf/tphol/AspinallS07} in order to prove that data-race-free programs produce sequentially consistent executions on the Java Virtual Machine. 
The partial order constraints of JMM are used to produce a sequentially consistent total order.
Our partial order modeling shares similarity with theirs, but our low-level partial order constraints are used to produce computations that satisfy the partial order constraints of (the high-level) partition consistency, rather than a single total order.

%

\section{Definitions and Models}
\label{model.section}

An \emph{operation} consists of an \emph{operation invocation} and an \emph{operation response} often involving \emph{shared objects}. 
We use \emph{completed operation} to emphasize that an invocation is paired with a response.
A \emph{thread} generates a sequence of operation invocations in a sequence called \emph{program order}. 
A \emph{process} consists of a finite collection of threads.
A \emph{multiprogram} is a finite collection of processes. 

A \emph{computation} of the multiprogram is formed by \emph{arbitrarily} completing each operation invocation, 
in each individual thread sequence, with a response, creating a collection of sequences of completed operations.
Program order on operation invocations is naturally extended to 
define \emph{program order} on the set of completed operations of a computation. 
That is, a computation consists of a set of sequences of operations,
one sequence of operations for each thread of each process.
We denote this unrestricted set of computations of a multiprogram $P$ by \ComputationSet{P}. 
The subset of \ComputationSet{P} that could actually result from the execution of the multiprogram 
depends upon the distributed system's architecture. 
A \emph{memory consistency model} 
is a predicate defined on the set of all possible computations of a multiprogram; 
it filters these computations to include only those that could arise on the architecture being modeled. 
The subset of \ComputationSet{P} that satisfies the memory consistency predicate, \memCon{}, 
is denoted \FiltermemConVoid{P}.

We use the following notation, terminology and conventions for the rest of the paper. 
For a computation $C$ of a multiprogram $P$,
$O_C$  denotes all the operations of $C$.
A completed operation {\sc{oper}} with input $u$ that returns a value $v$ is denoted $\opReturn{}{oper}{u}{v}$.
For a set of operations $O$, $O|\cutwritesgroup{S}$ denotes the subset of all write operations to variables in $S$;
if $S$ is all the variables, we write $O|\cutwrites$;
$O|p $ denotes the subset of all operations by process $p \in P$. 
The \emph{program order relation on $O_C$}, denoted $\orderArrow{\programOrder_{C}}$, 
is the partial order formed by the union of the individual thread program orders\footnote{Since $p$ 
could be multithreaded, $(O_C|p, \orderArrow{\programOrder})$ is not necessarily a total order.}.
For all these notations we omit the subscript $C$ when it is obvious.
When we need the individual program order for a particular process or thread $p$, we write $\orderArrow{\programOrderp}$.
The style \Pred{pred}[args] is used to denote a predicate.
Given relations \orderArrow{R}, \orderArrow{T}, and a set $A$, define extension and agreement predicates on relations by:
\begin{itemize}
\item
$
\Pred{Extends}[ A , \orderArrow{R}  ,  \orderArrow{T}  ] \defequal  \forall a_1,a_2 \in A : a_1 \orderArrow{T} a_2 \implies a_1 \orderArrow{R} a_2 
$
\item
$
 \Pred{Agree}[ A , \orderArrow{R} , \orderArrow{T} ] \defequal \forall a_1,a_2 \in A : (a_1 \orderArrow{R} a_2) \Longleftrightarrow ( a_1 \orderArrow{T} a_2)
$.
\end{itemize}
Given a total order on a finite set, there is only one sequence of all the elements of the set that realizes that total order.
Therefore, we sometimes overload the term \emph{total order} for a finite set $A$: 
it refers to either the set of ordered pairs $(A, \orderArrow{T})$ in the order, 
or the sequence, which we denote by $T$, that realizes that total order. 
The notation  $\MapSet{ x_a : a\in A  }$
specifies a collection of items $x_a$, exactly one for each $a\in A$.

The most common shared objects for this paper are \emph{variables} with the sequential specification \cite{her90}:  
a sequence of \opreadVoid\ and \opwriteVoid\ operations on a variable $x$ is \emph{valid} if 
each $\opreadNoReturn{x}$ returns the value written by the most recent preceding 
$\opwrite{x,\cdot}$ in the sequence (or the initial value if no such \opwriteVoid\ exists). 
Other shared objects will be defined later as needed.
Any sequence of operations on a collection of objects is \emph{valid} if, for each object, 
the subsequence of operations on that object is valid.

The technical results of this paper concern three memory consistency models, called the partition consistency model,
the partial-order broadcast model, and the network model.   
We use these terms to describe the abstract machine that delivers the consistency guarantees,
but when we need to emphasize that these models are actually predicates on computations,
or when we need to denote them within other notation, 
we use the abbreviated predicate forms, \PredPCVoid, \PredPOBVoid, and \PredNWVoid\ respectively.

\paragraph{The partition consistency model} defines a class of abstract memory consistency models 
that is designed to capture processes that communicate by reading and writing shared variables.
It requires each process to ``see'' its own operations in addition to all other processes' writes in a valid total order. This order must extend program order. In addition, the views of all processes may be required to agree on the ordering of some specified subsets of operations.
%
More formally, 
let $K = \{ V_1, \ldots , V_m \}$ be a partition of a subset of the set $V$ of shared variables. 

\begin{definition}
\label{general-mcm-defn}
$
    \PredPC{K}{C}  \defequal  $ 
    $       \exists  \; \MapSet{  \text{valid total order} \; (O_C|p \cup  O_C|\cutwrites, \orderArrow{L_p}) : p\in \Processes  }  
$ satisfying \\
$(\forall p\in \Processes : \Pred{Extends}[ O_C|p \cup O_C|\cutwrites  , \orderArrow{L_p} ,  \orderArrow{\programOrder_{C}}  ] )  $
and $  ( \forall p,q \in P, i\in [1,m] :  \Pred{Agree}[ O_C|\cutwritesgroup{V_i} , \orderArrow{L_p} ,  \orderArrow{L_q} ] )$.
\end{definition}

Different instantiations of $K$ yield different memory consistency models including several well-known models. 
For example,  
Sequential Consistency requires that all processes agree on a single valid total order that extends program order. 
Thus, \Pred{SC} is \PredPC{\{V\}}{}. 
In the pipelined ramdom-access model every process ``sees'' all the writes of each other process in program order, 
but different processes can interleave these sequences differently, 
so there is no additional agreement beyond program order on the write operations. 
Thus, \Pred{P-RAM} is \PredPC{\emptyset}{}.
Goodman's Processor Consistency  requires that, in addition to \Pred{P-RAM}, for each shared variable,
processes agree on the order of all operations to that variable.
Thus \Pred{PC-G} is \PredPC{\Set{ \Set{v} ~|~  v \in V }}{}. 

A  variable is a \emph{single-writer variable} if it can be written by only one process,
otherwise it is a \emph{multi-writer variable}.
The multi-writer variable subset of $V$ is denoted $V|\cutmultiwriters$.
If $\Set{x} \in K$ and $x$ is a single-writer variable, then the \Pred{Agree} property for the set $\Set{x}$ 
holds automatically because write operations on $x$ are totally ordered by program order,
and program order is preserved by the \Pred{Extends} property.
Thus $\Set{x}$ can be removed from $K$ while maintaining \PredPC{K}{}.
Because implementations spend resources to maintain the consistency of each set in $K$, 
removing $\Set{x}$ from $K$ could reduce partition maintenance overhead in an implementation. 
This motivates two new natural instantiations of partition consistency, 
\begin{itemize}
\item
$\Pred{WeakPC-G} \defequal \PredPC{G}{}$ where the partition $G$ is given by 
$G = \{ \{v\} \ | \ v \in V|\cutmultiwriters \} $,  and 
\item
$\Pred{WeakSC} \defequal \PredPC{ \{V|\cutmultiwriters \} }{}$.
\end{itemize}
By the previous observation,
$\Pred{WeakPC-G}$ is equivalent to \Pred{PC-G};
however 
$\Pred{WeakSC}$ is strictly weaker than \Pred{SC} though still stronger than \Pred{PC-G}. 
Our preliminary investigation suggests that, for many programs,  $\Pred{WeakSC}$  is equivalent to \Pred{SC}. 
Yet, in our implementation, it can be substantially more efficient than \Pred{SC}.

\paragraph{The message-passing network model} captures a concrete reliable, message-passing asynchronous 
network of multi-threaded processes. 
Each process has a set of locally shared variables, which threads within that process use to 
 communicate with each other.
The accesses to locally shared variables are sequentially consistent\footnote{Weakening 
this assumption of ``local sequential consistency'' is possible. 
It only requires some  additional thread synchronization.   
Since this would add complication local to each process without otherwise changing the results of this paper, 
we do not include this option in the rest of this paper.}.
Threads of distinct processes communicate by sending and receiving messages, 
where messages from a sender to a receiver are received in the order sent.

This intuition of a network is formalized as follows. 
The shared objects are variables (shared between threads of the same process)
and messages (shared between different processes).  
Messages have distinct identifiers, and support the operations send 
\opsend{s,d,m} and the receive \oprecv{s,d,m}, 
where $s,d,m$ are the source, destination, and message contents respectively. 
A sequence of message operations is \emph{valid} if it contains at most one 
\opsendVoid{} and at most one \oprecvVoid{} of any message.  
We assume that for each local variable of a process, each write to that variable is distinct. 
(If this is not the case, the process can add sequence numbers to make them so.)
Define the following relations on the set $O_C$ of operations of a computation $C$:
\begin{description}
\item[Message causality:] 
(a message is received after it is sent) \\
$    x\orderArrow{\messageOrder_{C}} y\defequal  \ x,y\in  O_C \wedge x= \opsend{s,d,m} 
      \wedge y= \oprecv{s,d,m} $ 
\item[FIFO channel causality:] 
(two messages sent in program order to the same receiver are received in that order) \\
$    x\orderArrow{\fifoChannel_{C}} y \defequal \ x,y\in  O_C \wedge x=\oprecv{s,d,m} \wedge y= \oprecv{s,d,m'} 
     \wedge \opsend{s,d,m}\orderArrow{\programOrder} \opsend{s,d,m'}$
\item[Writes-into causality for variables:] 
(a value read from a variable must have been previously written to it)\\
$x \orderArrow{\writesInto_{C}}y\defequal \ x,y\in O_C \wedge x= \opwrite{w,z}\wedge y= \opread{w}{z}$
\item[Happens-before:] 
(operations happen in an order that observes the message, FIFO channel, and writes-into causalities) \\
$
\orderArrow{\happensBefore_{C}}\defequal 
(\orderArrow{\programOrder_{C}}\cup\orderArrow{\messageOrder_{C}}\cup\orderArrow{\fifoChannel_{C}}\cup\orderArrow{\writesInto_{C}})^{+}. \\
$
\end{description}
%
The definition \orderArrow{\happensBefore_{C}} is inspired by Lamport's happens-before \cite{lam78},
but that definition considers sequential processes that communicate only by message passing.
This definition adds shared memory communication between threads and is designed to incorporate weak consistency. 

\begin{definition}
$ \PredNW{C}\defequal
  \exists \; \MapSet{ \text{ valid total order } \; (O_C|p, \orderArrow{L_{p}})  : p\in \Processes }$ satisfying \\
   ($ \forall p\in P : \Pred{Extends}[O_C|p , \orderArrow{L_{p}} , \orderArrow{\happensBefore_C} ] $)
and $( \oprecv{s,d,m} \in O_C$ if and only if $\opsend{s,d,m} \in O_C ) $.
\end{definition}

This definition captures what we would expect of a reliable message-passing network that has 
FIFO channels between each pair of processors.
It requires that each process's view of its own operations is consistent with \orderArrow{\happensBefore_{C}} order. 
Thus, if two operations by threads of process $p$ are causally ordered,  
and even if the intermediate operations that cause that ordering are not visible to $p$, 
there must be a valid view of all $p$'s operations that does not conflict with that causal ordering. 
(Without this property, the model could, for example, allow a computation where 
process $p$ receives $m1$ from $q$, then does some local computation, then sends $m2$ to $q$,
and process $q$ receives $m2$ from $p$, then does some local computation, then sends $m1$ to $p$.
Since such a computation could not occur on a network, it should fail to satisfy the \PredNW{} memory consistency predicate.)
The last conjunct ensures that the received messages are exactly those that are sent. 

Any instance of the \PCName{} could be constructed directly on the message-passing network model. 
We obtain cleaner proofs and better abstraction, however, by introducing an intermediate level 
that isolates the fact that processes broadcast write updates and apply them locally
without the details of how broadcasting is managed.

\paragraph{The partial-order broadcast model} is designed to capture a collection of multithreaded processes, 
where threads within each process communicate through shared variables,
and updates are communicated between distinct processes 
using a one-to-all \opbcastVoid{} and a corresponding \opdeliverVoid{}.
Every process delivers updates in an order that extends the program order of the corresponding broadcasts.
Furthermore, updates can be labeled. 
Processes agree on the delivery order of all updates with the same label; 
such agreement is not required for differently labeled updates. 

The formal definition of this model uses 
variables (shared between threads of the same process) and
update objects (shared between different processes).  
Each update object is unique and supports the operation
\opbcast{u,l} (broadcast update $u$ with label $l$ to all) 
and the operation \opdeliver{u,l} (deliver the update $u$).
For unlabeled updates, the label has the null value, denoted $\bot$.
The delivery order of unlabeled updates is not constrained beyond program order.
A sequence of \opbcastVoid\ and \opdeliverVoid\ operations is \emph{valid} if 
1) no \opdeliverVoid\ precedes its corresponding \opbcastVoid\ 
(This restriction does not require the corresponding \opbcastVoid{} to be in the valid order), and,  
2) no specific \opdeliverVoid{} occurs more than once.
Define the deliver relation (a partial order) on the set $O_C$ of operations of a computation $C$ of the partial-order broadcast 
model by:\\
$x \orderArrow{\delorder_C} y \defequal x,y\in O_C ~ \wedge ~ x=\opdeliver{u_1 , l_1} ~ \wedge ~ y=\opdeliver{u_2 , l_2} ~ \wedge ~ \opbcast{u_1 , l_1} 
\orderArrow{\programOrder_{C}} \opbcast{u_2 , l_2} $.\\
Let $O_{C}|\cutdeliverslabel{l}$ denote the set of all \opdeliverVoid\ operations returning an update with label $l \neq \bot$.

\begin{definition}
\label{general-pob-defn}
$      \PredPOB{L}{C}  \defequal $
$         \exists \; \MapSet{ \text{valid total order} \; (O_C|p,\orderArrow{ { L_{{p}} } })   : {p}\in {P} }$  satisfying \\
$ (\forall p \in \Processes : \Pred{Extends}[ O_C|p  ,  \orderArrow{L_{p}} ,  \orderArrow{\programOrder_{p}} \cup \orderArrow{\delorder_{C}} ]) $
and $ (\forall p,q\in\Processes ,l \in L : \Pred{Agree}[ O_C| \cutdeliverslabel{l} , \orderArrow{L_p} ,  \orderArrow{L_q} ])$
and
$\forall p ( \opbcast{m, l} \in O_C$ if and only if $ \opdeliver{m, l} \in O_C|p )$.
\end{definition}

This definition captures what we described as the intermediate partial-order broadcast model. 
It requires that each process's view of its own operations is a valid sequence that extends 
the program order of its own threads and
delivers message according to the program order of the corresponding broadcasts.  
It also requires agreement between process's views of delivers of updates with the same label. 
The last conjunct ensures that every process delivers exactly the updates that were broadcast.


\section{Setup for Transformations and Proofs} 
\label{setup.section}

\subsection{Transformations setup}

We implement any partition consistency model $S$ on a \NWName\ model $N$  indirectly. 
We first implement $S$ on a \POBName\ model $T$ and then implement $T$ on $N$.
For each of these two steps we provide two different implementations, 
and prove each one is correct.
All four of the resulting proofs have similar structure and notation,
which is described in this section.

All our implementations transform code for a \emph{specified} model 
to code for a \emph{target} model. 
For clarity, {\sc small caps} font is used to denote specification level operations; 
{\tt Teletype} is used to denote target level operations.
To emphasize that a component belongs to the target level its name is sometimes
annotated with a ``hat'' as in $\widehat{\text{name}}$. 

\subsection{Proofs setup and structure}

Our transformations convert some specified multiprogram into a target multiprogram. 
To prove correctness, 
we must show the possible computations of these two multiprograms 
that can arise from their respective memory consistency models, have the same ``outcome''.  
We make this precise as follows.
Let \transform{\Processes} denote a transformation of multiprogram \Processes.
The possible computations of multiprogram \Processes\ (respectively, \transform{\Processes})
on the specified (respectively, target) memory consistent model \memCon\ (respectively, \MemCon) 
is the set \FiltermemConVoid{{\Processes}} 
(respectively, \FilterMemConVoid{{\transform{\Processes}}}).
But \transform{\Processes} transforms specified operation invocations that require a response 
into subroutines that return a response.  
So these returned responses can be used to \emph{interpret} each computation in
\FilterMemConVoid{{\transform{\Processes}}} as a computation of
\Processes.
We need to show that each such interpreted computation could have arisen in the specified model.
That is, we must show that 
the interpretation of any computation in \FilterMemConVoid{{\transform{\Processes}}} is in 
\FiltermemConVoid{\Processes}.
If this is satisfied for any \Processes, we say that \transform{ \cdot } \emph{correctly implements}
\memCon{} on \MemCon{}.
Figure \ref{correctnessPlan.figure}  depicts this proof obligation.

\begin{figure}[!h]

\begin{xy}
\xymatrix @C 1.0in {
{P} \ar[d]_{\tau} \ar@{-->}[r]_{\text{generates}} &   \FilterMemCon{\memCon{}}{P} &  \ar@{--}[l] _(0.2){\textnormal{show } \supseteq} I \\
  {\tau{}(P)}  \ar@{-->}[r]_{\text{generates}} &   \FilterMemCon{\MemCon{}}{\transform{P}} \ar @{-->}[ru]_{\textnormal{interpret}}&    
}
\end{xy}

\caption{Proof obligation for establishing correctness of transformations}
\label{correctnessPlan.figure}
\end{figure}
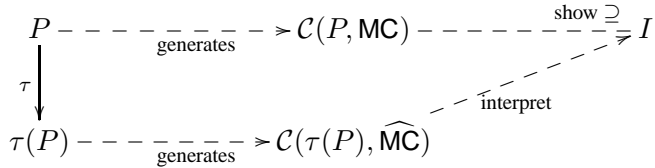

Each of our memory consistency models in Section \ref{model.section} is defined by requiring that for every computation
there is a collection of sequences of its operations such that 

\begin{itemize}
\item each sequence is valid,
\item each sequence extends some partial orders, 
and 
\item the set of sequences together satisfy some agreement constraints.
\end{itemize}

So we show that a computation $C$ satisfies a memory consistency model \memCon\
by constructing such a collection of sequences that jointly satisfy the \memCon's constraints.
Any such collection is said to \emph{witness} that $C$ satisfies \memCon, 
and is informally referred to as a collection of \emph{witness sequences}.
More formally, we use the 
predicate \WitOrd{A}{C}{\memCon} to assert that the collection of sequences $A$ witnesses that $C$ satisfies \memCon.

Let \Processes\ be a specified program and \transform{P} be a transformation of that program.
All our proofs have the following structure:\\
{\bf{}Assume:}
 $\widehat{C} \in  \FilterMemConVoid{{\transform{\Processes}}}$.
 Let $C \in \ComputationSet{P}$ be the interpretation of $\widehat{C}$. \\
{\bf{}Build:}
Choose any collection of sequences $\widehat{A}$ such that 
\WitOrd{\widehat{A}}{\widehat{C}}{\MemCon}.
Use $\widehat{A}$ to construct a corresponding collection of sequences $A$ for the operations in $C$. \\
{\bf{}Verify:}
Show that \WitOrd{A}{C}{\memCon}.

In this paper, we consider only \emph{finite} computations of the specified system that are completed in the target system.
For long-lived computations, we would need to consider computations that arise when a multiprogram is part way through its execution 
and extensions of such computations as the multiprogram continues to execute. 
Furthermore, the transformed multiprogram could be in a state where some processes are part way through executing the transformation of their
current operation invocation.  
Such an operation invocation is \emph{incomplete}.  
For example, a process could have sent some but not all messages required in the transformation of its current operation invocation, 
and messages sent by a process could be received by some recipients but not by others.
The problem of incomplete operations is taken care of by generalizing the technique used to show that a 
computation is linearizable as introduced by Herlihy and Wing \cite{her90}.
That is, in the {\bf{Assume}} step, 
we are allowed to adjust the computation $\widehat{C}$ so that every operation in the adjusted computation is 
complete before  proceeding with the {\bf {Build}} step that extracts the sequences $\widehat{A}$, 
and uses them to construct the witness sequences $A$.
This is done as follows. 
For every incomplete operation in $\widehat{C}$,  
either all its steps are erased or remaining steps are added so that it is complete. 
Such adding or erasing of steps could also change the operations of other processes 
since they may be receiving and acting on messages sent by operations that were incomplete.
So the steps of these operations are also either erased or completed.   
To take care of the problem that the multiprogram is long-lived, 
we must also show that the witness sequences constructed for a computation, say $C$, of the system 
are not messed up by the witness sequences that are constructed for an extension of that computation, say $C'$,
as the system continues to execute.
This is done by showing that the collection $A$ of sequences that are constructed to establish \WitOrd{A}{C}{\memCon}, 
are each prefixes of the corresponding collection $A'$ of sequences that are constructed to establish \WitOrd{A'}{C'}{\memCon}. 
For the proofs in this paper,  these two tasks are straightforward, but add considerable notational overhead.  
We leave it to the reader to observe that the results hold for long-lived computations but 
consider only finite computations with only completed transformations in this paper.

\subsubsection*{Proof diagrams}
\label{proof-diagrams}

When designing and debugging our proofs, 
we frequently used diagrams to record partial orders and various relationships between them.
Because these diagrams could be formalized and used to help make our proofs more precise 
and concise, we adopt this diagrammatic notation here. 
The notation use in these diagrams is as follows.
Let $a,b,c$ and $d$ be operations; 
and  $A$ and $B$ be set of operations. 
Edges in a diagram represent boolean expressions and a
diagram is interpreted as the conjunction of these expressions.
The basic building blocks are:

{\footnotesize
\noindent
\begin{tabular}{|l|l||l|l|}
\hline
diagram symbol &asserts & diagram symbol &asserts  \\ \hline
  \xymatrix{
a \ar@{=}[r] & b
} & $a=b$ 
&
 $
\xymatrix{
A \ar@{-(}[r] & B
}
$ & $A\subseteq B$\\
 $
\xymatrix{
  a \ar[r]^{L} & b
}
$ & $a\orderArrow{L}b$
&
 $
\xymatrix{
A \ar[r]^{L} & B
}
$ & 
  $\forall a\in A, b \in B : a\orderArrow{L}b$ \\
$\xymatrix{A \ar@{|-->}[r] & B}$ 
& $\exists b\in B : \forall a\in A: a\orderArrow{L}b$
& $\xymatrix{A \ar@{-->|}[r] & B}$ 
& $\exists a\in A : \forall b\in B: a\orderArrow{L}b$
\\
$\xymatrix{
a \ar@{--}[r]^R & b 
}$ & $a \simArrow{R} b$: where $R$ is relation 
&
\begin{xy}\xymatrix@R=5pt{
a \ar[rr]^{L}& \ar@{=>}[d] & b \\
c \ar[rr]_{M}& & d
}
\end{xy} & $(a \orderArrow{L} b) \implies (c\orderArrow{M} d)$ \\
\hline
\end{tabular}
}

\vspace*{-3pt}
Multiple edges are disjunctive; eg. $\xymatrix{a \ar@/_/[r]^{{\scriptscriptstyle D}} \ar@/^/[r]^{{\scriptscriptstyle E}} & b}$ 
asserts ($a\orderArrow{D} b$ or $a\orderArrow{E} b$).
%
%
The position of an arrow and its label are not significant; eg. 
$
\xymatrix{
a \ar[r]_L & b
}
$ and
 $
\xymatrix{
b & a \ar[l]_L
}
$ are equivalent.
Notice that for sets $A$ and $B$, the notation is similar to Lamport's system executions \cite{lam86b}.


\section{Implementing Partition Consistency on the Partial-Order Broadcast Model} 
\label{implement.section}

Partition consistency models processes that interact by reading and writing globally shared variables 
that have only weak consistency guarantees.
Recall that an instance of partition consistency has a partition $K$ of some of the variables and requires strong agreement within 
but not between sets of $K$.
Our task is to transform each specified process $p$ of such a system into a target process $\widehat{p}$ 
for the partial-order broadcast model, 
where inter-process communication is via the partial order broadcast primitive. 
Our implementation 
is a generalization of the way that totally ordered broadcast is used to implement sequential consistency \cite{attiya2004dcf}.
The predicate \PredPC{K}{} for partition consistency ensures 
agreement on the ordering of writes to all variables within the same class of the partition $K$;
the consistency predicate \PredPOB{L(K)}{} satisfied by the partial-order broadcast model is used to implement this agreement 
by enforcing agreement on the deliveries of updates with the same label. 

We achieve our implementation by:
\begin{itemize}
\item creating a label for each class in partition $K$; that is, $L(K) = \{i : V_i \in K\}$; 
\item mapping each $p$  to a thread $\targ{p}.m$ in the partial-order broadcast model; and 
\item augmenting each $\targ{p}.m$ with a companion delivery thread $\targ{p}.d$. 
\end{itemize}
There are two variants of the transformation.  
The pseudo-code for the slow-write/fast-read variant (SWFR) is shown in Figure \ref{PC-POB-impl-fig}.

\begin{figure}[h]
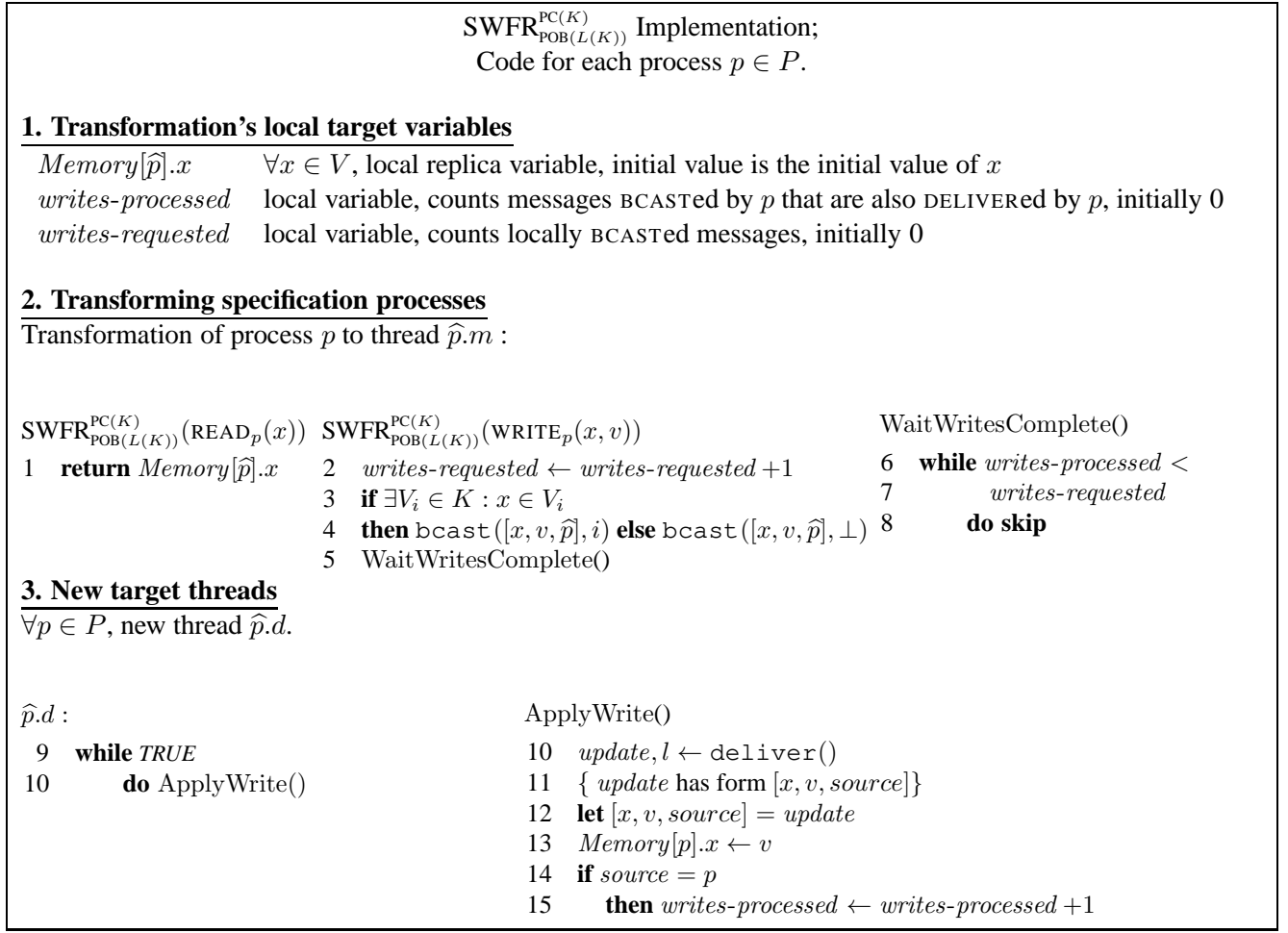


\medskip
\noindent
\fbox{
\begin{minipage}{6.75in}
\begin{center}
\swimpl{K}{L(K)}{} Implementation;\\
Code for each process $p \in P$.
\end{center}

\noindent
\textbf{\underline{1. Transformation's local target variables}}\\
\begin{tabular}{ll}
 $\id{Memory}[\widehat{p}].x$ &  $\forall x\in V$, local replica variable, initial value is the initial value of $x$ \\
$\id{writes-processed}$   &  local variable, counts messages \opbcastVoid{}ed by $p$ that are also \opdeliverVoid{}ed by $p$, initially 0 \\
$\id{writes-requested}$   &  local variable, counts locally \opbcastVoid{}ed messages, initially 0 \\ \\
\end{tabular}

\noindent
\textbf{\underline{2. Transforming specification processes}}\\ 
Transformation of process $p$ to  thread $\widehat{p}.m$ :

\noindent
{\small 
\begin{minipage}[t]{1.60in}
\begin{codebox}
\Procname{\swimpl{K}{L(K)}{\opreadpNoReturn{x}{p}}}
\li \Return $\id{Memory}[\widehat{p}].x$ \label{li:topmiddlereadend}
\end{codebox}
\end{minipage}
\begin{minipage}[t]{3in}
\begin{codebox}
\setlinenumberplus{li:topmiddlereadend}{1}
\Procname{\swimpl{K}{L(K)}{\opwritep{x,v}{p}}}
\li $\id{writes-requested} \gets \id{writes-requested}+1$
\li \If{$\exists V_i \in K : x \in V_i$}
\li \kw{then} $\opBcast{[x,v,\widehat{p}], i}$ \kw{else} $\opBcast{[x,v,\widehat{p}], \bot}$
\li \func{WaitWritesComplete}() \label{li:topmiddlewriteend}
\end{codebox}
\end{minipage}
\begin{minipage}[t]{2in}
\begin{codebox}
\setlinenumberplus{li:topmiddlewriteend}{1}
\Procname{\func{WaitWritesComplete}()}
\li \While{$\id{writes-processed} <$} 
\li \hspace*{25pt} $\id{writes-requested}$ 
\li \Do \kw{skip} \label{li:topmiddlewaitwritescompleteend}
\End
\end{codebox}
\end{minipage}
}

\noindent
\textbf{\underline{3. New target threads}}\\
$\forall p \in P$, new thread $\widehat{p}.d$.

\noindent
{\small 
\begin{minipage}[t]{0.4\textwidth}
\begin{codebox}
\setlinenumberplus{li:topmiddlewaitwritescompleteend}{1}
\Procname{$\widehat{p}.d :$}
\li     \While  {$\myconst{TRUE}$} 
\li \Do $\func{ApplyWrite}()$ \label{li:topmiddledeliverthreadend}
\End
\end{codebox}
\end{minipage}
\begin{minipage}[t]{0.4\textwidth}
\begin{codebox}
\setlinenumberplus{li:topmiddledeliverthreadend}{1}
\Procname{\func{ApplyWrite}()}
\li $\id{update}, l \gets \opDeliverNoReturn{}$
\li  $ \{ \textrm{ \id{update} has form } [x,v,source]  \}$
\li \kw{let} $[x,v,source] = \id{update}$
\li \Indent $\id{Memory}[p].x\gets v$
\li \If{$\id{source} = p $}
\li \Then $\id{writes-processed} \gets \id{writes-processed} + 1$
\End
\End
\end{codebox}
\end{minipage}
}

\end{minipage}
}
\medskip

\caption{Implementation of Partition Consistency on the Partial-Order Broadcast Model}
\label{PC-POB-impl-fig}
\end{figure}

The main thread, $\widehat{p}.m$, is derived from $p$ by replacing each \opreadVoid\ 
and each \opwriteVoid\ to a shared variable with a subroutine call. 
The transformation of a \opreadVoid\ simply returns the value stored in $\widehat{p}$'s local memory. 
The transformation of a \opwriteVoid\ creates a \opBcastVoid{} operation to be \opDeliverVoid{}ed to each target process.
It has a label corresponding to the partition class of the variable being written, if it exists.
The delivery thread, $\widehat{p}.d$, manages the \opDeliverVoid{} operations 
and maintains synchronization with $\widehat{p}.m$ via locally shared variables.
It repeatedly applies updates to the local memory it shares with $\widehat{p}.m$.
The procedure \func{WaitWritesComplete} causes $\widehat{p}.m$ to wait 
until $\widehat{p}.d$ has applied all the \opwriteVoid{}s previously broadcast by $\widehat{p}.m$. 

Under the $\swimpl{K}{L(K)}{}$ transformation,
each process has at most one outstanding local write,
since every write must be applied locally before the subroutine completes.
Every write contains a wait, making these writes ``slow''. 
An alternative is to move the \func{WaitWritesComplete} 
call from the end of the \opwriteVoid{} to the beginning of the \opreadVoid{}. 
This gives us a fast-write/slow-read (\fwimpl{K}{L(K)}{}) 
implementation.

\subsection{Correctness of the \swimpl{K}{L(K)}{} and \fwimpl{K}{L(K)}{} implementations}

The proofs of correctness of \swimpl{}{}{} and \fwimpl{}{}{} are very similar; they differ in only one step.
This step can be treated generically, so we present one proof for both implementations.
Let $\wimpl{K}{L(K)}{}$ refer to either implementation.

\begin{theorem}
\label{wrtrans-correctness-theorem}
Let \Processes\ be any multiprogram where each process in \Processes\ is a single-threaded program that accesses read/write variables
from a set $V$, and let $K =\{V_1, \ldots , V_k\}$ be any partition of a subset of $V$.
Then \wimpl{K}{L(K)}{\Processes} correctly implements \PredPC{K}{} on \PredPOB{L(K)}{}, for all such \Processes.
\end{theorem}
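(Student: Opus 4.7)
The plan is to instantiate the Assume/Build/Verify template of Section~\ref{setup.section}.
For \textbf{Assume}, take any $\widehat{C} \in \FilterMemCon{\PredPOB{L(K)}{}}{\wimpl{K}{L(K)}{\Processes}}$, let $\widehat{L}_{\widehat{p}}$ (one per target process $\widehat{p}$) be the witness total orders guaranteed by $\PredPOB{L(K)}{\widehat{C}}$, and interpret $\widehat{C}$ as the computation $C$ of $\Processes$ in which every specified-level read returns the value produced by its transformed subroutine.
For \textbf{Build}, for each $p \in \Processes$ construct $L_p$ from $\widehat{L}_{\widehat{p}}$ by (i) replacing each local memory update $\id{Memory}[\widehat{p}].x \gets v$ in $\widehat{p}.d$ (performed by \func{ApplyWrite} after delivering $[x,v,\text{source}]$) with the specified-level \opwrite{x,v} by process $\text{source}$; (ii) replacing each local memory read of $\id{Memory}[\widehat{p}].x$ returning $v$ (performed inside a transformed read subroutine) with $\opread{x}{v}$; and (iii) deleting every other target operation (broadcasts, the deliveries themselves, accesses to the counters $\id{writes-processed}$ and $\id{writes-requested}$, and wait-loop iterations). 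The key modelling choice is that a write by $p$ appears in $L_p$ at the moment of its local \emph{application} in $\widehat{p}.d$, not at its broadcast in $\widehat{p}.m$.

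For \textbf{Verify}, I show that $\{L_p : p \in \Processes\}$ witnesses $\PredPC{K}{C}$: each $L_p$ is a valid total order on $O_C|p \cup O_C|\cutwrites$, each extends $\orderArrow{\programOrder_C}$, and agreement on each partition class $V_i$ holds. Validity drops out of the construction: because local memory in $\widehat{p}$ is accessed sequentially consistently between $\widehat{p}.m$ and $\widehat{p}.d$, a local read of $\id{Memory}[\widehat{p}].x$ returning $v$ in $\widehat{L}_{\widehat{p}}$ is preceded there by a memory write of $v$ to the same location, which under the replacement becomes the nearest preceding specified-level write to $x$ in $L_p$. The last conjunct of $\PredPOBVoid$ ensures every broadcast is delivered and applied at every process, so $L_p$ contains precisely $O_C|p \cup O_C|\cutwrites$. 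Partition-class agreement is direct: every write to a variable in $V_i$ is broadcast with label $i$, so $\PredPOBVoid$'s same-label delivery agreement transfers (via the in-order memory writes in each $\widehat{q}.d$) to agreement on the specified-level writes in $\{L_p\}$.

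The main obstacle is program-order extension, specifically the write-followed-by-read case: if $w$ precedes $r$ in $p$'s program order then, because $w$'s position in $L_p$ is its local application in $\widehat{p}.d$ rather than its broadcast in $\widehat{p}.m$, I must show that $w$'s memory-write step precedes $r$'s memory-read step in $\widehat{L}_{\widehat{p}}$. This is the single step where the SWFR and FWSR variants differ, and I plan to isolate it in a small generic lemma that exploits the fact that $\widehat{p}.d$ increments $\id{writes-processed}$ only \emph{after} the memory write inside \func{ApplyWrite}. In SWFR, $w$'s subroutine blocks in \func{WaitWritesComplete} until the increment; because the counters are shared sequentially consistently between the two threads of $\widehat{p}$, $\widehat{L}_{\widehat{p}}$ must place $w$'s memory write before the wait returns, and hence before $r$. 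FWSR is symmetric, with the wait moved to the start of $r$. The remaining program-order cases (write--write, read--write, read--read) follow more easily from the program order of $\widehat{p}.m$, the $\orderArrow{\delorder}$ ordering of deliveries, and the in-order processing by $\widehat{p}.d$.
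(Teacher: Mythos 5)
Your proposal matches the paper's proof in all essentials: the same construction that places each specified-level write in $L_p$ at the point of its local application by the delivery thread $\widehat{p}.d$, the same decomposition of Verify into validity, program-order extension, and same-label agreement, and the same identification of the write-then-read case (bridged by \func{WaitWritesComplete} and the monotone \id{writes-processed} counter) as the single step distinguishing the slow-write and fast-write variants, handled generically. The plan is sound and takes essentially the same route as the paper.
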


\begin{proof} \hspace{1cm}
\paragraph{Assume:}
Let $\widehat{C}$ be a computation in $\ComputationSet{\wimpl{K}{L(K)}{P}, \PredPOB{L(K)}{}}$ and let $C$ be the interpretation of $\widehat{C}$.
Let $\widehat{O}$ denote the set of operations $O_{\widehat{C}}$.
To show $\PredPC{K}{C}$ we construct witness sequences that satisfy the requirements of Definition \ref{general-mcm-defn}.

\paragraph{Build:}
Choose a collection of witness sequences $\MapSet{(\widehat{O}|\widehat{p}, \relLinearp{} ): \widehat{p} \in \wimpl{K}{L(K)}{P} }$.
That is, \\
\WitOrd{ \{\seqLinearp\ : \forall \widehat{p} \in  \widehat{P} \} }{\ComputationC}{\PredPOB{L(K)}{}}.

Construct a \emph{corresponding} set of sequences $\MapSet{(O_{C}|p\cup O_{C}|\cutwrites,\orderArrow{L_p}) : p\in P}$  as follows:
\begin{enumerate}
\item For each \opReadVoid\ or \opWriteVoid\ operation $o$ on a ``local replica'' variable we associate it with a specification level operation as follows: 
  \begin{enumerate}
  \item 
    The transformation sets up a one-to-one correspondence between the set of \opReadVoid\ operations of the \PredPOB{L(K)}{} system, and 
the \opreadVoid\ operations of the \PredPC{K}{} system.  
Specifically,  $\opRead{\id{Memory}[p].x}{v}$ in the implementation 
must have come from the transformation of a unique corresponding specification level $\opread{x}{v}\in O_C|p$.  
  \item
The transformation sets up, for each $\widehat{p}$, 
a one-to-one correspondence between the set of \opWriteVoid\ operations in $\widehat{O}|\widehat{p}$ of the \PredPOB{L(K)}{} system, and 
the \opwriteVoid\ operations of the \PredPC{K}{} system.
    More precisely, every $\opWrite{\id{Memory}[p].x,v}$ must have a $o_d = \opdeliver{[\const{write},x,v]}$ in the same $\func{ApplyWrite}()$ call.
    This deliver operation $o_d$ must have a corresponding $\opbcast{m}$,
    which can only have occurred in the transformation of some unique corresponding specification level write 
    $\wimpl{K}{L(K)}{\opwrite{x,v}}$. 

  \end{enumerate}
\item For each sequence \seqLinearp, 
build the sequence $Short(\seqLinearp)$ 
by removing all of operations that are applied to the broadcast object, 
and the local variable \id{writes-processed} and the local variable \id{writes-requested}.  
This leaves only the \opReadVoid\ and \opWriteVoid\ operations on the ``local replica'' variables in local memory.

\item  For each \seqLinearp, build a sequence $\seqlinearp$ by 
replacing each \opReadVoid\ (respectively, \opWriteVoid) operation in $Short(\seqLinearp)$ 
with the associated specification level \opreadVoid\ (respectively, \opwriteVoid) operation defined in step 1. 

\item
Notice that each sequence \seqlinearp\ contains exactly the operations in  $O_C|p\cup O_C|\cutwrites$. 
These sequences  induce the corresponding total orders  
$\MapSet{(O_C|p\cup O_C|\cutwrites, \orderArrow{L_p}) : p\in P }$.
\end{enumerate}

\paragraph{Verify:} 
We need to prove that \WitOrd{\{\seqlinearp\ : \forall p \in  \Processes\}}{\computationC}{\PredPC{K}{}}
holds for the orders \\
$\MapSet{(O_{C}|p\cup O_{C}|\cutwrites,\orderArrow{L_p}) : p\in P}$
constructed in the Build step.

First observe that removing from any valid sequence all operations related to specific objects preserves validity. 
Hence each $\func{Short}(\seqLinearp)$ is valid since \seqLinearp\ is valid.
Then replacing each \opReadVoid\ and \opWriteVoid\  with the corresponding \opreadVoid\ and \opwriteVoid\ respectively 
also clearly preserves validity.
We conclude that the  total orders $(O|p\cup O|\cutwrites, \orderArrow{L_p})$ are valid.
The following two lemmas establish the remainder of the proof:
 
\begin{tabular}{|l | r|}
\hline
Constraint & Lemma\\
\hline
$ \Pred{Extends}[O_C|p \cup O_C|\cutwrites, \orderArrow{L_p}, \orderArrow{\programOrder_C}]$ & Lemma \ref{top-middle-prog-extend} \\
$ \forall p,q \in P, i\in [1,k] :  \Pred{Agree}[O_C|\cutwritesgroup{V_i}, \orderArrow{L_p}, \orderArrow{L_q}] $ &  Lemma \ref{top-middle-group-ordering} \\
\hline
\end{tabular}\medskip
 
Therefore, $\PredPC{K}{C}$ as required.
\end{proof}

\begin{lemma}
\label{top-middle-prog-extend}
  $\forall p\in P :\Pred{Extends}[O|p \cup O|\cutwrites, \orderArrow{L_p}, \orderArrow{\programOrder_C}]$.
\end{lemma}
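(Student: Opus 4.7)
The plan is to fix $o_1, o_2 \in O_C|p \cup O_C|\cutwrites$ with $o_1 \orderArrow{\programOrder_C} o_2$ and show $o_1 \orderArrow{L_p} o_2$. Since $\orderArrow{\programOrder_C}$ only relates operations by the same spec-level process $q$, I split by whether $q = p$ or $q \neq p$. In either case the Build step associates to each $o_i$ a specific representative target-level operation in $\widehat{O}$ --- either an \opReadVoid{} in $\widehat{p}.m$ or an \opWriteVoid{} performed by $\widehat{p}.d$ inside an \func{ApplyWrite} call --- and the task reduces to showing these representatives appear in the correct relative order in $\seqLinearp$. The Short-and-substitute construction only deletes unrelated operations and renames those that survive, so any ordering established in $\seqLinearp$ lifts unchanged to $\seqlinearp$ and hence to $\orderArrow{L_p}$.

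For $q = p$, I would handle the four read/write-read/write subcases, relying only on the fact that $\seqLinearp$ is a valid total order extending $\orderArrow{\programOrder_{\widehat{p}}} \cup \orderArrow{\delorder_{\widehat{C}}}$. Read-then-read is immediate from $\widehat{p}.m$'s program order. Read-then-write follows because the \opReadVoid{} sits in $\widehat{p}.m$ before the \opBcastVoid{} that seeds the write; validity of $\seqLinearp$ forces the matching local \opDeliverVoid{} after its bcast (both lie in $\widehat{O}|\widehat{p}$), and $\widehat{p}.d$'s program order then places the representative \opWriteVoid{} after that deliver. Write-then-write uses that the two bcasts by $\widehat{p}.m$ lie in $\widehat{p}.m$'s program order, so the matching local deliveries are related by $\orderArrow{\delorder_{\widehat{C}}}$, and $\widehat{p}.d$'s program order then propagates this to the two representative \opWriteVoid{}s. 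The write-then-read subcase is the subtle one: \func{WaitWritesComplete} forces $\widehat{p}.m$ to read \textit{writes-processed} until it catches up to \textit{writes-requested}, so I plan to use validity of $\seqLinearp$ on this shared local counter together with $\widehat{p}.d$'s program order to argue that the exiting read must follow the increment of \textit{writes-processed} that sits right after the representative \opWriteVoid{} in \func{ApplyWrite}, and then use $\widehat{p}.m$'s program order to push the subsequent \opReadVoid{} on the local replica past that point. The FWSR variant is symmetric, with \func{WaitWritesComplete} moved to the start of the read subroutine; the same chain works with mirrored roles, and this is the single step where the two variants diverge.

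For $q \neq p$, both $o_1$ and $o_2$ must be writes, so each representative is an \opWriteVoid{} performed by $\widehat{p}.d$ inside an \func{ApplyWrite} call. The two source bcasts lie in $\widehat{q}.m$'s program order, so the matching local deliveries at $\widehat{p}$ are related by $\orderArrow{\delorder_{\widehat{C}}}$, and the write-then-write argument from the $q = p$ case applies verbatim. The main obstacle I anticipate is the write-then-read subcase: translating the spin on a shared local counter into a concrete ordering in $\seqLinearp$ requires chaining validity on that local variable with program-order extensions on both $\widehat{p}.m$ and $\widehat{p}.d$, and I expect the proof diagrams introduced at the end of Section~\ref{setup.section} to be essential for keeping this chain legible.
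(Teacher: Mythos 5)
Your proposal is correct and follows essentially the same route as the paper: the same four read/write case analysis, with read--write handled by validity of the update object (no deliver before its bcast), write--write pushed through $\orderArrow{\delorder}$ and the delivery thread's program order (which also covers the $q \neq p$ situation, exactly as in the paper's Case 3), and write--read resolved via the \textit{writes-processed}/\textit{writes-requested} handshake in \func{WaitWritesComplete}, which is also the single point where the paper unifies the SWFR and FWSR variants. The only difference is cosmetic: you branch first on $q = p$ versus $q \neq p$, while the paper folds the $q \neq p$ situation directly into its write--write case.
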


\begin{proof}
  Let $o_1 \orderArrow{\programOrder} o_2$ where $o_1, o_2 \in O|p \cup O|\cutwrites$. 
\PredPC{K}{} only has read/write variable objects, so there are four cases for $o_1,o_2$. 
Notice that if $o_1$ or $o_2$ is a \opreadVoid\ then $o_1, o_2 \in O|p$.

\subheader{Case 1: read, read}\\
Let $o_1=\opread{x}{v}$ and $o_2=\opread{y}{w}$.

\noindent
$\wimpl{K}{L(K)}{\opreadNoReturn{x}} \orderArrow{\ProgramOrder_p} \wimpl{K}{L(K)}{\opreadNoReturn{y}}
\implies \opRead{x}{v}$ \orderArrow{\ProgramOrder_p} $\opRead{y}{w}
\implies \\
\opRead{x}{v}$ \relLinearp{} $\opRead{y}{w} \implies o_1  \rellinearp{} o_2$.

\subheader{Case 2: read, write}\\
Let $o_1=\opread{x}{v}$ and $o_2=\opwrite{y,w}$. We have:

\noindent
 \begin{xy}
\xymatrix @C 0.25in { 
\opread{x}{v} \ar@{--}[d]_{\wimpl{K}{L(K)}{}} \ar[rr]^{\programOrder_p} & & \opwrite{y,w} \ar@{--}[d]_{\wimpl{K}{L(K)}{}} \\ 
\wimpl{K}{L(K)}{o_1} \ar@{)-}[d] \ar[rr]^{\ProgramOrder_p} & &  \wimpl{K}{L(K)}{o_2} \ar@{)-}[d]& & \\
\opRead{x}{v} \ar[rr]^{\ProgramOrder_p} & \ar@{=>}[d]  & \opBcast{o_2}  \ar[rr]_{}^{\seqLinearp 1.} & \ar@{=>}[d] & \opDeliver{o_2}  \ar[rr]^(0.4){\ProgramOrder_p} & \ar@{=>}[d] & \opWrite{Memory[p].y,w}  \\
\opRead{x}{v} \ar[rr]_{\seqLinearp} \ar@{--}[d]_{\id{corresponds}} & & \opBcast{o_2}  \ar[rr]_{\seqLinearp} & & \opDeliver{o_2}  \ar[rr]_(0.4){\seqLinearp} & & \opWrite{Memory[p].y,w} \ar@{--}[d]_{\id{corresponds}} \\
\opread{x}{v} \ar[rrrrrr]_{\seqlinearp} & & & & & & \opwrite{y,w}
}
\end{xy}
 \begin{enumerate}
   \item By definition of validity of update objects (since \opBcastVoid\ and \opDeliverVoid\ are both in $O|p$).
 \end{enumerate}

\subheader{Case 3: write, write}\\
Let $o_1=\opwrite{x,v}$ and $o_2=\opwrite{y,w}$. Suppose $o_1, o_2 \in O|q$. Then:

\[
\xymatrix{
\opwrite{x, v} \ar[rr]^{\programOrder_q} \ar@{--}[d]_{\wimpl{K}{L(K)}{}}& \ar@{=>}[d]  & \opwrite{y, w} \ar@{--}[d]_{\wimpl{K}{L(K)}{}} & \\
\wimpl{K}{L(K)}{o_1}  \ar@{)-}[d] \ar[rr]^(0.4){\ProgramOrder_q} &  \ar@{=>}[d] & \wimpl{K}{L(K)}{o_2} \ar@{)-}[d]& \\
\opBcast{[x, v, \widehat{q}],\_}  \ar[rr]^(0.4){\ProgramOrder_q} & \ar@{=>}[d]  & \opBcast{[y, w, \widehat{q}], \_} & \\
\opDeliver{[x, v, \widehat{q}], \_}    \ar[rr]^(0.35){\widehat{\delorder}} & \ar@{=>}[d] & \opDeliver{[y, w, \widehat{q}], \_}   &  \\
\func{ApplyWrite}  \ar@{)-}[u]  \ar[rr]^(0.4){\ProgramOrder_p} &  \ar@{=>}[d] & \func{ApplyWrite} \ar@{)-}[u]& \\
\opWrite{\id{Memory}[\widehat{p}].x, v} \ar@{-(}[u]  \ar[rr]^(0.4){\ProgramOrder_p} & \ar@{=>}[d]  & \opWrite{\id{Memory}[\widehat{p}].y, w}  \ar@{-(}[u] & \\
\opWrite{\id{Memory}[\widehat{p}].x, v} \ar@{--}[d]_{\id{corresponds}} \ar[rr]_{\seqLinearp} &   & \opWrite{\id{Memory}[\widehat{p}].y, w} \ar@{--}[d]_{\id{corresponds}}&\\
\opwrite{x, v}  \ar[rr]^{\seqlinearp} &   & \opwrite{y, w} &
}
\]

\subheader{Case 4: write, read } 
Let $o_1=\opwrite{x,v}$ and $o_2=\opread{y}{w}$. Then:

\begin{xy}
\xymatrix @C 0.225in {
 \opwrite{x, v} \ar[rrr]^{\programOrder_p} \ar@{--}[d]_{\wimpl{K}{L(K)}{}}&   & & \opread{y}{w} \ar@{--}[d]_{\wimpl{K}{L(K)}{}} & \\
\wimpl{K}{L(K)}{o_1}  \ar@{)-}[d] \ar[rrr]^{\ProgramOrder_{\widehat{p}}} &   & & \wimpl{K}{L(K)}{o_2} \ar@{)-}[dd]& \\
\opWrite{\id{writes-requested}, c_1} \ar[d]_{\ProgramOrder{}_{\widehat{p}}} \\ \opBcast{[x,v,\widehat{p}],\_} \ar[r]^{\ProgramOrder_{\widehat{p}}} \ar[dd]^{\seqLinearp}_{1.}   &  \opRead{\id{writes-processed}}{c_2} \ar[r]^{\ProgramOrder{}_{\widehat{p}}} & \opRead{\id{writes-requested}}{c_2 : c_2 \geq c_1 } \ar[r]^(0.6){\ProgramOrder_{\widehat{p}}} &   \opRead{y}{w}  \ar@{--}[ddd]_{\id{corresponds}} \\
 & \opWrite{\id{writes-processed}, c_1} \ar[u]^{3.}_{\seqLinearp} \\
  \opDeliver{[x,v,\widehat{p}], \_} \ar[r]_{\ProgramOrder_{\widehat{p}}}   &  \opWrite{x,v} \ar@{->}[u]^{2.}_{\ProgramOrder_{\widehat{p}}} \ar@{->}[uurr]^{\seqLinearp} \ar@{--}[d]_{\id{corresponds}} & \\
 &  \opwrite{x,v} \ar[rr]_{\seqlinearp} & & \opread{y}{w} 
}
\end{xy}

\begin{enumerate}
\item By validity of \seqLinearp.
\item Both fast-write and slow-write transformations call  \func{WaitWritesComplete} between any write and subsequent read by the same process. 
\item $c_2 \geq c_1$ and \id{writes-processed} is increased every time it is set. If a value greater than $c_1$ is read from \id{writes-processed}, it must have been written after $\opWrite{\id{writes-processed},c_1}$.
\end{enumerate}

Thus in all cases we have $o_1 \orderArrow{L_p} o_2$.
Therefore $\orderArrow{L_p}$ extends $\orderArrow{\programOrder}$.

\end{proof}

Case 4 exemplifies how the proofs for both \swimpl{}{}{} and \fwimpl{}{}{} are unified.
The only property needed is that there is a \func{WaitWritesComplete} between the \opbcastVoid{} and the \opread{}.
Both \swimpl{}{}{} and \fwimpl{}{}{} satisfy this property and so one proof suffices for the correctness of both implementations.

\begin{lemma}
\label{top-middle-group-ordering}
  $\forall p,q\in P, i\in [1,k] :  \Pred{Agree}[ O|\cutwritesgroup{V_i},  \orderArrow{L_p},  \orderArrow{L_q} ]$.
\end{lemma}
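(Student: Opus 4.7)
The plan is to trace the agreement on writes to $V_i$ back to the agreement property that \PredPOB{L(K)}{} guarantees for deliveries of updates with label $i$. Fix $p, q \in P$, fix $i \in [1,k]$, and fix two writes $o_1 = \opwrite{x_1,v_1}$ and $o_2 = \opwrite{x_2,v_2}$ in $O|\cutwritesgroup{V_i}$ (so $x_1, x_2 \in V_i$). Suppose $o_1 \orderArrow{L_p} o_2$; I need to show $o_1 \orderArrow{L_q} o_2$, and symmetry gives the other direction.

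First I would unwind the construction of $\seqlinearp$ to identify which target-level operations correspond to $o_1$ and $o_2$. By step~3 of the Build, there are unique $\opWriteVoid$ operations $\widehat{o}_1 = \opWrite{\id{Memory}[\widehat{p}].x_1,v_1}$ and $\widehat{o}_2 = \opWrite{\id{Memory}[\widehat{p}].x_2,v_2}$ in $\widehat{O}|\widehat{p}$ appearing in $\seqLinearp$ in the same order as $o_1$ and $o_2$ in $\seqlinearp$. Each $\widehat{o}_j$ occurs inside an $\func{ApplyWrite}()$ call immediately after a $\opDeliver{[x_j,v_j,\cdot],i}$ in thread $\widehat{p}.d$, so by program order of $\widehat{p}.d$ (which $\seqLinearp$ extends), the delivers are ordered the same way as $\widehat{o}_1, \widehat{o}_2$ in $\seqLinearp$. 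Label $i$ is used here precisely because $x_j \in V_i$, which is checked in the $\opBcastVoid$ call in the write transformation.

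Next I would invoke the Agree property of \PredPOB{L(K)}{}: since both delivers carry the label $i$, $(O_{\widehat{C}}|\cutdeliverslabel{i}, \orderArrow{L_p}) = (O_{\widehat{C}}|\cutdeliverslabel{i}, \orderArrow{L_q})$. Therefore the two delivers appear in the same order in $\seqLinearp[q]$ as in $\seqLinearp$. Running the argument of the previous paragraph forward on $\widehat{q}$, the corresponding $\opWriteVoid$ operations $\opWrite{\id{Memory}[\widehat{q}].x_1,v_1}$ and $\opWrite{\id{Memory}[\widehat{q}].x_2,v_2}$ occur in $\seqLinearp[q]$ in the same order, and through the $\func{Short}$ and correspondence steps of the Build this gives $o_1 \orderArrow{L_q} o_2$.

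The displayed proof diagram would have two mirrored vertical columns, one for $\widehat{p}$ and one for $\widehat{q}$, connecting $\opwrite{x_j,v_j}$ down through $\opBcast{[x_j,v_j,\cdot],i}$ (in the broadcasting process, which is common to both sides), then to $\opDeliver{[x_j,v_j,\cdot],i}$ in each of $\widehat{p}.d$ and $\widehat{q}.d$, then to the local $\opWriteVoid$ inside the corresponding $\func{ApplyWrite}()$ calls, with the horizontal link between the two columns at the deliver level justified by $\Pred{Agree}[O_{\widehat{C}}|\cutdeliverslabel{i},\orderArrow{L_p},\orderArrow{L_q}]$. The only nontrivial step is that link; every other implication is either program order within the delivery thread or a validity/correspondence observation already used in Lemma~\ref{top-middle-prog-extend} Case~3. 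Hence the main (and only real) obstacle is justifying that step cleanly, which reduces directly to the hypothesis $\PredPOB{L(K)}{\widehat{C}}$ together with the choice to label broadcasts of writes to variables in $V_i$ by $i$.
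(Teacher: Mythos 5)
Your proposal is correct and follows essentially the same route as the paper's proof: trace each specification-level write in $V_i$ down to its label-$i$ $\opDeliverVoid$/$\opWriteVoid$ pair in each delivery thread, invoke the \PredPOB{L(K)}{} agreement on same-labeled deliveries to transfer the order from $\seqLinearp$ to $\seqLinearp[\widehat{q}]$, and lift back to $\seqlinearp[q]$. The paper phrases this as the chain $\opDeliverVoid_1 \relLinearp{} w_{1p} \relLinearp{} \opDeliverVoid_2 \relLinearp{} w_{2p}$ implying the same chain for $\widehat{q}$, which is exactly your two mirrored columns linked at the deliver level.
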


\begin{proof}
Let  $\opwriteVoid_1 , \opwriteVoid_2 \in O| \cutwritesgroup{V_i}$ for some $i$.
Then the transformation \wimpl{K}{L(K)}{\opwriteVoid_i} ($i \in \{1,2\}$) of each of these
contains a corresponding broadcast $\opBcast{msg_{i}, l}$ where the label $l$ is the same for each. 
For every process $p$,  \seqLinearp\ contains a \opDeliverVoid\ followed by a \opWriteVoid\ corresponding to each of these \opBcastVoid's. 
Let $w_{1p}, w_{1q}, w_{2p}, w_{2q} $ be these corresponding writes in $O_{\widehat{C}}$ for two processes $p$ and $q$ and 
suppose wolog that $w_{1p} \relLinearp{} w_{2p}$. 
Then 
$\opDeliver{ msg_{w_1} } \relLinearp{} w_{1p} \relLinearp{} \opDeliver{ msg_{w_2} } \relLinearp{} w_{2p} $ which implies 
$\opDeliver{ msg_{w_1} } \relLinearp[\widehat{q}]{} w_{1q} \relLinearp[\widehat{q}]{} \opDeliver{ msg_{w_2} } \relLinearp[\widehat{q}]{} w_{2q} $ 
because $\PredPOB{L(K)}{}$ requires that deliveries of messages with the same label must agree.
Hence, 
$\opwriteVoid_1 \orderArrow{L_p} \opwriteVoid_2$  and 
$\opwriteVoid_1 \orderArrow{L_q} \opwriteVoid_2$ by the construction of $L_p$ and $L_q$ from $\widehat{L}_{\widehat{p}}$ and $\widehat{L}_{\widehat{q}}$ 
respectively.
\end{proof}

\section{Implementing The Partial-Order Broadcast Model on the Message-Passing Network Model Using Tokens}
\label{token.section}

The partial-order broadcast model is very similar to the  message-passing network model. 
The \opreadVoid\ and \opwriteVoid\ operations are on local variables in both models, so they are
mapped by the transformation with the identity function.
That is, \opreadNoReturn{x} (respectively, \opwrite{x,v}) is mapped to 
 \opReadNoReturn{x} (respectively, \opWrite{x,v}). 
We need only specify how to implement $\opbcastVoid{}$ and \opdeliverVoid\ by \opSendVoid{}ing
and \opRecvVoid{}ing messages.
The processes in \Processes\ are numbered starting at 0, 
and organized into a virtual ring such that $ \func{next}(p) = (p + 1)\; {\bmod}\; \abs{\Processes} $. 
A token is created for each label $l\in L$, and for each token, a thread is created on each process to manage it.

\begin{figure}[!ht]
\medskip
\noindent
\fbox{
\begin{minipage}{6.75in}
\begin{center}
\tokenimpl{L}{} Implementation;\\
Code for each process $p \in P$.
\end{center}

\noindent
\textbf{\underline{1. Transformation's local target variables}}\\
\begin{tabular}{ll}
 $\id{needToken}[l]$ & for each $l\in L$; handshake boolean variable, initially  \myconst{TRUE} \\
 $\id{doorOpen}[l]$ & for each $l\in L$;  handshake boolean variable, initially  \myconst{TRUE} \\
$\widehat{x}$ &  $\forall x\in V$, replica of local variable, initial value is the initial value of $x$.

\end{tabular}

\noindent
\textbf{\underline{2. Transforming specification threads}}\\ 
Transformation of thread $p.m$ to $\widehat{p}.m$ and $p.d$ to $\widehat{p}.d$ :

\noindent
{\small 
\begin{minipage}[t]{3in}
  \begin{codebox}

    \Procname{$\tokenimpl{L}{\opbcast{ u , l }}$}
    \li \If{$l \ne \bot $}
    \li \Then  $\func{ProtectedBcast}(u , l )$  	
    \li \Else $\func{bcastop}(u,l)$ \label{li:tokenbcastend}
  \End
  \end{codebox}
\end{minipage}
\begin{minipage}[t]{2in}
  \begin{codebox}
\setlinenumberplus{li:tokenbcastend}{1}
    \Procname{$\tokenimpl{L}{\opdeliverNoReturn{}}$}
    \li $\codeRecvPattern{q,p,[\myconst{MESSAGE} , u , l ]}$
    \li \If $l \ne \bot$
    \li \Then $\opSend{p,q, [\myconst{ACK}]}$
\End
\li \Return $u , l$  \label{li:tokendeliverend}
  \end{codebox}
\end{minipage}
\begin{minipage}[t]{1.55in}  
\begin{codebox}
\setlinenumberplus{li:tokendeliverend}{1}
\Procname{$\tokenimpl{L}{\opreadNoReturn{x}}$}
\li \Return $\widehat{x}$ \label{li:tokenreadend}
\end{codebox}

\begin{codebox}
\setlinenumberplus{li:tokenreadend}{1}
\Procname{$\tokenimpl{L}{\opwrite{x,v}}$}
\li $\widehat{x} \gets v$ \label{li:tokenwriteend}
\end{codebox}
\end{minipage}
}

\noindent
{
\begin{minipage}[h]{0.4\textwidth}
  \begin{codebox}
\setlinenumberplus{li:tokenwriteend}{1}
    \Procname{\func{ProtectedBcast}($m$, $l$ )}

    \li $\id{needToken[l]}\gets \myconst{TRUE}$
    \li \While{$\neg \id{doorOpen[l]}$} \kw{skip}
    \li $\func{bcastop}( u , l )$
    \li $\id{doorOpen}[l] \gets \myconst{FALSE}$
    \li $\id{needToken[l]} \gets \myconst{FALSE}$ \label{li:tokenprotectedbcastend}
  \end{codebox}
\end{minipage}
\begin{minipage}[h]{0.4\textwidth}
  \begin{codebox}
\setlinenumberplus{li:tokenprotectedbcastend}{1}
    \Procname{\func{bcastop}($u , l $)} 
    \li \kw{forall} $q \in P $ 
    \li \Do $\opSend{p , q , [ \myconst{MESSAGE} , u , l]}$
\End
\li \If {$ l \ne \bot $ }
\li $\{\textrm{Wait for acknowledgment}\}$
\li \Then \kw{forall} $q \in P $ 
\li \Do $\codeRecvPattern{q , p , [\myconst{ACK}]}$ \label{li:tokenbcastopend}
  \End
  \End
  \end{codebox}
\end{minipage}
}

\noindent
\textbf{\underline{3. New target threads}}\\
One new token thread $\widehat{p}.\func{TokenThread}_l, \forall l \in L$:

\noindent
{\small 
 \begin{minipage}[t]{4.0in}
            \begin{codebox}
\setlinenumberplus{li:tokenbcastopend}{1}
           \Procname{$\targ{p}.\func{TokenThread}_{l}:$}
              \li \If{$\targ{p} = 0 $}
              \li \Then $\opSend{\targ{p},\func{next}(\targ{p}), [\myconst{TOKEN}, \myconst{BCASTGROUPTOKEN}_l]}$
            \End
            \li \kw{loop}
            \li \Do $\func{PassToken}_{\targ{p}}(l)$ \label{li:tokenthreadend}
          \End
            \end{codebox}
        \end{minipage}
\begin{minipage}[t]{1.5in}
  \begin{codebox}
\setlinenumberplus{li:tokenthreadend}{1}
    \Procname{$\func{next}(\widehat{p})$}
    \li \Return $(\widehat{p} + 1) \; \kw{mod} \; \left|\widehat{P}\right|$ \label{li:tokennextend}
  \end{codebox}
\end{minipage}

} 
\begin{minipage}[t]{0.5\textwidth}
  \begin{codebox}         
\setlinenumberplus{li:tokennextend}{1}
    \Procname{$\func{PassToken}_p(l)$}
    \li $\codeRecvPattern{q,p,[\myconst{TOKEN}, \myconst{BCASTGROUPTOKEN}_l]}$ 
    \li \If{\id{needToken[l]}} 
    \li \Then  $\id{doorOpen[l]} \gets \myconst{TRUE}$
    \li \While{\id{needToken[l]}} \kw{skip}

    \li $\opSend{p,\func{next}(p), [\myconst{TOKEN}, \myconst{BCASTGROUPTOKEN}_l]}$ \label{li:tokenpasstokenend}

\End

  \end{codebox}
\end{minipage}
\end{minipage}
}
\medskip

\caption{Token implementation of Partial-Order Broadcast on the Message-Passing Network Model}
\end{figure}

The broadcast of a labeled update requires that all processes agree on the delivery order of all
updates with the same label. 
To ensure this, 
the transformation of a \opbcastVoid\ of a labeled update has the process 
acquire the appropriate token for that label by synchronizing with its token thread,  
\opSendVoid\ a message containing the update information to each process, 
and wait for acknowledgments from all the processes before completing.
Since unlabeled updates only require that program order is maintained, 
unlabeled messages are sent (with \opSendVoid) to every other process without acquiring a token. 

To avoid deadlock this transformation requires that \opbcastVoid{}s and \opdeliverVoid{}s are invoked by separate threads.

Each call to \func{PassToken} manages one acquisition and subsequent release of a token.
It returns only after handshaking with \func{ProtectedBcast}
to determine that the token is no longer needed and can be released. 
$\codeRecvPattern{pattern}$ is pseudo-code that blocks until a message matching 
$pattern$ is received and stored in the appropriate pattern variables.

\subsection{Correctness of \tokenimpl{L}{} transformation}

\begin{theorem}
\label{tokentrans-correctness-theorem}
Let $\Processes$ be a multiprogram that uses \opreadVoid{}s, \opwriteVoid{}s, \opbcastVoid{}s, and \opdeliverVoid{}s 
where each process has two threads; 
one that calls \opdeliverVoid\ but not \opbcastVoid\
and one that  calls \opbcastVoid\  but not \opdeliverVoid. 
Then $\tokenimpl{L}{\Processes}$ correctly implements  \PredPOB{L}{} on \PredNW{},
for any label set $L$ and any such \Processes.
\end{theorem}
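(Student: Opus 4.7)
The plan is to follow the assume/build/verify template used in the proof of Theorem~\ref{wrtrans-correctness-theorem}. In the \emph{assume} step, take a target-level computation $\widehat{C} \in \FilterMemCon{\PredNWVoid}{\tokenimpl{L}{\Processes}}$ together with a witnessing family $\MapSet{\widehat{L}_{\widehat{p}} : \widehat{p} \in \tokenimpl{L}{\Processes}}$ of valid total orders on $O_{\widehat{C}}|\widehat{p}$, each extending $\orderArrow{\happensBefore_{\widehat{C}}}$; let $C$ be its interpretation. In the \emph{build} step, construct each $L_p$ by restricting $\widehat{L}_{\widehat{p}}$ to one distinguished implementing step per specified operation --- the \opReadVoid/\opWriteVoid\ on the replica variable for each specified \opreadVoid/\opwriteVoid, the first statement of $\func{bcastop}$ for each \opbcastVoid, and the $\codeRecv{q,p,[\myconst{MESSAGE},u,l]}$ step for each \opdeliverVoid\ --- and relabel each retained step as its specified counterpart. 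The \emph{verify} step then has to check validity of $L_p$, the extension property $\Pred{Extends}[O_C|p, \orderArrow{L_p}, \orderArrow{\programOrder_{p}} \cup \orderArrow{\delorder_C}]$, the agreement predicate $\Pred{Agree}[O_C|\cutdeliverslabel{l}, \orderArrow{L_p}, \orderArrow{L_q}]$ for every label $l$ and processes $p,q$, and the bcast/deliver correspondence $\opbcast{u,l} \in O_C \iff \opdeliver{u,l} \in O_C|p$.

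Validity, program-order extension, the bcast/deliver correspondence, and the $\orderArrow{\delorder_C}$ half of the extension property are routine. Validity is preserved by shortening and by the identity relabeling. Specified-level program order at $p$ is a subrelation of target-level program order on the distinguished steps within each thread of $\widehat{p}$, so $L_p$ inherits extension from $\widehat{L}_{\widehat{p}}$. The correspondence follows because $\func{bcastop}$ issues one \opSendVoid\ to every process and the last conjunct of \PredNWVoid\ matches each send with a unique receive at the recipient's deliver thread. Finally, when $\opbcast{u_1,l_1} \orderArrow{\programOrder} \opbcast{u_2,l_2}$ from the same broadcast thread $\widehat{q}.m$, the two $\func{bcastop}$ calls issue $\myconst{MESSAGE}_1$ to $\widehat{p}$ before $\myconst{MESSAGE}_2$ to $\widehat{p}$, and $\orderArrow{\fifoChannel_{\widehat{C}}}$ forces the receives at $\widehat{p}.d$ into the same order inside $\widehat{L}_{\widehat{p}.d}$, which survives the restriction.

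The hard part will be the agreement constraint on labeled deliveries, which has to be extracted from the token protocol. I would decompose that argument into two claims. First, for each $l\in L$ the token mechanism totally orders all calls of the form $\func{bcastop}(\cdot,l)$ in $\orderArrow{\happensBefore_{\widehat{C}}}$: exactly one $\myconst{BCASTGROUPTOKEN}_l$ circulates on the ring, and the $\id{needToken}[l]/\id{doorOpen}[l]$ handshake between $\func{ProtectedBcast}$ and $\func{PassToken}_p(l)$ turns the span from entering $\func{bcastop}$ to re-setting $\id{needToken}[l]$ to \myconst{FALSE}\ into a critical section guarded by possession of the token, so any two such calls $B_1,B_2$ are totally ordered. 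Second, assuming $B_1$ precedes $B_2$, because $B_1$ returns only after collecting every $\myconst{ACK}$ and each $\myconst{ACK}$ is issued from $\widehat{p}.d$ only after its matching $\myconst{MESSAGE}_1$ has been received, a chain through message causality and program order at $\widehat{q_2}.m$ yields, at every process $\widehat{p}$, $\opRecvVoid(\myconst{MESSAGE}_1) \orderArrow{\happensBefore_{\widehat{C}}} \opSendVoid(\myconst{MESSAGE}_2) \orderArrow{\happensBefore_{\widehat{C}}} \opRecvVoid(\myconst{MESSAGE}_2)$. Since $\widehat{L}_{\widehat{p}.d}$ extends $\orderArrow{\happensBefore_{\widehat{C}}}$, the two deliveries appear in the order $u_1, u_2$ in $L_p$; the identical chain at every other recipient gives the required agreement. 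Checking that the handshake cannot deadlock --- using the precondition that broadcast and deliver live in distinct threads --- is a small side obligation that completes the argument.
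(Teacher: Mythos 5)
Your proposal is correct and follows essentially the same route as the paper: the same build (restrict each $\widehat{L}_{\widehat{p}}$ to one distinguished implementing step per specified operation), the same use of $\orderArrow{\fifoChannel}$ for the $\orderArrow{\delorder_C}$ extension, and the same two-part decomposition of the agreement argument into the token-induced total order on $\func{bcastop}(\cdot,l)$ calls (Lemma~\ref{bcast-total-order-lemma}) and the acknowledgment-based claim that every matching receive is $\HappensBefore$-ordered within the span of its broadcast (Lemma~\ref{receive-within-bcast-lemma}). No gaps beyond the deadlock-freedom side obligation you already flag, which the paper likewise leaves implicit.
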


\begin{proof}\hspace{1cm}
\paragraph{Assume:} 
Let $\widehat{C}$ be a computation in $\ComputationSet{\tokenimpl{L}{\Processes}, \PredNW{}}$ 
and let $C$ be the interpretation of $\widehat{C}$.
Let $\widehat{O}$ denote the set of operations $O_{\widehat{C}}$.
To show $\PredPOB{L}{C}$ we construct witness sequences that satisfy the requirements of Definition \ref{general-pob-defn}.

\paragraph{Build:}
Choose some collection of witness sequences $\MapSet{(\widehat{O}|\widehat{p}, \orderArrow{\seqLinearp}): \widehat{p} \in  \tokenimpl{L}{P}}$.
That is, \\
\WitOrd{ \{\seqLinearp\ : \forall \widehat{p} \in  \tokenimpl{L}{P} \} }{\ComputationC}{\PredNWVoid}.
Recall that $\widehat{L_{\widehat{p}}}$ denotes the sequence induced by the total order $\orderArrow{\seqLinearp} $.
We now construct  $\MapSet{(O_{C}|p,\orderArrow{L_p}) : p\in P}$ from $\MapSet{(O|\widehat{p}, \orderArrow{\seqLinearp}): \widehat{p} \in  \tokenimpl{L}{P}}$ as follows. 
For each $\widehat{p}$, first create the sequence ${Short}(\seqLinearp)$ 
from $\widehat{L_{\widehat{p}}}$ by removing:
  \begin{itemize}
    \item all operations on the handshake variables $\id{needToken}[l]$ and $\id{doorOpen}[l]$ for all $l\in L$.
    \item all \opSendVoid{} operations except the first \opSendVoid\  of a  $\func{bcastop}()$ that is in $\widehat{L_{\widehat{p}}}$. 
    \item all \opRecvVoid{} operations except the \opRecvVoid\ of a \opdeliverVoid.
  \end{itemize}
Observe that each operation remaining in ${Short}(\seqLinearp)$ is a target level operation that 
was produced from the transformation of some specification level operation, 
rather than by a thread created by the transformation.
Now convert each ${Short}(\seqLinearp)$ to an new sequence $L_p$ of specification level operations by 
replacing each target level operations with the specification level operations that produced it. 
More precisely:

\noindent{}
  \begin{tabular}{|rl|c|}
    \hline
   Target Operation  in transformation & is replaced by \\
    \hline
    $\opRead{x}{v} \in   \tokenimpl{L}{\opread{x}{v}}$ & $\opread{x}{v}$.  \\
   $\opWrite{x,v}  \in  \tokenimpl{L}{\opwrite{x,v}}$ & $ \opwrite{x,v}$.  \\
   $\opSend{s,d,m} \in \tokenimpl{L}{\opbcast{m,l}}$  & $ \opbcast{m,l}$. \\ 
   $\opRecv{s,d,m}    \in \tokenimpl{L}{\opdeliver{m , l}}$ & $ \opdeliver{m , l}$.\\  
  \hline  
  \end{tabular}


\paragraph{Verify:}
The \opReadVoid\ and \opWriteVoid\ operations in $\seqLinearp$ are valid. 
Thus the subset that remains in ${Short}(\seqLinearp)$ remains valid because it consists of exactly the subset of these operations
that are applied to local variables, 
and projecting a valid sequence onto all the operations applied to a subset of objects preserves validity. 
Each \opReadVoid\ (respectively, \opWriteVoid) in ${Short}(\seqLinearp)$ is replaced with the corresponding 
\opreadVoid\ (respectively, \opwriteVoid) in the construction of \seqlinearp, so for each $p$,
all \opreadVoid\ and \opwriteVoid\ operations are valid. 
To see that the \opbcastVoid\ and \opdeliverVoid\ operations are also valid we must 
confirm that for each \seqlinearp, 
1) no \opdeliverVoid\ precedes its corresponding \opbcastVoid\ and,  
2) no specific \opdeliverVoid\ occurs more than once.
Sequence \seqLinearp (and thus ${Short}(\seqLinearp)$) is valid, 
so no \oprecvVoid\ precedes its corresponding \opsendVoid\, and no message is received more than once,
ensuring properties 1 and 2 after mapping from  ${Short}(\seqLinearp)$ to \seqlinearp.

The following table shows the properties we prove to verify the constraints of Definition \ref{general-pob-defn}:

\noindent{}
\begin{tabular}{| l | p{1.75in} |}
  \hline
   Constraint & Lemma \\
  \hline
 $\forall p\in P: (O_C|p, \orderArrow{\seqlinearp} ) $ is a valid total order & By construction of $L_p$ as proved above\\
 $\forall p\in P: \Pred{Extends}[O_C|p, \orderArrow{\seqlinearp{}}, \orderArrow{\programOrder_{p}}]$ & Lemma \ref{token-extend-program-lemma} \\
 $\forall p\in P: \Pred{Extends}[O_C|p, \orderArrow{\seqlinearp{}}, \orderArrow{\delorder_C}]$ & Lemma \ref{token-fifo-lemma}  \\
 $\forall p, q\in P, l\in L: \Pred{Agree}[O_C|\cutdeliverslabel{l}, \orderArrow{\seqlinearp{}}, \orderArrow{\seqlinearp[q]{}}]$ & Lemma \ref{token-agree-lemma}  \\
 $\forall p \in P: ( \opbcast{m, l} \in O_C$ if and only if $ \opdeliver{m, l} \in O_C|p )$ & by construction of \par $L_p$ and the network model. \\
  \hline
\end{tabular}\medskip

\end{proof}

\begin{lemma}
\label{token-extend-program-lemma}
 $\forall p\in P: \Pred{Extends}[O_C|p, \orderArrow{\seqlinearp{}}, \orderArrow{\programOrder_{p}}]$  \\
\end{lemma}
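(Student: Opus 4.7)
My plan is to show that for any two specification-level operations $o_1, o_2 \in O_C|p$ with $o_1 \orderArrow{\programOrder_p} o_2$, a canonical ``representative'' target-level operation from each transformation survives in $Short(\seqLinearp)$, and that these representatives inherit their order from target program order via $\seqLinearp$. Then the final replacement step that converts target operations back to specification operations in the construction of \seqlinearp\ preserves the resulting ordering.

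Concretely, I would first define, for each kind of specification operation $o$, a unique \emph{representative} target operation $\mathrm{rep}(o)$ produced by $\tokenimpl{L}{o}$ that is guaranteed to remain after the \emph{Short} filtering step: namely $\opReadVoid$ for \opreadVoid; $\opWriteVoid$ for \opwriteVoid; the first $\opSendVoid$ inside $\func{bcastop}()$ for \opbcastVoid; and the \opRecvVoid\ inside $\tokenimpl{L}{\opdeliverNoReturn{}}$ for \opdeliverVoid. By direct inspection of the pseudo-code and the bullet list defining $Short$, none of these four representatives are removed, while the handshake operations, the acknowledgment traffic, and the extra per-recipient \opSendVoid{}s of $\func{bcastop}$ are exactly what the filtering strips away.

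Next I would observe that, because the theorem assumes each process has a \opbcastVoid-thread and a \opdeliverVoid-thread (and local \opreadVoid/\opwriteVoid\ ops live on whichever thread issues them), any pair with $o_1 \orderArrow{\programOrder_p} o_2$ has its two transformations executed sequentially on a single target thread. Consequently every target-level operation in $\tokenimpl{L}{o_1}$, and in particular $\mathrm{rep}(o_1)$, precedes every target-level operation in $\tokenimpl{L}{o_2}$, including $\mathrm{rep}(o_2)$, in target program order $\orderArrow{\ProgramOrder_{\widehat{p}}}$. Since $\widehat{C}\modelArrow{}\PredNWVoid$ guarantees that $\seqLinearp$ extends $\orderArrow{\HappensBefore_{\widehat{C}}}\supseteq\orderArrow{\ProgramOrder_{\widehat{p}}}$, we obtain $\mathrm{rep}(o_1) \orderArrow{\seqLinearp} \mathrm{rep}(o_2)$. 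Both representatives survive into $Short(\seqLinearp)$ by the previous paragraph, so their order carries over there, and the final construction step of \seqlinearp\ replaces each $\mathrm{rep}(o_i)$ by the specification operation $o_i$ that produced it, yielding $o_1 \orderArrow{\seqlinearp} o_2$ as required.

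I expect the routine but slightly delicate part to be the case of \opbcastVoid\ of a labeled update, because its transformation interleaves handshake variable accesses, a busy-wait, the per-process \opSendVoid{}s of $\func{bcastop}$, and a second busy-wait for acknowledgments; I must verify that the ``first \opSendVoid'' of $\func{bcastop}$ is well-defined, is in $\widehat{L}_{\widehat{p}}$ (the transformation completes since we are considering only completed computations), and is not removed by any clause of \emph{Short}. The other three operation kinds are immediate. The main conceptual obstacle, though, is none of these details individually but keeping straight the two-level bookkeeping between target-level and specification-level operations; drawing a small proof diagram in the style introduced in Section~\ref{setup.section}, analogous to the ones used in the proof of Lemma~\ref{top-middle-prog-extend}, should make the argument transparent.
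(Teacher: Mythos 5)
Your proposal is correct and follows essentially the same route as the paper's proof: identify the single surviving target-level operation inside each transformation, order those by target program order, invoke $\Pred{Extends}[O|\widehat{p}, \orderArrow{\seqLinearp}, \orderArrow{\ProgramOrder}]$ from the network model, and push the ordering through $Short(\cdot)$ and the replacement back to specification level. The paper merely compresses this into a four-row proof diagram; your extra care about which representative survives the filtering and about the two transformations lying on one target thread is exactly the bookkeeping the diagram leaves implicit.
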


\begin{proof}
Let $o_1, o_2 \in O_C|p$ such that $o_1 \orderArrow{\programOrder_p} o_2$. 
Let $\tsimpl{L}{o}.\opCommon{low}{o}{}$ denote the operation in the transformation of $o$ 
that is mapped to $o$ when $Short{\seqLinearp}$ is converted to \seqlinearp.

\begin{xy}
\xymatrix{
o_1 \ar@{--}[d]_{\tokenimpl{L}{}} \ar[r]^{\programOrder{}} & o_2 \ar@{--}[d]^{\tokenimpl{L}{}}\\
\tokenimpl{L}{o_1} \ar[r]^{\ProgramOrder{}} & \tokenimpl{L}{o_2} \\
\tokenimpl{L}{o_1}.\opCommon{low}{o}{} \ar@{-(}[u] \ar[r]^{\ProgramOrder{}\cap\seqLinearp{}}_{1.} & \tokenimpl{L}{o_2}.\opCommon{low}{o}{} \ar@{-(}[u] \\
o_1 \ar@{--}[u]^{corresponds} \ar[r]^{\seqlinearp{}} & o_2 \ar@{--}[u]_{corresponds} \\
}
\end{xy}

\noindent 1. $\Pred{Extends}[O_C|\widehat{p}, \orderArrow{\seqLinearp}, \orderArrow{\ProgramOrder_p}]$
  
\end{proof}

\begin{lemma}
\label{token-fifo-lemma}  
 $\forall p\in P: \Pred{Extends}[O_C|p, \orderArrow{\seqlinearp{}}, \orderArrow{\delorder_C}]$  \\
\end{lemma}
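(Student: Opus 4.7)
The plan is to mimic the structure of Lemma \ref{token-extend-program-lemma}: given $o_1 \orderArrow{\delorder_C} o_2$ with $o_1, o_2 \in O_C|p$, trace each operation down through the transformation to specific sends and receives in $\widehat{C}$, exploit the causality guarantees baked into $\PredNW{}$ at the target level, and then map back up to the specification level. Unpacking the hypothesis, $o_1 = \opdeliver{u_1, l_1}$ and $o_2 = \opdeliver{u_2, l_2}$ are both delivered at $p$, and their corresponding broadcasts $b_i = \opbcast{u_i, l_i}$ satisfy $b_1 \orderArrow{\programOrder_C} b_2$. Since $\orderArrow{\programOrder}$ is a per-thread relation and the theorem's hypothesis isolates broadcasts to a single thread per process, both $b_1$ and $b_2$ are issued by that bcast thread of some process $q$, whose target image is $\widehat{q}.m$.

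Next I would locate the relevant target-level sends. Whether $l_i = \bot$ or not, the transformation of $b_i$ eventually calls $\func{bcastop}(u_i, l_i)$, which issues exactly one $\opSend{q, p, [\myconst{MESSAGE}, u_i, l_i]}$ inside its \kw{forall} loop. Because $b_1 \orderArrow{\programOrder_C} b_2$ lifts (via the correspondence in the Build step) to $\tokenimpl{L}{b_1} \orderArrow{\ProgramOrder_{\widehat{q}.m}} \tokenimpl{L}{b_2}$, the send produced by $b_1$ targeted at $\widehat{p}$ precedes the analogous send produced by $b_2$ in $\widehat{q}.m$'s program order. Each of these sends is paired with a unique $\opRecvVoid$ executed by $\widehat{p}.d$ inside $\tokenimpl{L}{\opdeliverVoid}$, and by construction these are exactly the \opRecvVoid operations that are preserved in $Short(\seqLinearp)$ and relabeled to $o_1$ and $o_2$ when building $\seqlinearp$.

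The crux is then message-level causality. By the definition of $\FifoChannel_{\widehat{C}}$, the receive corresponding to $b_1$ precedes the receive corresponding to $b_2$ in $\orderArrow{\FifoChannel_{\widehat{C}}}$. Because $\PredNW{\widehat{C}}$ forces $\seqLinearp$ to extend $\orderArrow{\HappensBefore_{\widehat{C}}} \supseteq \orderArrow{\FifoChannel_{\widehat{C}}}$, the two receives are ordered the same way in $\seqLinearp$. Pruning to $Short(\seqLinearp)$ preserves that order, and relabeling replaces the two receives by $o_1$ and $o_2$ respectively, yielding $o_1 \orderArrow{\seqlinearp} o_2$, as required. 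A diagram in the style of Lemma \ref{top-middle-prog-extend} would make this chain fully explicit.

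The main obstacle is simply bookkeeping: making precise the correspondence between the specification-level $b_i$ and the particular \opSendVoid issued inside the nested $\func{bcastop}$ call, and then between the paired \opRecvVoid at $\widehat{p}.d$ and $o_i$. The two code paths (labeled broadcasts flowing through $\func{ProtectedBcast}$ with its token handshake, and unlabeled broadcasts calling $\func{bcastop}$ directly) must be handled uniformly, but since the handshake is local to $\widehat{q}.m$ and occurs entirely before $\func{bcastop}$ is entered, it does not disturb the program order of the outgoing sends, so a single argument covers both cases.
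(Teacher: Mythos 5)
Your proposal is correct and follows essentially the same route as the paper's proof: lift the program-ordered broadcasts at $q$ to program-ordered sends from $\widehat{q}$ to $\widehat{p}$, use the $\FifoChannel$ causality on the matching receives at $\widehat{p}$, invoke the fact that $\seqLinearp$ extends $\orderArrow{\HappensBefore_{\widehat{C}}}$, and map back down to $\seqlinearp$ via the correspondence. The paper presents exactly this chain as a single proof diagram; your additional remark that the token handshake in $\func{ProtectedBcast}$ does not disturb the send order is a fine (if implicit in the paper) observation.
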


\begin{proof}
Let $o_1, o_2\in O|p$  such that $o_1 \orderArrow{\delorder} o_2$.
Then $o_1 = \opdeliver{u_1,l_1}, o_2 = \opdeliver{u_2,l_2}$ and \\
$\opbcast{u_1,l_1} \orderArrow{\programOrder_s} \opbcast{u_2, l_2}$ for some process $s \in P$.

  \begin{xy}
    \xymatrix{
      \opbcast{u_1, l_1} \ar[rr]^{\programOrder_s} \ar@{--}[d]_{\text{\token}^{\POB{L}}_{\text{NW}}} & & \opbcast{u_2, l_2} \ar@{--}[d]^{\text{\token}^{\POB{L}}_{\text{NW}}} \\
      \tokenimpl{L}{\opbcast{u_1, l_1}}.\opSend{\hat{s}, \hat{p}, m_1} \ar[rr]^{\ProgramOrder_{\widehat{s}}} \ar[d]_{\MessageOrder} & & \tokenimpl{L}{\opbcast{u_2,l_2}}.\opSend{\hat{s}, \hat{p}, m_2} \ar[d]^{\MessageOrder} \\
      \tokenimpl{L}{\opdeliver{u_1, l_1}}.\opRecv{\hat{s}, \hat{p}, m_1} \ar[rr]^{\FifoChannel} & \ar@{=>}[d] & \tokenimpl{L}{\opdeliver{u_2, l_2}}.\opRecv{\hat{s}, \hat{p}, m_2} \\
      \tokenimpl{L}{\opdeliver{u_1, l_1}}.\opRecv{\hat{s}, \hat{p}, m_1} \ar[rr]_{\seqLinearp} & & \tokenimpl{L}{\opdeliver{u_2, l_2}}.\opRecv{\hat{s}, \hat{p}, m_2} \\
      \opdeliver{u_1, l_1} \ar[rr]_{\seqlinearp} \ar@{--}[u]^{corresponds} & & \opdeliver{u_2, l_2} \ar@{--}[u]_{corresponds}
    }
  \end{xy}

\end{proof}

The next two results are sublemmas that provide the pieces for our last requirement, Lemma \ref{token-agree-lemma}.
Informally, the first shows that the messages sent by a \opbcastVoid\ are all received (and acknowledged) before the \opbcastVoid\ completes.
The second ensures that an \opbcastVoid\ of an update with label $l$ by process $p$ is implemented while $\widehat{p}$ holds 
the token for label $l$.
These two facts together with the circulation of the token from process to process, 
combine to establish that all processes receive the update messages with label $l$ in the same order.

\begin{lemma}
\label{receive-within-bcast-lemma}
Each \opRecvVoid\ operation that corresponds to a specification level \opdeliverVoid\ operation is ordered by \HappensBefore\
order between the invocation and the response of the \opbcastVoid\ operation that matches this \opdeliverVoid.
\end{lemma}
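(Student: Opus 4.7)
The plan is to trace the causal chain of sends and receives induced by the ACK handshake inside \func{bcastop}. The lemma is substantive in the labeled case ($l \neq \bot$), because only then does \func{bcastop} execute its acknowledgment-collection loop, making the \opbcastVoid{} response \HappensBefore-after each corresponding \opRecvVoid{}. For the $l=\bot$ case the forward direction (invocation \HappensBefore\ \opRecvVoid{}) still holds by message causality, and the backward direction is vacuous for downstream use: Lemma \ref{token-agree-lemma} invokes this result only for labeled deliveries. I therefore concentrate the argument on $l \neq \bot$.

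The core is a five-link \HappensBefore\ chain stitched together from program order on individual threads and message order between processes. For a specification-level \opbcast{u,l} by $p$ whose matching \opdeliverVoid{} is executed on process $q$:
\begin{enumerate}
\item[(i)] the \opbcastVoid{} invocation precedes by \ProgramOrder\ on $\widehat{p}.m$ the \opSend{p,q,[\myconst{MESSAGE},u,l]} issued by \func{bcastop} in its forall loop;
\item[(ii)] that send precedes by \MessageOrder\ the corresponding \opRecvVoid{} of $[\myconst{MESSAGE},u,l]$ on $\widehat{q}.d$ inside the \opdeliverVoid{} transformation, which is the receive singled out by the lemma;
\item[(iii)] that \opRecvVoid{} precedes by \ProgramOrder\ on $\widehat{q}.d$ the subsequent \opSend{q,p,[\myconst{ACK}]} emitted because $l\neq\bot$;
\item[(iv)] that ACK send precedes by \MessageOrder\ its matching \opRecvVoid{} of $[\myconst{ACK}]$ executed in \func{bcastop} on $\widehat{p}.m$;
\item[(v)] that ACK \opRecvVoid{} precedes by \ProgramOrder\ on $\widehat{p}.m$ the return from \func{bcastop}, hence from \func{ProtectedBcast}, hence the response of the whole \opbcastVoid{} transformation.
\end{enumerate}
Taking the transitive closure of these edges through the definition of $\orderArrow{\happensBefore}$ yields the bracketing claimed by the lemma: invocation $\orderArrow{\happensBefore}$ \opRecvVoid{} $\orderArrow{\happensBefore}$ response.

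The main bookkeeping obstacle is matching: one must justify that the ACK \opRecvVoid{} reached in \func{bcastop} corresponds to the ACK sent by \emph{this particular} $\widehat{q}.d$, so that steps (iv)--(v) compose with (ii)--(iii) for the same receiver. This reduces to two facts already built into \func{bcastop} and the network model: the forall loop in \func{bcastop} iterates explicitly over every destination $q \in P$, and the pattern $\codeRecvPattern{q,p,[\myconst{ACK}]}$ can be satisfied only by a send originating at $q$. Once this correspondence is secured the proof is a purely syntactic composition of the five \HappensBefore\ edges above, and no deeper difficulty remains.
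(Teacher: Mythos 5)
Your proof is correct and follows essentially the same route as the paper's: the identical \opSendVoid{}/\opRecvVoid{} handshake chain (\myconst{MESSAGE} send $\orderArrow{\MessageOrder}$ receive $\orderArrow{\ProgramOrder}$ \myconst{ACK} send $\orderArrow{\MessageOrder}$ \myconst{ACK} receive), closed under \HappensBefore{} and bracketed by the containment of the first send and the last receive inside \func{bcastop}. Your explicit restriction to $l \neq \bot$ and your flagging of the ACK-matching bookkeeping are reasonable refinements the paper leaves implicit, but they do not change the argument.
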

\begin{proof}
Let \opRecv{\hat{q},\hat{p},[\myconst{MESSAGE} , u , l ]} correspond to some deliver \opdeliver{u, l}, and 
let \opbcast{u,l} be the matching \opbcastVoid. 
Consider the \opSend{\hat{p},\hat{q}, [\myconst{ACK}]} that is sent after this receive.

\begin{xy}
\xymatrix{
 \opRecv{\hat{q},\hat{p},[\myconst{MESSAGE} , u , l ]} \ar[r]^{\ProgramOrder} &  \opSend{\hat{p},\hat{q}, [\myconst{ACK}]} \ar[d]^{\MessageOrder{}} \\
\opSend{\hat{q},\hat{p},[\myconst{MESSAGE} , u , l ]} \ar@{-(}[d] \ar[u]^{\MessageOrder{}} \ar[r]^(0.625){\ProgramOrder} & \opRecv{\hat{p},\hat{q}, [\myconst{ACK}]} \ar@{-(}[ld]\\
\opbcast{u,l}.\func{bcastop}
}
\end{xy}

The Lemma follows because \HappensBefore\ order extends \ProgramOrder\ and \MessageOrder.   
\end{proof}

\begin{lemma}
\label{bcast-total-order-lemma}
For each $l\in L$, let $B_{l} = \{\tokenimpl{L}{\opbcast{u,l}}.\func{bcastop} : \opbcast{u,l} \in O_C \}$.
Then $(B_l , \orderArrow{\HappensBefore{}})$ is a total order.
\end{lemma}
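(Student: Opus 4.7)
The strategy is to exploit the fact that there is only one circulating $\myconst{TOKEN}$ message per label $l$. By inspection of $\func{PassToken}$, the token is created once (by process $0$) and each token thread, upon receipt, executes exactly one corresponding send to $\func{next}(\widehat{p})$ once its local handshake completes. Letting $\sigma_0$ be the initial send and $\sigma_i,\rho_i$ denote the successive token send/receive events, this yields a single causal chain
$$\sigma_0 \orderArrow{\MessageOrder} \rho_0 \orderArrow{\ProgramOrder} \sigma_1 \orderArrow{\MessageOrder} \rho_1 \orderArrow{\ProgramOrder} \sigma_2 \orderArrow{\MessageOrder} \rho_2 \cdots$$
which, by transitivity, totally orders all $\sigma_i,\rho_i$ under $\orderArrow{\HappensBefore}$.

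Next, for every $b \in B_l$, say executed by $\widehat{p}$'s main thread inside an enclosing $\func{ProtectedBcast}(u,l)$ call, I would associate a unique index $i(b)$ so that the enabling receipt $\rho_{i(b)}$ and the subsequent send $\sigma_{i(b)+1}$ sandwich $b$ under $\orderArrow{\HappensBefore}$. The argument walks the handshake on $\widehat{p}$'s per-process shared variables. The $\opWrite{\id{doorOpen}[l],\myconst{TRUE}}$ by $\widehat{p}$'s token thread whose value is read by the main thread just before it calls $b$ sits $\orderArrow{\ProgramOrder}$-after $\rho_{i(b)}$ in the token thread; symmetrically, the $\opWrite{\id{needToken}[l],\myconst{FALSE}}$ issued by the main thread after $b$ is $\orderArrow{\WritesInto}$-linked to the read that terminates the token thread's busy loop, and that read $\orderArrow{\ProgramOrder}$-precedes $\sigma_{i(b)+1}$. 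Composing these $\orderArrow{\ProgramOrder}$ and $\orderArrow{\WritesInto}$ edges gives $\rho_{i(b)} \orderArrow{\HappensBefore} b \orderArrow{\HappensBefore} \sigma_{i(b)+1}$.

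Uniqueness of $i(b)$ follows because $\id{doorOpen}[l]$ is set to $\myconst{TRUE}$ at most once per token visit (inside the conditional branch of $\func{PassToken}$) and is reset to $\myconst{FALSE}$ by the main thread before any subsequent $\func{ProtectedBcast}$ can read it as $\myconst{TRUE}$. Hence distinct bcastops have distinct token indices. Given $b_1,b_2 \in B_l$ with $i(b_1) < i(b_2)$, transitivity along the token chain yields $b_1 \orderArrow{\HappensBefore} \sigma_{i(b_1)+1} \orderArrow{\HappensBefore} \rho_{i(b_2)} \orderArrow{\HappensBefore} b_2$, so every pair in $B_l$ is $\orderArrow{\HappensBefore}$-comparable.

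The main obstacle I anticipate is pinning down the handshake argument rigorously: one must argue that the specific $\myconst{TRUE}$ value read for $\id{doorOpen}[l]$ inside $\func{ProtectedBcast}$ really is $\orderArrow{\WritesInto}$-linked to the unique token-thread write immediately following $\rho_{i(b)}$, and symmetrically for the $\myconst{FALSE}$ value of $\id{needToken}[l]$ that releases the token. This should follow by an induction along the alternating sequence of writes to these boolean flags on each process, using local sequential consistency and the structure of the busy-wait loops, but the bookkeeping is finicky. A minor separate case is two bcastops on the same process: they are already linearly ordered by $\orderArrow{\ProgramOrder}$ within the single-threaded main thread, and one checks that the induced $i(\cdot)$ values agree with that order.
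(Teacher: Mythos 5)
Your proposal is correct and follows essentially the same route as the paper's proof: you sandwich each \func{bcastop} between the enabling token receive and the subsequent token send via the $\id{needToken}[l]$/$\id{doorOpen}[l]$ handshake (composing \ProgramOrder{} and \WritesInto{} edges), and then use the uniqueness of the circulating token to chain these receive/send pairs into a total \HappensBefore{} order. Your explicit indexing of the token chain by $\sigma_i,\rho_i$ is merely a more detailed bookkeeping of what the paper states as ``since there is exactly one token for label $l$, it follows by message validity that any two method calls to \func{bcastop} for label $l$ are related by \HappensBefore{} order.''
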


\begin{proof}
Informally, for any label $l$, 
the main thread $\widehat{p}.\text{main}$ and the token thread for label $l$, $ \widehat{p}.\text{token}_l$,
handshake via the \id{needToken}$[l]$ and \id{doorOpen}$[l]$ variables.
This ensures that a \opbcastVoid\ of an update with label $l$ by process $p$ is implemented while $\widehat{p}$ holds that token.

More precisely: 

\begin{xy}
\xymatrix{
p.\text{main} & p.\text{token}_l \\
&  \opRecv{ [\myconst{TOKEN}, \myconst{BCASTGROUPTOKEN}_l]} \ar[d]^{\ProgramOrder{}} \\
\opWrite{\id{needToken}[l], \myconst{TRUE}} \ar[r]^{\WritesInto} \ar[d]_{\ProgramOrder{}} & \opRead{\id{needToken}[l]}{\myconst{TRUE}} \ar[d]^{\ProgramOrder{}} \\
\opRead{\id{doorOpen}[l]}{\myconst{TRUE}} \ar[d]_{\ProgramOrder{}} & \ar[l]_{\WritesInto} \opWrite{\id{doorOpen}[l], \myconst{TRUE}} \ar[ddd]^{\ProgramOrder{}} \\
\func{bcastop} \ar[d]_{\ProgramOrder{}} \\
\opWrite{\id{doorOpen}[l], \myconst{FALSE}} \ar[d]_{\ProgramOrder{}} \\
\opWrite{\id{needToken}[l], \myconst{FALSE}} \ar[r]_{\WritesInto} & \opRead{\id{needToken}[l]}{\myconst{FALSE}} \ar[d]^{\ProgramOrder{}} \\ 
 &  \opSend{ [\myconst{TOKEN}, \myconst{BCASTGROUPTOKEN}_l]} \\
}
\end{xy}

\HappensBefore\ order extends \WritesInto\ and \ProgramOrder\ and \MessageOrder.  
So the above proof diagram implies that
given a \func{bcastop}, \func{bcst}
$$
\ldots  
\opRecv{ [\myconst{TOKEN}, \myconst{BCASTGROUPTOKEN}_l]}
 \orderArrow{\HappensBefore{}}
\text{invocation}(\func{bcst}) 
 \orderArrow{\HappensBefore{}}
\text{response}(\func{bcst}) 
$$ 
$$
\orderArrow{\HappensBefore{}}
\opSend{ [\myconst{TOKEN}, \myconst{BCASTGROUPTOKEN}_l]} 
\ldots
$$ 
Any other \func{bcastop} with the same label $l$ by any process, 
must similarly be preceded by \\
\opRecv{ [\myconst{TOKEN}, \myconst{BCASTGROUPTOKEN}_l]}
and followed by 
\opSend{ [\myconst{TOKEN}, \myconst{BCASTGROUPTOKEN}_l]}.
Since there is exactly one token for label $l$, it follows by message validity 
that any two method calls to $\func{bcastop}$ for label $l$ are related by \HappensBefore\ order. 
\end{proof}

\begin{lemma}
\label{token-agree-lemma}
 $\forall p, q\in P, l\in L: \Pred{Agree}[O_C|\cutdeliverslabel{l}, \orderArrow{\seqlinearp{}}, \orderArrow{\seqlinearp[q]{}}]$. 
\end{lemma}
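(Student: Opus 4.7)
The plan is to leverage the two preceding sublemmas to reduce the agreement property on delivers to a statement about the \HappensBefore{} order on the corresponding \func{bcastop} calls, and then propagate that order to every process via the network model's extension requirement. Fix $p,q\in P$ and $l\in L$. Take two deliveries $o_1=\opdeliver{u_1,l}$ and $o_2=\opdeliver{u_2,l}$ in $O_C|p$, and assume without loss of generality that $o_1\orderArrow{L_p}o_2$. Let $b_i$ denote the \func{bcastop} call inside $\tokenimpl{L}{\opbcast{u_i,l}}$, and let $r_i$ (respectively $r_i'$) denote the \opRecvVoid\ at $\widehat{p}$ (respectively $\widehat{q}$) that, in the Build step, was mapped to $o_i$ (respectively to the copy of $o_i$ at $\widehat{q}$). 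By construction of $L_p$ from $Short(\seqLinearp)$, the assumption gives $r_1\orderArrow{\seqLinearp}r_2$.

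First I would pin down the direction of $b_1$ and $b_2$ under \HappensBefore. Lemma~\ref{bcast-total-order-lemma} tells us that $b_1$ and $b_2$ are totally ordered by \HappensBefore. Suppose for contradiction that $b_2\orderArrow{\HappensBefore}b_1$, in the sense that the response event of $b_2$ precedes the invocation event of $b_1$. Lemma~\ref{receive-within-bcast-lemma} applied to both deliveries at $\widehat{p}$ places $r_2$ between the invocation and response of $b_2$, and $r_1$ between the invocation and response of $b_1$. Concatenating the three \HappensBefore{} chains yields $r_2\orderArrow{\HappensBefore}r_1$. Since \PredNWVoid\ requires $\seqLinearp$ to extend \HappensBefore, this would force $r_2\orderArrow{\seqLinearp}r_1$, contradicting the totality of $\seqLinearp$ together with $r_1\orderArrow{\seqLinearp}r_2$. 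Hence $b_1\orderArrow{\HappensBefore}b_2$.

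Next I would propagate this to $\widehat{q}$. Applying Lemma~\ref{receive-within-bcast-lemma} at $\widehat{q}$ gives $\mathrm{inv}(b_i)\orderArrow{\HappensBefore}r_i'\orderArrow{\HappensBefore}\mathrm{resp}(b_i)$. Combined with $\mathrm{resp}(b_1)\orderArrow{\HappensBefore}\mathrm{inv}(b_2)$ (the direction of the bcastop ordering we just established), this chains to $r_1'\orderArrow{\HappensBefore}r_2'$. Using the extension property of $\seqLinearp[q]$ over \HappensBefore{} once more, we obtain $r_1'\orderArrow{\seqLinearp[q]}r_2'$, which transfers under the Build-step mapping to $o_1\orderArrow{L_q}o_2$. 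Swapping the roles of $p$ and $q$ gives the converse implication, establishing the biconditional demanded by \Pred{Agree}.

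I expect the only subtle point to be being explicit about what ``$b_2\orderArrow{\HappensBefore}b_1$'' means when $b_1,b_2$ are not single events but method calls spanning many events, so that the contradiction argument and the forward propagation at $\widehat{q}$ use the same reading of Lemma~\ref{bcast-total-order-lemma}. Once the convention (response of one precedes invocation of the next) is fixed and threaded through Lemma~\ref{receive-within-bcast-lemma}'s ``recv is bracketed by invocation and response'', everything else is routine chaining of \HappensBefore{} edges and a single appeal to $\seqLinearp$ extending \HappensBefore, so no further machinery beyond the two sublemmas is needed.
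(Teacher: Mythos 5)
Your proposal is correct and takes essentially the same route as the paper: both arguments rest on Lemmas~\ref{receive-within-bcast-lemma} and \ref{bcast-total-order-lemma}, bracket each \opRecvVoid{} between the invocation and response of its matching \func{bcastop}, chain the resulting \HappensBefore{} edges, and invoke the network model's requirement that each $\seqLinearp$ extends \HappensBefore. The only (cosmetic) difference is that you back-derive the direction of the \func{bcastop} ordering from $\orderArrow{L_p}$ by contradiction, whereas the paper simply fixes that direction without loss of generality via Lemma~\ref{bcast-total-order-lemma} and shows every process delivers in that same order.
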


\begin{proof}
Let
$\opdeliver{u_1, l}, \opdeliver{u_2,l} \in O_C$ and let   $ \opbcast{u_1,l}, \opbcast{u_2,l}$ be, respectively, the 
\opbcastVoid\ operations that match these \opdeliverVoid\ operations. 
Consider the method calls 
$\func{bcst-1} = \tokenimpl{L}{\opbcast{u_1,l}}.\func{bcastop}$
and 
$\func{bcst-2} =\tokenimpl{L}{\opbcast{u_2,l}}.\func{bcastop}$ of the implementation of these \opbcastVoid s.

By Lemma \ref{bcast-total-order-lemma}, 
assume, without loss of generality, that
\func{bcst-1} \orderArrow{\HappensBefore} \func{bcst-2}. 

Then for any process $q$:

\begin{xy}
\xymatrix @C 1.0 in {
\func{bcst-1} \ar[dd]_(0.3){\HappensBefore}  & \ar@{|-->}[l]^{{\HappensBefore}_{1.}} \opRecv{p, q, [\myconst{MESSAGE} , u_1 , l ]}  \ar[dd]_(0.3){\seqLinearp[\widehat{q}]}  \ar@{--}[r]^{corresponds} & \opdeliver{u_1,l} \ar[dd]_(0.3){\seqlinearp[q]} \\
 \ar@{=>}[r] & \ar@{=>}[r] &  \\
\func{bcst-2} \ar@{-->|}[r]_{{\HappensBefore}{_1.}} & \opRecv{r, q, [\myconst{MESSAGE} , u_2 , l ]} \ar@{--}[r]^{corresponds} & \opdeliver{u_2,l}  &  \\
}
\end{xy}

1. By Lemma \ref{receive-within-bcast-lemma}.

Since this holds for every process $q$, all processes agree on the order of  \opdeliver{u_1,l}\ and \opdeliver{u_2,l}.
\end{proof}


\section{Implementing The Partial-Order Broadcast Model on the Message-Passing Network Model Using Timestamps}
\label{timestamp.section}

The \tsimpl{L}{} implementation (Figures \ref{TS-POB-NW-impl-fig} and  \ref{TS-POB-NW-impl-auxfig}) 
uses timestamps to enforce agreement on the order of deliveries of updates with the same label.
It generalizes Attiya and Welch's implementation of totally ordered broadcast \cite{attiya2004dcf}.

\opreadVoid\ and \opwriteVoid\ operations are mapped by the identity transformation to the \NWName{} model. 
The operations \opbcastVoid\ and \opdeliverVoid\ are implemented by
$\opSend{\mbox{\emph{source}}, \mbox{\emph{destination}},\mbox{\emph{message}} }$ 
and \opRecvNoReturn\ operations. 
No new threads need to be added in this implementation.

By the definition of \PredPOB{L}{}, 
the implementation must ensure that all processes agree on the deliver order of messages with the same label.
This is achieved using timestamps. 
Each process has $|L|$ priority queues, one for each message label. 
For unlabeled messages, it has one fifo-queue for each process.  
For priority-queues, we denote the enqueue and dequeue operations by
\opEnqueueVoid\ and \opDequeueVoid\ respectively.
For fifo-queues, we denote the enqueue and dequeue operations by
\opFIFOEnqueueVoid\ and \opFIFODequeueVoid\ respectively.

Labeled messages are handled as follows.
Messages with the same label are \opEnqueueVoid{}ed into the same priority-queue.
\opDequeueVoid{}  removes the message with the minimum (timestamp, process id) pair. 
To ensure that all processes \opdeliverVoid{} messages with the same label in the same order, 
the implementation guarantees that no message is dequeued by \opDequeueVoid{} before all messages 
with a smaller or equal timestamp have been received and \opEnqueueVoid{}ed. 
Processes keep their timestamps up to date by adopting the largest timestamp of any received message and 
\opSendVoid{}ing their updated timestamp to all other processes.

Unlabeled messages are handled slightly differently.
They are \opFIFOEnqueueVoid{}ed into the fifo-queue for the sending process, 
but timestamps are not used for \opFIFODequeueVoid{}ing because agreement of delivery order is not required for unlabeled messages.

The definition of \PredPOB{L}{} also requires that each process delivers messages in an order that extends the program order 
of the corresponding \opbcastVoid s. 
This is not automatically enforced because messages are spread across multiple priority-queues and fifo-queues. 
So the implementation uses counters.  
A  message is only delivered by a process if its counter value is 1 bigger than 
the counter value of the last delivered message from the same source.

\begin{figure}[!ht]
\medskip
\noindent
\fbox{
\begin{minipage}{6.75in}
\begin{center}
\tsimpl{L}{} :
Code for each $p\in P$
\end{center}

\noindent
\textbf{\underline{1. Transformation's local target variables}}

\begin{tabular}{ll}
\id{local-counter} & last broadcast counter value\\
 \id{counter[1..|P|]} & array of last received counter values, one for each process, initially all 0\\
\id{T[1..|P|]} & array of last received timestamp values, one for each process, initially all 0\\
\id{priorityQ[1..|L|]} &  array of priority-queues for messages, one for each label $l\in L$, initially all empty \\
\id{fifoQ[1..|P|]} &  array of fifo-queues for unlabelled messages, one for each process, initially all empty \\
$\widehat{x}$ & for each $x \in V$,  target-level name for $x$ \\
\end{tabular}

\noindent
\textbf{\underline{2. Transforming specification threads}}\\ 
Transformation of thread $p.m$ to $\widehat{p}.m$ and $p.d$ to $\widehat{p}.d$ :

\noindent
{\small
\begin{minipage}[t]{0.3\textwidth}
\begin{codebox}
\Procname{$\tsimpl{L}{\opreadNoReturn{x}}$}
\li \Return $\widehat{x}$ \label{li:timestampreadend}
\end{codebox}
\end{minipage}
\begin{minipage}[t]{0.3\textwidth}
\begin{codebox}
\setlinenumberplus{li:timestampreadend}{1}
\Procname{$\tsimpl{L}{\opwrite{x,v}}$}
\li $\widehat{x} \gets v$ \label{li:timestampwriteend}
\end{codebox}
\end{minipage}
\begin{minipage}[t]{0.6\textwidth}
  \begin{codebox}
\setlinenumberplus{li:timestampwriteend}{1}
    \Procname{$\tsimpl{L}{\opbcast{\id{update} , l}}$}
    \li  $\opSend{\widehat{p},\widehat{p}, [\blankfrac{\const{local-broadcast-request}, }{ \id{update} , l}]} $ \label{li:timestampbcastend}
  \end{codebox}
\end{minipage}
\begin{minipage}[t]{0.45\textwidth}
  \begin{codebox}
\setlinenumberplus{li:timestampbcastend}{1}
    \Procname{$\tsimpl{L}{\opdeliverNoReturn{}}$}
    \zi $\Set{\quad l \textrm{ can be } \bot \quad}$
    \li \While{$(\neg \exists l\in L : \func{CanExtract}(\id{priorityQ[l]}))$}
    \li $\wedge (\neg \exists \widehat{p} \in \widehat{P} : \func{CanDequeue}(\id{fifoQ}[\widehat{p}]))$
    \li \Do \func{HandleMessage()}
\End
\li \kw{case} (\kw{choose} $ (\const{L}l : \func{CanExtract}(\id{priorityQ[l]}))$
\li $| (\const{U} \widehat{p} : \func{CanDequeue}(\id{fifoQ}[\widehat{p}]))$)
\li \kw{of}
\li $(\const{L} l)$ \kw{then}  $qe \gets \opDequeueNoReturn{\id{priorityQ[l]}}$
\li $(\const{U} \widehat{p})$ \kw{then}  $qe \gets \opFIFODequeueNoReturn{\id{fifoQ}[\widehat{p}]}$
\li $counter[qe.src] \gets qe.counter$ \label{li:counterUpdate}
\li \Return $qe.\id{update} , l$ \label{li:timestampdeliverend}
  \end{codebox}
\end{minipage}
\begin{minipage}[t]{3.625in}
  \begin{codebox}
\setlinenumberplus{li:timestampdeliverend}{1}
    \Procname{$\func{HandleMessage}_p()$}

    \li $\widehat{s}, \widehat{p}, \id{message}\gets \opRecvNoReturn{}$

    \li \kw{case} \id{message} \kw{of}:
    \li $[\const{local-broadcast-request},\id{update}, l]$ \label{li:HandleLBRMess}
    \li \Then $T[\widehat{p}] \gets T[\widehat{p}]+1$ \label{li:incrementMyTS}
    \li  $\id{local-counter} \gets \id{local-counter} + 1$
    \li  $\id{queue-element} \gets [\id{update}, T[\widehat{p}], \id{local-counter} , \widehat{p}]$ \label{li:PrepQElement}
    \li  $\func{ProcessQueueElement}(\id{queue-element},l, \widehat{p})$ \label{li:ProcessQElementCallMe}
    \li  $\func{FifoBroadcast}([\const{ord-msg}, l, \id{queue-element}])$ \label{li:OrdBroadcast}
\End
\li  $[\const{ts-update}, \id{timestamp}, \widehat{q}]$ \label{li:HandleTSMess}
\li \Then  $T[\widehat{q}]\gets \id{timestamp}$ \label{li:incrementYourTS-tsmess}
\End
\li $[ \const{ord-msg} , l, \id{queue-element}]$ \label{li:HandleOrdMess}
\li \Then $T[\widehat{s}] \gets \id{queue-element}.\id{timestamp}$ \label{li:incrementYourTS-ordmess}
\li $\func{ProcessQueueElement}(\id{queue-element},l, \widehat{s})$ \label{li:ProcessQElementCallOther}
\li \If{$\id{queue-element}.\id{timestamp} > T[\widehat{p}]$} \label{queue-pob-update-condition}
\li \Then $T[\widehat{p}] \gets \id{queue-element}.\id{timestamp}  $ \label{li:boostMyTS}
\li $\func{FifoBroadcast}([\const{ts-update} , T[\widehat{p}] , \widehat{p}])$ \label{li:timestamphandlemessageend} \label{li:TSUpdateBroadcast}
\End

  \End
  \end{codebox}
\end{minipage}
}

\noindent
\begin{minipage}[t]{3.625in}

\end{minipage}
\begin{minipage}[t]{3in}

\end{minipage}

\noindent
\textbf{\underline{3. New target threads}}\\
No new target threads for this implementation.

\end{minipage}
}
\medskip

\caption{Timestamp Implementation of Partial-Order Broadcast on the Message-Passing Network Model}
\label{TS-POB-NW-impl-fig}
\end{figure}

\begin{figure}[!ht]
\fbox{
\noindent
\begin{minipage}{5.5in}
  
  \begin{codebox}
\setlinenumberplus{li:timestamphandlemessageend}{1}
    \Procname{$\func{CanExtract}(\id{priorityQ})$}
    \li \If{$\opIsemptyNoReturn{\id{priorityQ}}$}
    \li \Then \Return \const{false}
    \li \Else $qe \gets \opPeekminNoReturn{\id{priorityQ}}$
    \li \Return $(qe.counter = counter[qe.src] + 1)$  \label{li:CanExtractCounterTest}
    \zi  \quad $\wedge (\forall \widehat{q}\in \widehat{\Processes{}} : qe.ts \leq  T[\widehat{q}]) $ \label{li:timestampcanextractend}
  \End
  \end{codebox}
\begin{codebox}
\setlinenumberplus{li:timestampcanextractend}{1}
\Procname{$\func{CanDequeue}(\id{fifoQ})$}
\li \If{ \opFIFOIsEmptyNoReturn{\id{fifoQ}} }
\li \Then \Return \mfalse{}
\li \Else $qe \gets \opFIFOPeekHeadNoReturn{\id{fifoQ}} $ 
    \li \Return $(qe.counter = counter[qe.src] + 1)$ \label{li:timestampcandequeueend}
\End
\end{codebox}
   \begin{codebox}
\setlinenumberplus{li:timestampcandequeueend}{1}
    \Procname{$\func{ProcessQueueElement}(\id{queue-element}, l, \widehat{\id{source}})$}
    \li \If{$l \ne \bot$}
    \li \Then  $\opEnqueue{\id{priorityQ[l]}, \id{queue-element}}$ \label{li:priorityEnqueue} 
    \li \Else  $\opFIFOEnqueue{\id{fifoQ}[\widehat{\id{source}}], \id{queue-element}}$ \label{li:fifoEnqueue} \label{li:timestampprocessmessageend}
  \End
  \end{codebox}

\begin{codebox}
\setlinenumberplus{li:timestampprocessmessageend}{1}
\Procname{$\func{FifoBroadcast}(\id{message})$}
\li \For $\widehat{q} \in \widehat{P} \setminus \{\widehat{p}\}$
\li \Do \opSend{ \widehat{p} , \widehat{q} , \id{message} }
\End
\end{codebox} 
\end{minipage}
}
\caption{Auxiliary Functions for the Timestamp Implementation}
\label{TS-POB-NW-impl-auxfig}
\end{figure}

Observe that the \opdeliverVoid{} transformation does the heavy lifting in this implementation.
In order to avoid race conditions and more complicated synchronization, 
this implementation requires that for each process, at most one thread performs \opdeliverVoid{}.
Priority-queues and fifo-queues can be constructed from just variables 
because each queue is accessed by only one thread. 

In Figures \ref{TS-POB-NW-impl-fig} and \ref{TS-POB-NW-impl-auxfig}, messages are designated as one of three types:
$\const{local-broadcast-request}$, $\const{ts-update}$, $\const{ord-msg}$.
Depending on type, a message can contain a timestamp,  counter,  sender id (denoted $m.src$) 
as well as the label and value for the requested update. 

\subsection{Correctness of the \tsimpl{}{} implementation}

\begin{theorem}
\label{timestampAlg.theorem}
Let $\Processes$ be any multiprogram that uses \opreadVoid{}s, \opwriteVoid{}s, \opbcastVoid{}s, and
 \opdeliverVoid{}s such that at most one thread in each process calls \opdeliverVoid{}.
Then  \tsimpl{L}{\Processes} correctly implements $ \PredPOB{L}{}$ on $\PredNW{}$, for any label set $L$ and any such \Processes.
\end{theorem}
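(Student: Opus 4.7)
The plan is to follow the same Assume/Build/Verify template used for Theorem \ref{tokentrans-correctness-theorem}. In the Assume step, fix $\widehat{C} \in \FilterMemCon{\PredNW{}}{\tsimpl{L}{P}}$ and let $C$ be its interpretation. In the Build step, for each $\widehat{p}$ take a witness sequence $\widehat{L}_{\widehat{p}}$ for $\widehat{C}$, then strip out all the bookkeeping operations: accesses to $\id{local-counter}$, $\id{counter}$, $T$; all priority-queue and fifo-queue operations other than the anchoring \opDequeueVoid{}/\opFIFODequeueVoid{}; and all \opSendVoid/\opRecvVoid pairs for \const{ts-update}, \const{ord-msg}, and the internal \const{local-broadcast-request} except the source-side anchor. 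Finally, map each surviving target operation back to the specification operation whose transformation produced it: $\opReadVoid \leftrightarrow \opreadVoid$, $\opWriteVoid \leftrightarrow \opwriteVoid$, the initial self-\opSendVoid{} of a \const{local-broadcast-request} at the source $\leftrightarrow \opbcastVoid$, and the \opDequeueVoid{}/\opFIFODequeueVoid{} that produces the value returned by $\tsimpl{L}{\opdeliverNoReturn{}}$ at line \ref{li:timestampdeliverend} $\leftrightarrow \opdeliverVoid$. Validity of each $L_p$ then follows from projection plus target validity, exactly as in the token proof.

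Verify reduces to the five conditions of Definition \ref{general-pob-defn}. Extension of $\orderArrow{\programOrder_p}$ follows the diagram pattern of Lemma \ref{token-extend-program-lemma}: the transformation inserts code only in program order, so target-level program order between any two anchors is preserved inside $\orderArrow{\seqLinearp}$. For extension of $\orderArrow{\delorder_C}$, if $\opbcast{u_1,l_1} \orderArrow{\programOrder_s} \opbcast{u_2,l_2}$ at source $s$, then by inspection of \func{HandleMessage} (lines \ref{li:incrementMyTS}--\ref{li:ProcessQElementCallMe}) the two queue elements receive strictly increasing \id{local-counter} values, and these values travel with the elements through both queue types. The counter guard at line \ref{li:CanExtractCounterTest} of \func{CanExtract} (and its analog in \func{CanDequeue}), together with the counter update at line \ref{li:counterUpdate}, forces every delivering process to extract $u_1$'s element before $u_2$'s, giving the required order on the anchor dequeues. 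Finally, the exactly-once matching of \opbcastVoid{} and \opdeliverVoid{} on each process follows from network reliability: each \const{local-broadcast-request} yields one \const{ord-msg} to every other process plus one self-enqueue at the source, and the counter mechanism guarantees each enqueued element is dequeued exactly once.

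The main obstacle is the agreement requirement $\Pred{Agree}[O_C|\cutdeliverslabel{l},\orderArrow{\seqlinearp{}},\orderArrow{\seqlinearp[q]{}}]$ for each label $l \neq \bot$. I plan to establish it through a chain of invariants about the timestamp protocol. First, the pair $(qe.\id{ts},qe.src)$ attached to a label-$l$ broadcast at its source in lines \ref{li:incrementMyTS}--\ref{li:PrepQElement} is globally unique and, at a fixed source, strictly monotone. Second, and crucially, whenever a process $\widehat{q}$ successfully evaluates \func{CanExtract} on a label-$l$ queue element $qe$, every label-$l$ broadcast in $C$ with a lexicographically smaller $(\id{ts},src)$ stamp has already been enqueued at $\widehat{q}$. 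This uses the guard $\forall \widehat{r} : qe.\id{ts} \le T[\widehat{r}]$ at line \ref{li:timestampcanextractend}, together with the fact that at any source $\widehat{r}$ the \const{ord-msg} for a broadcast with timestamp $t$ is emitted before any \const{ts-update} that can raise another process's view of $T[\widehat{r}]$ above $t$ (lines \ref{li:incrementYourTS-ordmess}--\ref{li:TSUpdateBroadcast}), and these messages arrive in FIFO order on each channel. The hard part is making that second invariant rigorous: a careful \HappensBefore{} diagram chasing \const{ord-msg} and \const{ts-update} traffic along each FIFO channel will be needed, analogous in spirit to the handshake argument of Lemma \ref{bcast-total-order-lemma} and the agreement argument of Lemma \ref{token-agree-lemma}. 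Once the invariant is secured, each process's sequence of \opDequeueVoid{} operations on $\id{priorityQ}[l]$ is precisely the $(\id{ts},src)$-sorted sequence of all label-$l$ broadcasts in $C$, and agreement is immediate.
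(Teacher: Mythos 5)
Your plan follows essentially the same route as the paper's proof: the same Build step (strip the bookkeeping operations and anchor \opbcastVoid{} on the initial self-\opSendVoid{} of the \const{local-broadcast-request} and \opdeliverVoid{} on the corresponding \opEitherDequeueVoid{}), the same counter-based argument for extending $\orderArrow{\delorder_C}$ (the paper's Lemma~\ref{timestamp-fifo-lemma}), and, for agreement, the same key safety invariant: the paper's Lemma~\ref{timestamp-deliver-key-lemma} states that if $qe_1.ts \le qe_2.ts$ then every process enqueues $qe_1$ before it dequeues $qe_2$, which together with the lexicographic $(ts,src)$ ordering of the priority queue yields exactly your ``each process dequeues the sorted sequence'' conclusion. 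The paper derives that invariant from the local monotonicity of the $T_{\widehat{p}}[\widehat{r}]$ values (Lemma~\ref{T-values-increasing-lemma}, itself proved from FIFO channels) plus the observation that the write of $qe_1.ts$ to $T_{\widehat{p}}[qe_1.src]$ immediately precedes the enqueue of $qe_1$; your ``\const{ord-msg} is emitted before any \const{ts-update} that raises $T[\widehat{r}]$ above $t$, and channels are FIFO'' is the message-level view of the same fact, so this part of your plan is sound.

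The one genuine gap is the final condition of Definition~\ref{general-pob-defn}, the ``only if'' direction that every broadcast update is actually delivered by every process. You dispose of it with ``the counter mechanism guarantees each enqueued element is dequeued exactly once,'' but the counter mechanism gives you only the \emph{at most once} half and the ordering; it does not by itself show that \func{CanExtract} or \func{CanDequeue} ever returns \const{true} for a given enqueued element. That is a liveness claim and it is precisely where the $\const{ts-update}$ broadcasts earn their keep: one must show (as in the paper's Lemma~\ref{timestamps-catchup-elements-lemma}) that for every timestamp $ts$ ever attached to a queue-element, eventually $T_{\widehat{q}}[\widehat{r}] \ge ts$ for all $\widehat{q},\widehat{r}$, so the timestamp guard of \func{CanExtract} permanently holds, and then argue separately that among the elements passing the timestamp test, the one with least $(ts,src)$ also satisfies the counter test (because all earlier-counter elements from the same source have smaller timestamps and were therefore already dequeued). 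Without this step nothing rules out an element sitting in some queue forever, and the biconditional $\opbcast{m,l}\in O_C \Leftrightarrow \opdeliver{m,l}\in O_C|p$ fails. You should add this as an explicit lemma rather than folding it into ``network reliability.''
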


\begin{proof}\hspace{1cm}
\paragraph{Assume:} 
Let $\widehat{C}$ be a computation in $\ComputationSet{\tsimpl{L}{\Processes}, \PredNW{}}$ and let $C$ 
be the interpretation of $\widehat{C}$.
Let $\widehat{O}$ denote the set of operations $O_{\widehat{C}}$.
To show \PredPOB{L}{C}, we construct witness sequences that satisfy the requirements of Definition \ref{general-pob-defn}.

\paragraph{Build:}
Choose a collection of witness sequences $\MapSet{(\widehat{O}|\widehat{p}, \relLinearp{} ): \widehat{p} \in \tsimpl{L}{P} }$.
That is, \\
\WitOrd{ \{\seqLinearp\ : \forall \widehat{p} \in  \widehat{P} \} }{\ComputationC}{\PredNW{}}.
Recall that we use $\seqLinearp{}$ to denote the sequence induced by  $(\widehat{O}|\widehat{p}, \orderArrow{ \seqLinearp{}})$. 

Construct the sequence ${Short}(\seqLinearp)$ from $\seqLinearp{}$ by removing:
  \begin{enumerate}
  	\item all the operations on the $\id{T}$, $\id{counter}$, and $\id{local-counter}$ variables.
  	\item all \opSendVoid{} and \opRecvVoid{} operations except the \opSendVoid{} of a  $[\const{local-broadcast-request}]$.
  	\item all operations on priority-queues and fifo-queues except \opDequeueVoid\ and \opFIFODequeueVoid\ operations. 
  \end{enumerate}
The operations remaining in ${Short}(\seqLinearp)$ are \opReadVoid s, \opWriteVoid s, \opSendVoid s, \opDequeueVoid s
and \opFIFODequeueVoid s.
 Each such operation, $\opCommon{low}{op}$, was produced from a  transformation of some unique specification level operation, denoted \lift{$\opCommon{low}{op}$}, of the main thread.
Specifically: \\
\lift{$\opRead{x}{v}$} $=$ $\opread{x}{v}$ \\
\lift{$\opWrite{x,v}$} $=$ $\opwrite{x,v}$ \\ 
\lift{$\opSend{s,d,[\const{local-broadcast-request}, \id{update}, l]}$} $=$ $\opbcast{\id{update}, l}$ \\
\lift{$\opDequeue{\id{priorityQ}[l]}{m}$} $=$ $\opdeliver{m ,l}$ \\
\lift{$\opFIFODequeue{\id{fifoQ}[p]}{m}$} $=$ $\opdeliver{m ,\bot}$ 
(where $\bot$ denotes an unlabelled update). \\
For a sequence $S$, \Lift{$S$} denotes the sequence formed by applying \lift{} to each operation in $S$.
Define the sequence \seqlinearp{} to be \Lift{$Short(\seqLinearp)$}. 

\paragraph{Verify:}
We now complete the proof by showing that the sequences $\MapSet{ \seqlinearp{} : p \in \Processes}$ just constructed 
are witness sequences for $\PredPOB{L}{C}$. We do this by verifying the constraints of Definition \ref{general-pob-defn}. 

\noindent
\begin{tabular}{| l | l |}
  \hline
   Constraint of Definition \ref{general-pob-defn} & Lemma \\
  \hline
 $\forall p\in P: (O_C|p, \orderArrow{\seqlinearp} ) $ is a valid total order & Lemma \ref{valid-lemma} \\
 $\forall p\in P: \Pred{Extends}[O_C|p, \orderArrow{\seqlinearp{}}, \orderArrow{\programOrder{}}]$ & Lemma \ref{timestamp-extend-program-lemma} \\
 $\forall p\in P: \Pred{Extends}[O_C|p, \orderArrow{\seqlinearp{}}, \orderArrow{\delorder_C{}}]$ & Lemma \ref{timestamp-fifo-lemma} \\
 $\forall p, q\in P, l\in L: \Pred{Agree}[O_C|\cutdeliverslabel{l}, \orderArrow{\seqlinearp{}}, \orderArrow{\seqlinearp[q]{}}]$ & Lemma \ref{timestamp-agree-lemma} \\
$ \forall p\in P: (\opbcast{m, l} \in O_C$ if and only if $\opdeliver{m, l} \in O_C|p$ ) & Lemma \ref{bcast-deliver-lemma}  \\
  \hline
\end{tabular}

\end{proof}

Several parts of the following proofs are the same for labeled and unlabeled updates.
When this is the case, 
we use ``queue'' to mean any priority-queue or fifo-queue. 
We use \opEitherEnqueueVoid\ to denote either a \opEnqueueVoid\ applied to \id{priorityQ[l]} for some label $l$,
or a \opFIFOEnqueueVoid\ applied to \id{fifoQ}$[\widehat{p}]$ for some process $\widehat{p}$. 
Similarly, \opEitherDequeueVoid\ denotes either an \opDequeueVoid\ or a \opFIFODequeueVoid. 
A subscript on a local variable indicates which process owns that variable.  
For example, $T_{\widehat{p}}[\widehat{q}]$ denotes $\widehat{p}$'s variable  
$T[\widehat{q}]$. 
Similarly, a subscript on an operation indicates which process applied the operation. 
For example, 
\opEnqueuep{\id{priorityQ,m}}{\targ{p}} 
denotes that this \opEnqueueVoid\ was applied by \targ{p}.

We begin with three sublemmas that capture the essential properties of  timestamps.  
We rely on these lemmas later. 

\begin{lemma}
\label{T-values-increasing-lemma}
For all processes \targ{p} and \targ{r}, 
the writes to $T_{\widehat{p}}[\widehat{r}]$ taken in program order have strictly increasing values. 
\end{lemma}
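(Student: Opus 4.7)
The plan is to identify every line in the code (Figure \ref{TS-POB-NW-impl-fig}) that writes to $T_{\widehat{p}}[\widehat{r}]$ and then argue, by cases on whether $\widehat{r}=\widehat{p}$, that these writes carry strictly increasing values in program order. The key observation is that every write to $T_{\widehat{p}}[\cdot]$ occurs inside $\func{HandleMessage}_p$, which is invoked only from the deliver thread $\widehat{p}.d$. Hence the set of writes to $T_{\widehat{p}}[\widehat{r}]$ is totally ordered by $\orderArrow{\programOrder_{\widehat{p}.d}}$, and it suffices to compare each write with its immediate predecessor.

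For the case $\widehat{r}=\widehat{p}$, the only writing lines are \ref{li:incrementMyTS} ($T[\widehat{p}]\gets T[\widehat{p}]+1$), which is a strict increment, and \ref{li:boostMyTS} ($T[\widehat{p}]\gets \id{queue-element}.\id{timestamp}$), which is guarded by the test at line \ref{queue-pob-update-condition} that the right-hand side strictly exceeds $T[\widehat{p}]$. Both clearly preserve strict monotonicity, so this case reduces to inspecting the code directly.

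For the case $\widehat{r}\ne\widehat{p}$, writes come from lines \ref{li:incrementYourTS-tsmess} and \ref{li:incrementYourTS-ordmess}, triggered respectively by receipt of a $[\const{ts-update}, \id{timestamp}, \widehat{r}]$ or an $[\const{ord-msg}, l, \id{queue-element}]$ whose sender is $\widehat{r}$. Inspecting where such messages originate, the only broadcasts of these two kinds happen at lines \ref{li:OrdBroadcast} and \ref{li:TSUpdateBroadcast} of $\widehat{r}$'s handler, and in each case the transmitted timestamp equals the current value of $T_{\widehat{r}}[\widehat{r}]$ recorded just after one of the updates covered by the previous case. Since those updates are strictly increasing in $\widehat{r}$'s program order, and each is immediately followed by exactly one $\func{FifoBroadcast}$ to every $\widehat{q}\ne\widehat{r}$, the sequence of messages $\widehat{r}$ sends to $\widehat{p}$ that may write $T_{\widehat{p}}[\widehat{r}]$ carries strictly increasing timestamps in $\widehat{r}$'s send order. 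The \FifoChannel\ relation from $\widehat{r}$ to $\widehat{p}$ forces the corresponding $\opRecvVoid$ operations at $\widehat{p}$ to occur in the same order (and \happensBefore\ and hence \seqLinearp\ respect this), so the induced writes to $T_{\widehat{p}}[\widehat{r}]$ in $\widehat{p}.d$'s program order carry the same strictly increasing sequence.

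The main obstacle is the bookkeeping of the second case: one must confirm that (i) no message other than a ts-update about $\widehat{r}$ or an ord-msg from $\widehat{r}$ triggers a write to $T_{\widehat{p}}[\widehat{r}]$, (ii) every such message originates at $\widehat{r}$ and is broadcast only after a strict increase of $T_{\widehat{r}}[\widehat{r}]$ that it reports verbatim, and (iii) there are no intervening rewrites of $T_{\widehat{r}}[\widehat{r}]$ between the increase and the $\func{FifoBroadcast}$. All three facts follow from inspection of the handler code, but writing them out carefully in a proof diagram (with \WritesInto{} edges from lines \ref{li:incrementMyTS}/\ref{li:boostMyTS} at $\widehat{r}$ to the matching reads used to populate the outgoing message, and \FifoChannel{} edges across the $\widehat{r}\!\to\!\widehat{p}$ channel) is where the bulk of the effort goes.
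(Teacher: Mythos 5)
Your proposal is correct and follows essentially the same route as the paper: treat $T_{\widehat{p}}[\widehat{p}]$ by direct inspection of the strict increment at line \ref{li:incrementMyTS} and the guarded boost at line \ref{li:boostMyTS}, then reduce the case $\widehat{r}\neq\widehat{p}$ to the fact that the \const{ord-msg} and \const{ts-update} messages $\widehat{r}$ sends each report $T_{\widehat{r}}[\widehat{r}]$ immediately after a strict increase and are received in FIFO order. Your explicit remark that all writes to $T_{\widehat{p}}[\cdot]$ happen in the single deliver thread (so they are totally ordered by program order) is a detail the paper leaves implicit, but it does not change the argument.
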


\begin{proof}
$T_{\widehat{p}}[\widehat{p}] $ changes value only in Line \ref{li:incrementMyTS} where it is incremented, 
and Line  \ref{li:boostMyTS} where it is boosted to a bigger value.
Therefore $T_{\widehat{p}}[\widehat{p}] $ never decreases. 

For $\widehat{r} \neq \widehat{p}$, 
$\widehat{p}$ writes a new value $t$ to $T_{\widehat{p}}[\widehat{r}]$ only
in Lines \ref{li:incrementYourTS-tsmess} and \ref{li:incrementYourTS-ordmess}
because $\widehat{p}$ received a \const{ts-update} or \const{ord-msg} message from $\widehat{r}$ with timestamp $t$. 
So consider the timestamps in \const{ts-update} and \const{ord-msg} messages sent by $\widehat{r}$ to $\widehat{p}$. 
We have just seen that  $T_{\widehat{r}}[\widehat{r}]$ does not decrease, 
so, given the increment in Line \ref{li:incrementMyTS}, 
any \const{ord-msg} sent by $\widehat{r}$ to $\widehat{p}$ (Line \ref{li:OrdBroadcast})
contains a strictly bigger timestamp than that of any previous message sent by $\widehat{r}$ to $\widehat{p}$.
Similarly, given the increase in Line \ref{li:boostMyTS}, 
any \const{ts-update} message sent by \targ{r} (Line \ref{li:TSUpdateBroadcast})
contains a strictly bigger timestamp than that of any previous message sent by \targ{r} to \targ{p}.
Since messages are received in fifo order, 
these messages that $\widehat{p}$ receives from $\widehat{r}$ have increasing timestamps,
confirming that each of $\widehat{p}$'s writes to $T_{\widehat{p}}[\widehat{r}]$ writes a bigger value than was previously written.
\end{proof}

\begin{lemma}
\label{timestamp-deliver-key-lemma}
If $m_1$ and $m_2$ are  \const{ord-msg} messages with labels $g$ and $h$ ($h \neq \bot $)
and \id{queue-element}s $qe_1$ and $qe_2$ respectively 
such that $qe_1.ts \le qe_2.ts $ 
then for all $\widehat{p}\in\widehat{P}$, \\ 
$ \opEitherEnqueuep{\id{queue_g},qe_1}{\targ{p}} \orderArrow{\ProgramOrder{}} \opDequeuep{\id{priorityQ[h]}}{qe_2}{\targ{p}}$.
\end{lemma}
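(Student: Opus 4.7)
The plan is to invoke the successful \func{CanExtract} check that enabled $\widehat{p}$'s dequeue of $qe_2$ to conclude that at that moment $T_{\widehat{p}}[qe_1.src] \ge qe_2.ts \ge qe_1.ts$, and then to argue that $T_{\widehat{p}}[qe_1.src]$ can only reach $qe_1.ts$ at $\widehat{p}$ via an event that has already enqueued $qe_1$ at $\widehat{p}$. Setting $\widehat{s} := qe_1.src$ and observing that the single delivery thread of $\widehat{p}$ executes \func{HandleMessage} and \func{CanExtract} sequentially, every \func{HandleMessage} call that has already completed precedes, in program order, the \func{CanExtract} read at Line \ref{li:timestampcanextractend} that enabled the dequeue of $qe_2$. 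It therefore suffices to exhibit a \func{HandleMessage} execution that, before that read, both writes a value at least $qe_1.ts$ into $T_{\widehat{p}}[\widehat{s}]$ and enqueues $qe_1$.

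I would then split on whether $\widehat{p} = \widehat{s}$. If $\widehat{p} \ne \widehat{s}$, $T_{\widehat{p}}[\widehat{s}]$ is written only by Lines \ref{li:incrementYourTS-tsmess} and \ref{li:incrementYourTS-ordmess}, using a timestamp taken from a message freshly received from $\widehat{s}$; combined with FIFO channels and Lemma \ref{T-values-increasing-lemma}, the unique message from $\widehat{s}$ carrying timestamp exactly $qe_1.ts$ is the ord-msg that conveys $qe_1$. Hence the first message from $\widehat{s}$ whose timestamp is at least $qe_1.ts$ to arrive at $\widehat{p}$ is no earlier than that ord-msg, and handling that ord-msg also calls \func{ProcessQueueElement} at Line \ref{li:ProcessQElementCallOther}, which enqueues $qe_1$. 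If instead $\widehat{p} = \widehat{s}$, then $qe_1$ was produced by $\widehat{p}$'s own local-broadcast-request handling; by Lemma \ref{T-values-increasing-lemma} the value $qe_1.ts$ is written to $T_{\widehat{p}}[\widehat{p}]$ uniquely at Line \ref{li:incrementMyTS} of that handling, and the enqueue of $qe_1$ occurs a few lines later at Line \ref{li:ProcessQElementCallMe} within the same \func{HandleMessage} invocation. In either case the enqueue of $qe_1$ is established in program order before the \func{CanExtract} read, and hence before the dequeue of $qe_2$.

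I expect the main obstacle to be the uniqueness statement used in Case 1: one must rule out the possibility that $T_{\widehat{p}}[\widehat{s}]$ reaches $qe_1.ts$ via a ts-update (Line \ref{li:TSUpdateBroadcast}) rather than via the ord-msg for $qe_1$. Teasing this out requires a case analysis of the messages $\widehat{s}$ can send to $\widehat{p}$---ord-msgs (Line \ref{li:OrdBroadcast}) carry the post-increment value of $T_{\widehat{s}}[\widehat{s}]$, while ts-updates carry a strictly boosted value of the same variable---and a strengthening of Lemma \ref{T-values-increasing-lemma} to the effect that no two messages sent by $\widehat{s}$ to $\widehat{p}$ carry the same timestamp. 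Once that uniqueness is in place, FIFO delivery and the sequential structure of $\widehat{p}$'s delivery thread yield the required program-order relation.
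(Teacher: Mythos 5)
Your proposal is correct, and its skeleton matches the paper's: both extract from the successful $\func{CanExtract}$ test that $T_{\widehat{p}}[qe_1.src]\ge qe_2.ts\ge qe_1.ts$ at the time of the dequeue, both lean on Lemma \ref{T-values-increasing-lemma}, and both finish by noting that the single delivery thread runs $\func{HandleMessage}$ and the $\func{CanExtract}$ read sequentially. Where you diverge is in how you tie the enqueue of $qe_1$ to that read. You identify the \emph{first event} that raises $T_{\widehat{p}}[qe_1.src]$ to at least $qe_1.ts$, split on whether $\widehat{p}=qe_1.src$, and in the remote case argue via FIFO channels plus a uniqueness claim (no two messages from $\widehat{s}$ to $\widehat{p}$ carry the same timestamp) that this event must be the handling of $qe_1$'s \const{ord-msg}. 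The paper avoids both the case split and the message-level reasoning with one observation: \emph{every} enqueue of $qe_1$ at $\widehat{p}$ --- whether it comes from Line \ref{li:ProcessQElementCallMe} or Line \ref{li:ProcessQElementCallOther} --- is immediately preceded in $\widehat{p}$'s program order by a write of the value $qe_1.ts$ into $T_{\widehat{p}}[qe_1.src]$ (Lines \ref{li:incrementMyTS} and \ref{li:incrementYourTS-ordmess}). Lemma \ref{T-values-increasing-lemma}, applied to the sequence of \emph{written values} rather than to the messages in transit, then forces that write (and hence the adjacent enqueue) to precede any read returning a value $\ge qe_1.ts$, in particular the $\func{CanExtract}$ read enabling the dequeue of $qe_2$. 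The ``strengthening'' you flag as the main obstacle is in fact already established inside the proof of Lemma \ref{T-values-increasing-lemma} (it shows successive messages from $\widehat{s}$ to $\widehat{p}$ carry strictly increasing timestamps), so your route closes; it is simply longer than it needs to be, and the paper's version buys a uniform, two-line argument at the cost of noticing the right adjacency in the code.
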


\begin{proof}

For a process $\widehat{p}$ in $\widehat{P}$ to execute $\opDequeuep{\id{priorityQ[h]}}{qe_2}{\targ{p}}$,
$\func{CanExtract}_{\targ{p}}({\id{priorityQ[h]}})$ must have returned \const{true}, 
implying $T_{\widehat{p}}[\widehat{q}] \geq qe_2.ts$ for every process $\widehat{q}$, 
and hence, for $qe_1.src$.
By Lemma \ref{T-values-increasing-lemma}, 
each write to $T_{\widehat{p}}[qe_1.src]$ is an increasing value, so
$T_{\widehat{p}}[{qe_1.src}] \geq qe_2.ts$ remains true. 

Notice that each \opEitherEnqueuep{queue_g, qe_1}{p} is called from either Line \ref{li:ProcessQElementCallMe} or Line 
\ref{li:ProcessQElementCallOther},
and each is preceded by a write of $qe_1.ts$ to $T_{\widehat{p}}[qe_1.src]$
(Lines \ref{li:incrementMyTS} and \ref{li:incrementYourTS-ordmess}).
Thus:

\noindent
\begin{xy}
\xymatrix {
\opWritep{T_{\targ{p}}[qe_1.src],qe_1.ts}{\targ{p}} \ar^{\ProgramOrder{} \; 1.}[rr]  \ar^{\ProgramOrder}[dr] & \ar@{=>}[d] & \opReadp{T_{\targ{p}}[qe_1.src]}{val}{\targ{p}} \ar^{\ProgramOrder}_{2.}[dd] \\
& \opEitherEnqueuep{queue_g, qe_1}{\targ{p}}  \ar^{\ProgramOrder}[ur] & \\
 1.\; val \geq qe_1.ts  &  2.\; val \geq qe_2.ts \text{ by code}    & \opDequeuep{\id{priorityQ[h]}}{qe_2}{\targ{p}} \\ 
}
\end{xy}

\end{proof}

\begin{lemma}
\label{timestamps-catchup-elements-lemma}

If any process \opEitherEnqueueVoid s a queue-element with timestamp $ts$, then 
for all processes $\widehat{q}$ and $\widehat{r}$ eventually the value for $T_{\targ{q}}[\targ{r}]$ 
becomes and remains at least $ts$. 
\end{lemma}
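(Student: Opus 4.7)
The plan is to trace the queue-element carrying timestamp $ts$ backwards to the unique process that originated this timestamp, and then to follow the $\func{FifoBroadcast}$s that every write to a $T[\cdot]$ variable triggers.

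First, I would locate the origin of $ts$. Every queue-element is enqueued inside $\func{ProcessQueueElement}$, which is itself called either immediately after a $\const{local-broadcast-request}$ is processed (Line \ref{li:ProcessQElementCallMe}), or after an $\const{ord-msg}$ carrying that same queue-element is received (Line \ref{li:ProcessQElementCallOther}). An $\const{ord-msg}$, in turn, is only emitted at Line \ref{li:OrdBroadcast} by a process that has just assembled a queue-element at Line \ref{li:PrepQElement}. Iterating back along this chain, we arrive at a unique process $\widehat{s} = qe.src$ that generated $ts$ by executing Line \ref{li:incrementMyTS}, built the queue-element at Line \ref{li:PrepQElement}, and then invoked $\func{FifoBroadcast}$ (Line \ref{li:OrdBroadcast}) sending an $\const{ord-msg}$ with the queue-element to every other process.

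Second, I would use the reliability and FIFO guarantees of $\PredNW{}$: every other process $\widehat{q}$ eventually receives $\widehat{s}$'s $\const{ord-msg}$ and at Line \ref{li:incrementYourTS-ordmess} writes $ts$ into $T_{\widehat{q}}[\widehat{s}]$; by Lemma \ref{T-values-increasing-lemma} this value can only grow thereafter. Combined with the fact that $T_{\widehat{s}}[\widehat{s}] \geq ts$ persists from Line \ref{li:incrementMyTS} onward (again by Lemma \ref{T-values-increasing-lemma}), this settles every pair with $\widehat{r} = \widehat{s}$. For $\widehat{r} \neq \widehat{s}$ I would first show that $T_{\widehat{r}}[\widehat{r}]$ eventually reaches at least $ts$: when $\widehat{r}$ receives $\widehat{s}$'s $\const{ord-msg}$, the test at Line \ref{queue-pob-update-condition} either finds $T_{\widehat{r}}[\widehat{r}] \geq ts$ already, or boosts it to $ts$ at Line \ref{li:boostMyTS}. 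The key observation is that every write to $T_{\widehat{r}}[\widehat{r}]$, whether by Line \ref{li:incrementMyTS} or by Line \ref{li:boostMyTS}, is immediately followed in program order by a $\func{FifoBroadcast}$ carrying the new value (inside an $\const{ord-msg}$ at Line \ref{li:OrdBroadcast} or a $\const{ts-update}$ at Line \ref{li:TSUpdateBroadcast}). Taking $t$ to be the value written when $T_{\widehat{r}}[\widehat{r}]$ first crosses $ts$, every $\widehat{q} \neq \widehat{r}$ eventually receives this broadcast and writes a value $\geq t \geq ts$ into $T_{\widehat{q}}[\widehat{r}]$ (at Line \ref{li:incrementYourTS-tsmess} or Line \ref{li:incrementYourTS-ordmess}), which Lemma \ref{T-values-increasing-lemma} makes permanent.

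The hard part is coping with the conditional at Line \ref{queue-pob-update-condition}, which suppresses a $\const{ts-update}$ whenever $T_{\widehat{r}}[\widehat{r}]$ is already large enough. A naive argument that ``$\widehat{r}$ broadcasts on receipt of $\widehat{s}$'s message'' would fail precisely in this case. Anchoring the propagation step at the \emph{first} moment $T_{\widehat{r}}[\widehat{r}]$ crosses $ts$ closes the gap, because the crossing event itself, whatever triggered it, is always paired with a $\func{FifoBroadcast}$ that carries the crossed value to every other process.
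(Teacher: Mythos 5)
Your proposal is correct and follows essentially the same route as the paper's proof: trace the queue-element back to its originating process, which increments its own timestamp and broadcasts the \const{ord-msg}; each receiver boosts its own $T[\cdot]$ entry if needed; and every change to a process's own timestamp entry is itself \func{FifoBroadcast}, so all remote copies eventually catch up, with Lemma \ref{T-values-increasing-lemma} supplying permanence. Your explicit anchoring at the first moment $T_{\widehat{r}}[\widehat{r}]$ crosses $ts$ is a slightly more careful rendering of the paper's observation that ``$\widehat{r}$ \func{FifoBroadcast}s every change of $T_{\widehat{r}}[\widehat{r}]$,'' but it is the same idea.
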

\begin{proof}

By Lemma \ref{T-values-increasing-lemma}, $T_{\targ{q}}[\targ{r}]$ never decreases, 
so it suffices to show that it eventually takes on a value equal to at least $ts$.

A queue-element, say $qe = [u, ts, c, \targ{p}]$ , can be  \opEitherEnqueueVoid ed by process $\targ{p}$ in line  
\ref{li:ProcessQElementCallMe} 
or by process $\targ{q}  \neq \targ{p} $ in line \ref{li:ProcessQElementCallOther} of \func{HandleMessage}. 
Even in the second case, however, $qe$ was previously \opEitherEnqueueVoid ed by process $\targ{p} $ in line  
\ref{li:ProcessQElementCallMe}.
Process $\targ{p}$  \opEitherEnqueueVoid s $qe$ 
as a consequence of its own \func{LocalBraoadcastRequest} 
and incremented $T_{\widehat{p}}[\widehat{p}] $ to equal $ts$ in Line \ref{li:incrementMyTS}.
It then broadcasts $qe$ to every other process. 
For each other process \targ{r}, when $\targ{r}$ receives $qe$, 
if $T_{\widehat{r}}[\widehat{r}] $ is smaller than $ts$, then it is boosted in Line \ref{li:boostMyTS} to equal $ts$.
Thus, for every $r$,  $T_{\widehat{r}}[\widehat{r}] $ is eventually at least as big as $ts$. 
Furthermore, $\widehat{r}$ \  \func{FifoBroadcast}s every change of $T_{\widehat{r}}[\widehat{r}] $ 
via  either an \const{ord-msg} at Line  \ref{li:OrdBroadcast}
or a \const{ts-update} message at Line \ref{li:TSUpdateBroadcast},
which
upon receipt, by each other process $\widehat{q}$,  
causes $\targ{q}$ to set $T_{\widehat{q}}[\widehat{r}] $ to the received value
(Lines \ref{li:incrementYourTS-tsmess} and \ref{li:incrementYourTS-ordmess}).
It follows that for all processes $\widehat{q}, \widehat{r} \in \widehat{P}$, eventually $\widehat{q}$'s 
value for $T[\widehat{r}] $ is at least $ts$.

\end{proof}

\begin{lemma}
\label{bcast-deliver-lemma}
$ \opbcast{\id{update}, l} \in O_C$ if and only if $\opdeliver{\id{update}, l} \in O_C|p, \forall p$. 
\end{lemma}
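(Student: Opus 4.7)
The plan is to prove the two directions separately. First I would establish the reverse direction, which is essentially bookkeeping: an operation $\opdeliver{u,l} \in O_C|p$ arises from the Build step only when ${Short}(\seqLinearp)$ contains a corresponding $\opDequeuep{\id{priorityQ}[l]}{qe}{\widehat{p}}$ (if $l\neq\bot$) or $\opFIFODequeuep{\id{fifoQ}[\widehat{s}]}{qe}{\widehat{p}}{}$ (if $l=\bot$). Since each queue is accessed by only one thread, validity of $\seqLinearp$ implies every dequeue is preceded by a matching $\opEitherEnqueueVoid$. The enqueue occurs only inside $\func{ProcessQueueElement}$, called from $\func{HandleMessage}$ either at Line \ref{li:ProcessQElementCallMe} (in response to a $\const{local-broadcast-request}$) or Line \ref{li:ProcessQElementCallOther} (in response to an $\const{ord-msg}$). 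Tracing the $\const{ord-msg}$ back via $\MessageOrder$, it was $\func{FifoBroadcast}$ed at Line \ref{li:OrdBroadcast} of some $\func{HandleMessage}$ call that itself handled a $\const{local-broadcast-request}$. By validity of messages in $\PredNW{}$, that $\const{local-broadcast-request}$ was sent by $\tsimpl{L}{\opbcast{u,l}}$ for some process, so $\opbcast{u,l} \in O_C$.

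For the forward direction, I would trace the opposite chain. A $\opbcast{u,l} \in O_C$ causes $\widehat{p}$ to $\opSend{\widehat{p},\widehat{p},[\const{local-broadcast-request},u,l]}$. By $\PredNW{}$ this message is received, causing an $\opEitherEnqueueVoid$ of a queue-element $qe$ at $\widehat{p}$ and a $\func{FifoBroadcast}$ of $[\const{ord-msg},l,qe]$ to every other process, each of which then also performs $\opEitherEnqueueVoid$ of $qe$. So every process enqueues $qe$; the remaining obligation is that each process eventually dequeues it.

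The main obstacle, and the part I would spend the most care on, is the liveness argument that every enqueued element is eventually dequeued. I would argue by strong induction on the pair $(\widehat{s}, c) = (qe.src, qe.counter)$ under the lexicographic order induced by program order at the source. The counter test at Line \ref{li:CanExtractCounterTest} requires $counter[\widehat{s}] = c - 1$. Since $qe.counter = \id{local-counter}_{\widehat{s}}$ is incremented once per broadcast by $\widehat{s}$, the inductive hypothesis gives that all earlier elements from $\widehat{s}$ have been dequeued at $\widehat{p}$, and Line \ref{li:counterUpdate} then shows $counter[\widehat{s}]$ equals the counter of the most recently dequeued element from $\widehat{s}$, so this test is eventually satisfied. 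For labeled elements, the extra condition $qe.ts \le T[\widehat{q}]$ for all $\widehat{q}$ is precisely what Lemma \ref{timestamps-catchup-elements-lemma} guarantees. In a priority queue, elements are dequeued in order of $(ts, src)$, but any element dominating $qe$ either gets its own conditions satisfied first (and is removed, uncovering $qe$) or has $qe$ as its predecessor-to-wait-for; invoking Lemma \ref{timestamp-deliver-key-lemma} together with the counter induction rules out circular waits. For fifo queues, only the counter condition matters, so the induction alone suffices. Finally, the last conjunct of $\PredNW{}$ (received messages equal sent messages) combined with this liveness argument yields the biconditional.
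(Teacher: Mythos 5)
Your overall decomposition is the same as the paper's: the ``only if'' direction via the observation that every dequeue is preceded by a matching enqueue and that enqueues happen only at Lines \ref{li:ProcessQElementCallMe} and \ref{li:ProcessQElementCallOther}, both traceable back to a \const{local-broadcast-request}; the ``if'' direction by following the \const{local-broadcast-request} and \const{ord-msg} message chain to show every process enqueues the same queue-elements; and liveness via Lemma \ref{timestamps-catchup-elements-lemma} for the timestamp test plus a counter argument. The one place where your plan diverges from the paper --- and where it is not yet a proof --- is the liveness argument. Strong induction on $(qe.src, qe.counter)$ in lexicographic order does not support the step you need for priority queues: an element $qe$ can be blocked from being the minimum of $\id{priorityQ}[l]$ by an element $qe'$ from a \emph{different} source with smaller $(ts,src)$, and $qe'$ is not below $qe$ in your induction order, so the inductive hypothesis says nothing about it. Your appeal to Lemma \ref{timestamp-deliver-key-lemma} to ``rule out circular waits'' is also off-target: that lemma is a safety statement (every $\opEitherEnqueueVoid$ of a smaller-timestamp element precedes the $\opDequeueVoid$ of a larger one in program order); it does not show that any blocked element is eventually unblocked.

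The paper closes this gap with a minimal-element progress argument rather than an induction: it forms the set $S_{\widehat{p}}$ of queue heads that already pass the timestamp test (nonempty eventually, by Lemma \ref{timestamps-catchup-elements-lemma}), takes the element $qe_{\widehat{p}}$ with globally least $(ts,\widehat{\id{source}})$ over \emph{all} queues, and argues that because every queue is ordered by increasing timestamp, all messages from $qe_{\widehat{p}}.src$ with smaller timestamp have already been dequeued, so $\id{counter}[\id{source}]+1 = qe_{\widehat{p}}.counter$ and the counter test passes. The right well-founded measure is thus the global $(ts,src)$ pair, not $(src,counter)$; if you recast your induction (or your ``no circular wait'' claim) on that measure --- noting that any blocking chain strictly decreases timestamps --- your argument goes through and coincides with the paper's.
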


\begin{proof}
Each transformation of a \opbcast{\id{update},l} by process $p$ generates a unique \const{local-broadcast-request} by $\widehat{p}$, 
the transformation of $p$. 
Each \const{local-broadcast-request} results in the preparation of a \id{queue-element}, say $\eta$, (Line \ref{li:PrepQElement})
that contains \id{update}, and which is a parameter in the call by  $\widehat{p}$ 
to $\func{ProcessQueueElement}$ (Line \ref{li:ProcessQElementCallMe}).
This call results in an \opEitherEnqueueVoid\ to \id{priorityQ[l]} if $l \neq \bot$ 
(Line \ref{li:priorityEnqueue})
or to \id{fifoQ}$[\widehat{p}]$ if $l = \bot$
(Line \ref{li:fifoEnqueue}). 

Process $\widehat{p}$ next sends a copy of $\eta$ 
to every other process (Line \ref{li:OrdBroadcast}) in \func{HandleMessage}. 
Therefore, each \opbcast{\id{update},l} by process $p$ results in an \opEitherEnqueueVoid\ of $\eta$ at $\widehat{p}$,
and also results in an \opEitherEnqueueVoid\ of $\eta$ at every other remote process:\\

\begin{xy}
\xymatrix @C 1.75in {
\opSend{\widehat{p}, \widehat{p}, [\const{local-broadcast-request}]} \ar[r]^{\MessageOrder{}}& 
\opRecv{\widehat{p},\widehat{p},[\const{local-broadcast-request}]} \ar[ld]^{\ProgramOrder{}} & \\
\opEitherEnqueuep{\dots,\mbox{$\eta$} }{\hat{p}} \ar[d]_{\ProgramOrder{}} \\
\opSend{\widehat{p},\widehat{q},  [\const{ord-msg}, \mbox{$ l, \eta $}]} \ar[r]_{\MessageOrder{}}& 
\opRecv{\widehat{p},\widehat{q},[\const{ord-msg}, \mbox{$ l, \eta $}]}                 \ar[ld]^{\ProgramOrder{}} & \\
 \opEitherEnqueuep{\dots,\mbox{$\eta$} }{\hat{q}} \\
}
\end{xy}

\noindent
Because these are the only two ways anything is \opEitherEnqueueVoid{}ed,
(a call to \func{ProcessQueueElement} from Line \ref{li:ProcessQElementCallMe}
or from Line \ref{li:ProcessQElementCallOther})
there is a 1-1 correspondence between the set of all
\opbcastVoid\ operations by all processes in the specification level, 
and the set of all \opEitherEnqueueVoid s for each process, $\widehat{p}$ in the implementation. 
Therefore, each process eventually \opEitherEnqueueVoid s the same set of \id{queue-element}s. 
Furthermore, 
each \opEitherDequeueVoid\ operation by $\widehat{p}$ 
corresponds to exactly one \opdeliverVoid\ operation by $p$ 
(see the code for \opdeliverVoid).
Since 
only \opEitherEnqueueVoid{}ed messages can be \opEitherDequeueVoid{}ed,
it only remains to show that every 
\id{queue-element}, say $\eta$, that is \opEitherEnqueueVoid ed is eventually \opEitherDequeueVoid ed.

By Lemma \ref{timestamps-catchup-elements-lemma}, for all processes $\widehat{q}, \widehat{r} \in \widehat{P}$, 
eventually $T_{\targ{q}}[\widehat{r}] $ becomes and remains greater than or equal to $\eta.ts$. 
Therefore, eventually every \opEnqueueVoid{}ed \id{queue-element} with label $l \neq \bot$ will forever satisfy 
the timestamp part of the \func{CanExtract} predicate.
Furthermore, if some \id{priorityQ}$[l]$ contains an \id{queue-element} that satisfies 
this timestamp part of \func{CanExtract},
then the highest priority \id{queue-element} in \id{priorityQ}$[l]$ does,
because priority decreases with increasing timestamp.

We now show that eventually 
$\eta$ will also forever satisfy 
the counter part of the \func{CanExtract} or \func{CanDequeue} requirement.
Let $S_{\widehat{p}}$ be the set of \id{queue-element}s, $\gamma$, in $\widehat{p}$'s collection of queues,
such that either  
1) $\gamma$ has label $l \neq \bot$, has highest priority in \id{priorityQ[l]}, and 
satisfies the timestamp part of the \func{CanExtract} requirement,
or 
2) $\gamma$ has no label and is at the head of its \id{fifoQ}. 
We just established that this set cannot remain empty.  
Let $qe_{\widehat{p}}$ be the \id{queue-element} in $S_{\widehat{p}}$ with the least 
$(ts,\widehat{\id{source}})$ pair when it is not empty. 

Since $\widehat{\id{source}}$ sends messages in order of increasing timestamp, and channels are fifo, 
unlabeled queue-elements from \targ{\id{source}} enter \id{fifoQ}[\targ{\id{source}}] in order of increasing timestamp. 
Also, each \id{priorityQ}\ is ordered by increasing timestamp.
So every other message from $\widehat{\id{source}}$ with timestamp smaller than $ts$ must have been delivered, 
implying $\id{counter}[\id{source}]+1$ must be equal to $qe_{\widehat{p}}.counter$. 
Therefore, either $qe_{\widehat{p}}$ has a label $l \neq \bot$ and satisfies \func{CanExtract}
or $qe_{\widehat{p}}$ has label $\bot$ and satisfies \func{CanDequeue}.

Thus,   
provided only a finite number of messages are \opbcastVoid{} 
(or, in longlived computations, given a weak fairness constraint)  
every \id{queue-element} that is \opEitherEnqueueVoid ed will eventually be \opEitherDequeueVoid ed.
Therefore, $ \opbcast{m, l} \in O_C $ if and only if $\opdeliver{m, l} \in O_C|p$ for every process $p$. 
\end{proof}

\begin{lemma}
\label{timestamp-extend-program-lemma}
 $\forall p\in P: \Pred{Extends}[O|p, \orderArrow{\seqlinearp{}}, \orderArrow{\programOrder{}}]$  \\
\end{lemma}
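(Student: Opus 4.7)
The plan is to mimic the structure of Lemma~\ref{token-extend-program-lemma}, adapted to the timestamp implementation. Fix $o_1, o_2 \in O_C|p$ with $o_1 \orderArrow{\programOrder_p} o_2$; the goal is to conclude $o_1 \orderArrow{\seqlinearp{}} o_2$.

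The key observation is that the Build step sets up, for each specification-level operation $o \in O_C|p$, a unique \emph{distinguished} target-level operation inside $\tsimpl{L}{o}$ that survives $Short$ and is mapped back to $o$ by \lift{}. Reading off the definition of \lift{} in the Build step, the distinguished operation is: the \opReadNoReturn{x} (respectively \opWrite{x,v}) inside an \opreadNoReturn{x} (respectively \opwrite{x,v}) call; the single \opSendVoid{} of a local-broadcast-request to self inside \opbcast{u,l}; and the unique \opDequeueVoid{} or \opFIFODequeueVoid{} appearing in the final case statement just before the \Return{} of \opdeliverNoReturn{}.

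I would then chain four simple implications, exactly mirroring the proof diagram for the token version: (i) $o_1 \orderArrow{\programOrder_p} o_2$ implies that the transformed calls $\tsimpl{L}{o_1}$ and $\tsimpl{L}{o_2}$ are in $\ProgramOrder_{\widehat{p}}$ order, so (ii) their distinguished inner operations are also in $\ProgramOrder_{\widehat{p}}$ order; (iii) $\seqLinearp{}$ is a witness sequence for \PredNWVoid{} and so extends $\happensBefore$, which contains $\ProgramOrder_{\widehat{p}}$, placing the two distinguished operations in $\seqLinearp{}$ order; and (iv) since both survive $Short$ and \lift{} maps them to $o_1$ and $o_2$ while preserving order, we obtain $o_1 \orderArrow{\seqlinearp{}} o_2$.

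The main obstacle is bookkeeping rather than mathematical depth: one must verify, case by case over the four kinds of operations in $O_C|p$, that the designated distinguished target operation actually lies inside the correct transformation call and is not removed by any of the three filters defining $Short$. The slightly subtle cases are \opbcastVoid{} and \opdeliverVoid{}. For \opbcastVoid{}, the transformation of the invoking thread contains only one \opSendVoid{} (the local-broadcast-request to self); every other \opSendVoid{} in the implementation originates inside \func{HandleMessage} in response to incoming messages and is correctly stripped by the second filter of $Short$. For \opdeliverVoid{}, the transformation also triggers \opRecvVoid{}, \opEnqueueVoid{}, and \opFIFOEnqueueVoid{} inside \func{HandleMessage}, but $Short$ retains only the final \opDequeueVoid{}/\opFIFODequeueVoid{} of the case statement, which is precisely the operation that \lift{} inverts to this \opdeliverVoid{}. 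Once these correspondences are in place, the argument closes by the same diagram template as Lemma~\ref{token-extend-program-lemma}, with the bottom node $\opCommon{low}{o}{}$ replaced by the distinguished operation determined above.
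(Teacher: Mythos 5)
Your proposal matches the paper's proof essentially verbatim: the paper also designates, for each specification-level operation, the unique target-level operation $\tsimpl{L}{o}.\opCommon{low}{op}{}$ with \lift{$\cdot$} $= o$, and chains $\orderArrow{\programOrder}$ through $\orderArrow{\ProgramOrder}$ and $\orderArrow{\seqLinearp{}}$ (via $\Pred{Extends}[O|\widehat{p}, \orderArrow{\seqLinearp}, \orderArrow{\ProgramOrder}]$ from the network model) down to $\orderArrow{\seqlinearp{}}$, exactly as in the token-version diagram. Your extra case-by-case bookkeeping about which operations survive $Short$ is left implicit in the paper but is sound and does not change the argument.
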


\begin{proof}
Let $o_1, o_2 \in O|p$ such that $o_1 \orderArrow{\programOrder} o_2$.  
For $o \in \{o_1, o_2 \}$, $\tsimpl{L}{o}.\opCommon{low}{op}{}$ denotes the operation in the transformation of $o$ 
so that \lift{$\tsimpl{L}{o}.\opCommon{low}{op}{}$} $=$ $o$.

\noindent
\begin{xy}
\xymatrix{
o_1 \ar@{--}[d]_{\tsimpl{L}{}} \ar[rr]^{\programOrder{}} & & o_2 \ar@{--}[d]^{\tsimpl{L}{}}\\
\tsimpl{L}{o_1} \ar[rr]^{\ProgramOrder{}} &  & \tsimpl{L}{o_2} \\
\tsimpl{L}{o_1}.\opCommon{low}{op$_1$}{} \ar@{-(}[u] \ar[rr]^{\ProgramOrder{}} & \ar@{=>}[d]_{1.} & \tsimpl{L}{o_2}.\opCommon{low}{op$_2$}{} \ar@{-(}[u] \\
\tsimpl{L}{o_1}.\opCommon{low}{op$_1$}{}  \ar[rr]_{\seqLinearp{}} & & \tsimpl{L}{o_2}.\opCommon{low}{op$_2$}{}  \\
o_1 \ar@{--}[u]^{\func{lift}} \ar[rr]^{\seqlinearp{}} & & o_2 \ar@{--}[u]_{\func{lift}} \\
}
\end{xy}

\noindent 1. $\Pred{Extends}[O|\widehat{p}, \orderArrow{\seqLinearp}, \orderArrow{\ProgramOrder}]$ by the definition of the network model.

\end{proof}

\begin{lemma}
\label{valid-lemma}
$\forall p\in P: (O|p, \orderArrow{\seqlinearp} ) $ is a valid total order.
\end{lemma}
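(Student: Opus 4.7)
The plan is to verify validity of $L_p$ object-by-object, relying on two facts: (a) projection of a valid sequence onto the operations applied to a subset of its shared objects preserves validity, and (b) the $\func{lift}$ renaming maps each surviving operation to a specification-level operation in a way that preserves the sequential specification of each object class. Since $\widehat{L}_{\widehat{p}}$ was chosen as a witness for $\PredNW{\widehat{C}}$, it is itself valid. The $\func{Short}$ step removes exactly the operations applied to the auxiliary target objects (the arrays $T$, $\id{counter}$, $\id{local-counter}$, the priority-queues, the fifo-queues) together with all $\opSendVoid$ and $\opRecvVoid$ operations on messages other than $\const{local-broadcast-request}$ messages, so $\func{Short}(\widehat{L}_{\widehat{p}})$ is a projection of $\widehat{L}_{\widehat{p}}$ onto a subset of its objects and is therefore valid.

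For each shared variable $x$, the $x$-subsequence of $L_p$ is obtained from the $\widehat{x}$-subsequence of $\func{Short}(\widehat{L}_{\widehat{p}})$ by the bijective renaming $\opRead{\widehat{x}}{v}\mapsto\opread{x}{v}$ and $\opWrite{\widehat{x},v}\mapsto\opwrite{x,v}$, which preserves the read/write sequential specification; so each such subsequence is valid.

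For the broadcast/deliver objects, the requirement is that (1) no specific $\opdeliverVoid$ appears more than once in $L_p$ and (2) no $\opdeliverVoid$ precedes its corresponding $\opbcastVoid$ in $L_p$. For (1), queue-elements are tagged with the sender id and that sender's monotonically incremented $\id{local-counter}$, so distinct queue-elements carry distinct content; each queue-element is enqueued at $\widehat{p}$ at most once (either from $\widehat{p}$'s own $\const{local-broadcast-request}$ or from exactly one subsequently received $\const{ord-msg}$) and hence, by queue validity inside $\widehat{L}_{\widehat{p}}$, dequeued at most once; since $\func{lift}$ is injective on the kept dequeues, the resulting $\opdeliverVoid$s are distinct. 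For (2), an $\opbcast{u,l}$ appears in $L_p$ only when $p$ itself invoked it, in which case $\tsimpl{L}{\opbcast{u,l}}$ executes $\opSend{\widehat{p},\widehat{p},[\const{local-broadcast-request},u,l]}$; its matching $\opdeliverVoid$ in $L_p$ is lifted from an $\opEitherDequeueVoid$ at $\widehat{p}$ whose queue-element was created by $\widehat{p}$'s handling of that very self-message. The causal chain $\opSendVoid \orderArrow{\MessageOrder} \opRecvNoReturn \orderArrow{\ProgramOrder} \opEitherEnqueueVoid$, combined with the queue-object validity requirement that $\opEitherEnqueueVoid$ precedes $\opEitherDequeueVoid$, forces $\opSendVoid$ before $\opEitherDequeueVoid$ in $\widehat{L}_{\widehat{p}}$, because $\widehat{L}_{\widehat{p}}$ extends $\HappensBefore$ order and is object-valid. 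Lifting preserves this order, giving $\opbcastVoid$ before $\opdeliverVoid$ in $L_p$.

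The main obstacle is claim (2): in principle the implementation-level $\opSendVoid$ of a $\const{local-broadcast-request}$ is separated from the matching $\opEitherDequeueVoid$ by an arbitrary amount of timestamp, counter, and queue bookkeeping. Fortunately, because the only bcasts that must be ordered against delivers in $L_p$ are those invoked by $p$ itself, the entire causal chain stays inside $\widehat{p}$ apart from a single self-message, so the required ordering in $\widehat{L}_{\widehat{p}}$ follows directly from program order and per-queue validity, without any appeal to the global timestamp synchronization that drives Lemma \ref{timestamp-agree-lemma}.
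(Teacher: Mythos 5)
Your proof is correct and follows essentially the same route as the paper's: validity of $L_p$ is argued object class by object class, with the broadcast/deliver case split into (i) uniqueness of delivers and (ii) the chain $\opSendVoid \orderArrow{\MessageOrder} \opRecvNoReturn \orderArrow{\ProgramOrder} \opEitherEnqueueVoid$ followed by queue validity to reach the $\opEitherDequeueVoid$ inside $\widehat{L}_{\widehat{p}}$, which is exactly the paper's diagram (the paper cites Lemma~\ref{bcast-deliver-lemma} for (i) where you inline a direct counter/uniqueness argument, an acceptable substitute). One small caveat: $\func{Short}(\widehat{L}_{\widehat{p}})$ is \emph{not} a projection onto a subset of objects (it retains dequeues without their enqueues and the $\const{local-broadcast-request}$ sends without their receives), so it is not itself valid --- the paper says so explicitly --- but this does not damage your argument, since like the paper you ultimately establish validity of $L_p$ per object class and appeal to queue validity of $\widehat{L}_{\widehat{p}}$ itself rather than of $\func{Short}(\widehat{L}_{\widehat{p}})$.
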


\begin{proof}
$\seqLinearp{}$ is valid. 
Hence, 
the modified $\seqLinearp{}$ after step 1, 
formed by removing all operations on some subsets of objects, is valid.  
After step 2, the sequence remains valid because removing some  \opSendVoid{} and \opRecvVoid{} operations maintains 
the required validity property:  a sequence of message operations is valid if it contains 
at most one \opSendVoid{} and \opRecvVoid{} of any message. 
After step 3, the subsequence of \func{Short}(\seqLinearp{}) consisting of variables and network operations is valid. 
However, the subsequence of \func{Short}(\seqLinearp{}) consisting of queue operations contains only 
\opEitherDequeuepVoid{\targ{p}} operations and is not valid. 
We now show that validity is restored in $L_p =$ \Lift{${Short}(\seqLinearp)$}.
 
The subsequence of \seqlinearp{} consisting of only variables is valid since \lift{}
is essentially an identity map for operations on variables. 
It remains to show validity for  \opbcastpVoid{p} and \opdeliverpVoid{p} operations. 
Recall that a sequence of \opbcastpVoid{p} and \opdeliverpVoid{p} operations is valid if
\begin{enumerate}
\item  
\opdeliverpVoid{p} does not precede its corresponding \opbcastpVoid{p}, and 
\item
no specific \opdeliverpVoid{p} occurs more than once. 
\end{enumerate}
The proof of Lemma \ref{bcast-deliver-lemma} showed that each update is \opdeliverVoid{}ed 
exactly once in each \seqlinearp, establishing (2).
Let  \opSend{\widehat{p}, \widehat{p}, [\const{local-broadcast-request},u,l]} and 
\opEitherDequeuep{queue_l}{[u,ts,c,\widehat{p}]}{\targ{p}} be operations in $ O_{\widehat{C}}|\widehat{p}$ .
The next diagram establishes (1). 

\scalebox{0.875}{
\begin{xy}
\xymatrix{
\opSend{\widehat{p}, \widehat{p}, [{\const{local-broadcast-request}, \atop u,l}]} \ar[dd]_{\MessageOrder{}} &  & \opSend{\widehat{p}, \widehat{p}, [{\const{local-broadcast-request}, \atop u,l}]} \ar[dddddd]^{\seqLinearp{}} & \opbcastp{u,l}{p} \ar[dddddd]^{\seqlinearp{}} \ar@{--}[l]^(0.3){\func{lift}}\\
\ar@{-}[r] & \ar@{-}[dddd] \\
\opRecv{\widehat{p}, \widehat{p}, [{\const{local-broadcast-request}, u, l}]} \ar[dd]_{\ProgramOrder{}} & & & \\
\ar@{-}[r] & \ar@{=>}[r]_{1.} & \\
\opEnqueuep{queue_l,[{\const{ord-msg}, u, l, \atop ts,c,\widehat{p}}]}{\targ{p}} \ar[dd]_{\seqLinearp{}} & & \ar@{=>}[r]_{2.} & \\
\ar@{-}[r] & \\
\opDequeuep{queue_l}{[\const{ord-msg},u,l, ts,c, \widehat{p}]}{\targ{p}} & & \opDequeuep{queue_l}{[\const{ord-msg},u,l, ts,c, \widehat{p}]}{\targ{p}} & \opdeliverp{u,l}{p} \ar@{--}[l]_{\func{lift}}\\
}
\end{xy}
}

\begin{enumerate}
\item{} By definition of network model $\Pred{Extends}[O|p, \orderArrow{\HappensBefore{}}, \orderArrow{\ProgramOrder} \cup \orderArrow{\MessageOrder{}} ]$ and \\
$\Pred{Extends}[O|p, \orderArrow{\seqLinearp{}}, \orderArrow{\HappensBefore{}}] $.
\item{} By construction  \orderArrow{\seqlinearp{}} from definition of \func{lift}.
\end{enumerate}

\end{proof}

\begin{lemma}
\label{timestamp-fifo-lemma}  
 $\forall p\in P: \Pred{Extends}[O|p, \orderArrow{\seqlinearp{}}, \orderArrow{\delorder_C{}}]$  
\end{lemma}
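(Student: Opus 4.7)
The plan is to trace $o_1 \orderArrow{\delorder_C} o_2$ back through the implementation and exploit the monotonicity of source counters to force the corresponding queue dequeues at $\widehat{p}$ to occur in the same order.

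First I would unpack the hypothesis. Since $o_1 \orderArrow{\delorder_C} o_2$ and both lie in $O_C|p$, we have $o_i = \opdeliver{u_i,l_i}$ with $\opbcast{u_1,l_1} \orderArrow{\programOrder_s} \opbcast{u_2,l_2}$ for some process $s$. The transformation maps each $\opbcastVoid$ to a single self-addressed $\opSend{\widehat{s},\widehat{s},[\const{local-broadcast-request},\cdot,\cdot]}$ executed in the main thread of $\widehat{s}$, so the two sends are in $\ProgramOrder_{\widehat{s}}$. By $\FifoChannel$ causality between $\widehat{s}$'s main and deliver threads, the two matching $\opRecvVoid$'s inside $\func{HandleMessage}$ at $\widehat{s}$'s deliver thread are in $\HappensBefore$ order, so one handler call fully precedes the other in $\seqLinearp[\widehat{s}]$. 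Each call increments $\id{local-counter}$ before reading it into its queue-element (Line \ref{li:PrepQElement}), so the queue-elements $qe_1, qe_2$ that end up attached to $u_1, u_2$ share source $\widehat{s}$ and satisfy $qe_1.counter < qe_2.counter$, with any broadcasts by $s$ in between filling every integer counter value between them.

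Next I would show that at every process $\widehat{p}$ the counter guard inside $\func{CanExtract}$ and $\func{CanDequeue}$ forces $qe_1$ to be dequeued before $qe_2$. For $qe_2$ to pass either test at $\widehat{p}$, we need $qe_2.counter = counter_{\widehat{p}}[\widehat{s}]+1$ at that instant, so $counter_{\widehat{p}}[\widehat{s}]$ must pass through $qe_1.counter$ at some earlier point. That variable is written only in Line \ref{li:counterUpdate}, and only to the $.counter$ field of a just-dequeued queue-element whose $.src = \widehat{s}$. Because the counters minted by $\widehat{s}$ are strictly increasing and consecutive, $counter_{\widehat{p}}[\widehat{s}]$ can step through $qe_1.counter$ only by dequeuing $qe_1$ itself, which by Lemma \ref{bcast-deliver-lemma} occurs exactly once at $\widehat{p}$. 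Hence $\opEitherDequeuep{\cdot}{qe_1}{\widehat{p}} \orderArrow{\ProgramOrder_{\widehat{p}}} \opEitherDequeuep{\cdot}{qe_2}{\widehat{p}}$, and since $\orderArrow{\seqLinearp{}}$ extends $\orderArrow{\HappensBefore}$ and hence $\orderArrow{\ProgramOrder}$, the same order holds in $\seqLinearp{}$.

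Applying $\func{lift}$ in the construction of $\seqlinearp{}$ then carries this ordering to $o_1 \orderArrow{\seqlinearp{}} o_2$, completing the argument. The main subtlety I anticipate is the uniform treatment of the four combinations $l_1, l_2 \in L \cup \{\bot\}$: the counter guard is identical across $\func{CanExtract}$ and $\func{CanDequeue}$ and every queue-element sits in exactly one queue at each receiver, so a single counter-based argument covers all cases without having to invoke the timestamp-based Lemma \ref{timestamp-deliver-key-lemma} (which applies only when the second label is not $\bot$).
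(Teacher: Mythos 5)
Your proposal is correct and follows essentially the same route as the paper's proof: trace the program-ordered broadcasts through the self-addressed \const{local-broadcast-request} sends, the FIFO-ordered receipt in $\func{HandleMessage}$, and the strictly increasing \id{local-counter} values $x_1 < x_2$, then use the counter guard in $\func{CanExtract}$/$\func{CanDequeue}$ together with Line \ref{li:counterUpdate} to force $\widehat{p}$ to dequeue $qe_1$ before $qe_2$, and finally apply \lift{} to transfer the order to $\seqlinearp{}$. Your explicit observation that $counter_{\widehat{p}}[\widehat{s}]$ must step through $qe_1.counter$ (because the minted counters are consecutive and the guard only permits unit increments) spells out a step the paper leaves terse, and your remark that Lemma \ref{timestamp-deliver-key-lemma} is not needed here is accurate.
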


\begin{proof}

Let $o_1, o_2\in O|p$  such that $o_1 \orderArrow{\delorder{}} o_2$.
Then, by definition of \delorder{},  $o_1 = \opdeliverp{u_1,l_1}{p}$, $o_2 = \opdeliverp{u_2,l_2}{p}$ and
there is a process $q$ such that $\opbcastp{u_1,l_1}{q} \orderArrow{\programOrder} \opbcastp{u_2, l_2}{q}$.

\noindent
  \begin{xy}
    \xymatrix{
  \tsimpl{L}{\opbcastp{u_1,l_1}{q}} \ar[rr]^{\programOrder{}} & & \tsimpl{L}{\opbcastp{u_2,l_2}{q}}\\
     \opSend{\targ{q},\targ{q},[{ \const{local-broadcast-request}, \atop u_1, l_1}]} 
\ar@{-(}[u] \ar[rr]^{\ProgramOrder} \ar[d]_{\MessageOrder} & 
\ar@{=>}[d] & \opSend{\targ{q},\targ{q},[{\const{local-broadcast-request}, \atop u_2,l_2}]} \ar@{-(}[u] \ar[d]^{\MessageOrder} \\
      \func{HandleMessage}.\opRecv{\targ{q},\targ{q},[\dots ,u_1,l_1]} 
\ar[rr]_{\FifoChannel{}} \ar[d]_{\ProgramOrder} & &  
\func{HandleMessage}.\opRecv{\targ{q},\targ{q},[\dots,u_2,l_2]} \ar[d]_{\ProgramOrder}    \\
      \opWrite{\id{local-counter},x_1} \ar[d]_{\ProgramOrder}  & & \opWrite{\id{local-counter},x_2} \ar[d]_{\ProgramOrder} \\
      \opSend{\targ{q},\targ{p}, [{\const{ord-msg},l_1, \atop [u_1,\cdot ,x_1,\targ{q}]}]} \ar[d]_{\MessageOrder{}} \ar[uurr]_{\ProgramOrder} \ar[rr]^{\ProgramOrder}  & \ar@{=>}[d] & \opSend{\targ{q},\targ{p}, [{\const{ord-msg},l_2, \atop [ u_2, \cdot , x_2, \targ{q}]}]} \ar[d]_{\MessageOrder{}} 
       \\
       \opRecv{\targ{q},\targ{p}, [{\const{ord-msg},l_1, \atop [u_1,\cdot ,x_1,\targ{q}]}]} \ar[rr]_{\FifoChannel{}}\ar[d]_{\ProgramOrder{}}  & & \opRecv{\targ{q},\targ{p}, [{\const{ord-msg},l_2, \atop [ u_2, \cdot , x_2, \targ{q}]}]} \ar[d]_{\ProgramOrder{}}
       \\ 
       \opEitherEnqueue{queue_{l_1}, [u_1, \cdot ,x_1,\targ{q}]} \ar[rru]_{\ProgramOrder} & & \opEitherEnqueue{queue_{l_2}, [ u_2, \cdot , x_2, \targ{q}]} 
       \\ 
       }
  \end{xy}

Each write to \id{local-counter} increments it by 1, so $0 < x_1 < x_2 $.

To \opdeliverVoid\ each of these updates, 
\opEitherDequeuep{queue_{l_1}}{[u_1, \cdot ,x_1,\targ{q}]}{\targ{p}} 
and \opEitherDequeuep{queue_{l_2}}{[u_2, \cdot ,x_2,\targ{q}]}{\targ{p}} 
must both be in $\widehat{O}|\widehat{p}$. 
For $i \in \{1,2\}$, if label $l_i \neq \bot$, then 
$\widehat{p}$ can \opDequeuep{\id{priorityQ}[{l_i}]}{[u_i, \cdot ,x_i,\targ{q}]}{} only if
its call to \func{CanExtract(\id{priorityQ}[{l_i}])} returns \const{true},
where $[u_i, \cdot ,x_i,\targ{q}]$ is the queue-element at the head of \id{priorityQ}$[l_i]$
and $x_i$ is exactly one bigger than the value stored by $\widehat{p}$ for $counter[\targ{q}]$
(see line \ref{li:CanExtractCounterTest} of \func{CanExtract}).  
Similarly, for $i \in \{1,2\}$, if label $l_i = \bot$, then 
$\widehat{p}$ can \opFIFODequeue{\id{fifoQ}[\targ{q}]}{[u_i, \cdot ,x_i,\targ{q}]} only if
its call to \func{CanDequeue(\id{fifoQ}[\targ{q}])} returns \const{true},
where $[u_i, \cdot ,x_i,\targ{q}]$ is the queue-element at the head of \id{fifoQ}$[\targ{q}]$
and $x_i$ is exactly one bigger than the value stored by $\widehat{p}$ for $counter[\targ{q}]$
(see line \ref{li:timestampcandequeueend} of \func{CanDequeue}). 

Each process' $counter[src]$ starts at $0$ and is incremented by 
1 if and only if a message from $[src]$ is  \opEitherDequeueVoid ed 
(see line \ref{li:counterUpdate} of \opdeliverNoReturn{}).
Since $ x_1 < x_2 $, $p$ must \opEitherDequeueVoid\ $u_1$ before $u_2$:
\noindent

\begin{xy}
  \xymatrix{
  \opEitherDequeuep{queue_{l_1}}{[u_1,\cdot ,x_1,\targ{q}]}{\targ{p}}  \ar[rr]^{\ProgramOrder{}} & \ar@{=>}[d]_{1.} & \opEitherDequeuep{queue_{l_2}}{[u_2,\cdot ,x_2,\targ{q}]}{\targ{p}}\\
    \opEitherDequeuep{queue_{l_1}}{[u_1,\cdot ,x_1,\targ{q}]}{\targ{p}} \ar[rr]_(0.4){\seqLinearp{}} & \ar@{=>}[d]  & \opEitherDequeuep{queue_{l_2}}{[u_2, \cdot ,x_2,\targ{q}]}{\targ{p}} \\
     \opdeliverp{u_1, l_1}{p}  \ar@{--}[u]^{\func{lift}} \ar[rr]_{\seqlinearp{}} & & \opdeliverp{u_2, l_2}{p} \ar@{--}[u]^{\func{lift}}
  }
\end{xy}

\begin{enumerate} 
\item $\Pred{Extends}[O|\widehat{p}, \orderArrow{\seqLinearp}, \orderArrow{\ProgramOrder}]$
\end{enumerate}

\noindent This proves that  $\Pred{Extends}[O|p, \orderArrow{\seqlinearp{}}, \orderArrow{\delorder{}}]$.

\end{proof}

\begin{lemma}
\label{timestamp-agree-lemma}
 $\forall p, q\in P, l\in L: \Pred{Agree}[O|\cutdeliverslabel{l}, \orderArrow{\seqlinearp{}}, \orderArrow{\seqlinearp[q]{}}]$  \\
\end{lemma}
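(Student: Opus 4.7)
The plan is to reduce the agreement requirement to the well-known fact that a priority queue extracts its elements in order of their keys, and then use Lemma~\ref{timestamp-deliver-key-lemma} to transport that ordering from process $p$ to every other process $q$.

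First I would unfold the hypothesis. Suppose $o_1, o_2 \in O_C|\cutdeliverslabel{l}$ with $l \neq \bot$, and assume without loss of generality that $o_1 \orderArrow{\seqlinearp{}} o_2$ at process $p$. Write $o_i = \opdeliverp{u_i, l}{p}$. By the construction of $\seqlinearp$ via $\func{lift}$ on $\func{Short}(\seqLinearp)$, each $o_i$ corresponds to a unique $\opDequeuep{\id{priorityQ[l]}}{qe_i}{\widehat{p}}$ in $\widehat{L}_{\widehat{p}}$, where $qe_i$ is the queue-element built by the transformation from the \const{local-broadcast-request} generated by $\opbcast{u_i,l}$.

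The key step is to show that $qe_1.ts \le qe_2.ts$. Since both dequeues act on the same \id{priorityQ[l]} and a \opDequeueVoid\ returns the element with the minimum $(\mathit{ts}, \mathit{src})$ pair, and since $\widehat{p}$ removes $qe_1$ before $qe_2$, we must have $(qe_1.ts, qe_1.src) < (qe_2.ts, qe_2.src)$ in the lexicographic order; in particular $qe_1.ts \le qe_2.ts$. (Ties of the form $qe_1 = qe_2$ are ruled out because each \opbcastVoid\ increments the sender's $T[\widehat{p}]$ in Line~\ref{li:incrementMyTS}, so distinct queue-elements produced by the same source have distinct timestamps.)

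Next I would apply Lemma~\ref{timestamp-deliver-key-lemma} at the other process $\widehat{q}$: since $qe_1.ts \le qe_2.ts$ and the label $l$ of $qe_2$ is non-$\bot$, we get
\[
\opEitherEnqueuep{\id{priorityQ[l]}, qe_1}{\widehat{q}} \orderArrow{\ProgramOrder{}} \opDequeuep{\id{priorityQ[l]}}{qe_2}{\widehat{q}}.
\]
Thus at the moment $\widehat{q}$ executes $\opDequeuep{\id{priorityQ[l]}}{qe_2}{\widehat{q}}$, the element $qe_1$ is (or was) present in $\id{priorityQ[l]}$. Because $qe_1$ has strictly higher priority, the min-extracting semantics of the priority queue forces $\widehat{q}$ to have already executed $\opDequeuep{\id{priorityQ[l]}}{qe_1}{\widehat{q}}$. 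Transporting this back through $\func{lift}$ and using $\Pred{Extends}[O|\widehat{q}, \orderArrow{\seqLinearp[q]}, \orderArrow{\ProgramOrder{}}]$, we conclude $o_1 \orderArrow{\seqlinearp[q]} o_2$, as required. By symmetry, this yields the desired $\Pred{Agree}$ property.

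The main obstacle is the small bookkeeping step of confirming that the priority-queue ordering at $\widehat{q}$ really does imply the dequeue order: one must be sure that $qe_1$ cannot still be sitting in $\id{priorityQ[l]}$ after $qe_2$ has been dequeued, which relies on the priority queue being single-threaded at each process and on the $(\mathit{ts}, \mathit{src})$ comparison being strict between distinct queue-elements. Everything else is a direct diagrammatic chase analogous to those in Lemmas~\ref{timestamp-extend-program-lemma} and~\ref{timestamp-fifo-lemma}.
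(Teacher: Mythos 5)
Your proposal follows essentially the same route as the paper: both arguments rest on Lemma~\ref{timestamp-deliver-key-lemma} together with the min-extraction semantics of \id{priorityQ}$[l]$, and both lift the resulting dequeue order back through \func{lift} to the \opdeliverVoid{} operations. The one place where your write-up has a genuine (though repairable) hole is the step you call the ``key step'' at process $p$: the claim that $\widehat{p}$ dequeuing $qe_1$ before $qe_2$ forces $(qe_1.ts, qe_1.src) < (qe_2.ts, qe_2.src)$ does \emph{not} follow from priority-queue semantics alone. A min-queue can perfectly well emit a larger key first if the smaller-keyed element is only enqueued after that extraction (enqueue $5$; extract $5$; enqueue $3$; extract $3$). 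To rule this scenario out you need Lemma~\ref{timestamp-deliver-key-lemma} a second time, in contrapositive form: if $(qe_2.ts, qe_2.src) < (qe_1.ts, qe_1.src)$ then $qe_2.ts \le qe_1.ts$, so the lemma puts the enqueue of $qe_2$ at $\widehat{p}$ before the dequeue of $qe_1$, and min-extraction would then force $qe_2$ out first --- contradicting your assumption on the dequeue order. The paper sidesteps this entirely by naming the two queue-elements so that $(ts_1, q) < (ts_2, r)$ holds by fiat and then showing that \emph{every} process dequeues them in that key order; agreement follows without ever having to invert ``dequeue order at $p$'' into ``key order.'' Note finally that the obstacle you single out at the end (whether $qe_1$ could still be sitting in the queue at $\widehat{q}$ when $qe_2$ is dequeued) is the part of your argument that is already airtight given Lemma~\ref{timestamp-deliver-key-lemma}; the real subtlety is at $p$, not at $q$.
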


\begin{proof}
For each process $p$ there is a one-to-one correspondence between the set of \opdeliverNoReturn s by process $p$ 
of updates with label $l$ and the set 
of \opDequeueVoid s by process $\targ{p}$ of  queue-elements from \id{priorityQ}$[l]$. 
Let $qe_1 =[u_1, ts_1 ,\cdot  ,\targ{q}]$ and  $qe_2 =[u_2, ts_2 ,\cdot  ,\targ{r}]$  be two such queue-elements with label $l$,
where $(ts_1, q) < (ts_2,r)$.  
Then $ts_1 \leq ts_2$, so by Lemma \ref{timestamp-deliver-key-lemma}, 
for all $\widehat{p}\in\widehat{P}$,  
$ \opEnqueuep{\id{priorityQ}[l],{qe_1}}{\targ{p}} \orderArrow{\ProgramOrder{}} \opDequeuep{\id{priorityQ[l]}}{qe_2}{\targ{p}}$.
Therefore, by the definition of the priority queue (queue-elements are ordered lexicographically by (timestamp, source) pair):

\begin{xy}
\xymatrix{
\opDequeuep{\id{priorityQ[l]}}{qe_1}{\targ{p}} \ar[rr]^{\ProgramOrder{}} &\ar@{=>}[d]& \opDequeuep{\id{priorityQ[l]}}{qe_2}{\targ{p}} \\
\opDequeuep{\id{priorityQ[l]}}{qe_1}{\targ{p}} \ar[rr]^(0.43){\seqLinearp{}} &\ar@{=>}[d]& \opDequeuep{\id{priorityQ[l]}}{qe_2}{\targ{p}} \\
    \opdeliverp{u_1, l_1}{p}  \ar@{--}[u]^{\func{lift}} \ar[rr]_{\seqlinearp{}} & & \opdeliverp{u_2, l_2}{p} \ar@{--}[u]^{\func{lift}}
}
\end{xy}

Hence, 
for each label $l$, and all processes $p,q$, the orders $([O|\cutdeliverslabel{l}, \orderArrow{\seqlinearp{}}) $  and
$(O|\cutdeliverslabel{l}, \orderArrow{\seqlinearp[q]{}})$ agree.
\end{proof}

%
%
%
%
%
%
%


\section{Summary, Open Questions and Future Work}
\label{concl.section}

This paper introduced partition consistency, a parameterized memory consistency model, 
from which other known models can be instantiated.
Four implementations of partition consistency on a message-passing network of multithreaded nodes 
were also developed and proved correct. 
All implementations are structured with a middle-level of abstraction 
which serves to modularize the implementations and simplify our proofs.
The implementations are based on Attiya and Welch's slow-write/fast-read and fast-write/slow-read 
methods \cite{attiya2004dcf}. 
Both the token-based and queue-based variants are achieved by extending Attiya and Welch's 
total order broadcast \cite{attiya2004dcf} to a partial order broadcast. 

All four implementations were proven correct using a unified framework. 
Such unified descriptions of memory consistency models at different levels of abstraction
and the associated proof techniques provide more confidence in proofs that are otherwise tedious, 
lengthy and ad hoc. 

Our proofs assume that the specification-level computations always terminate. 
Extending these proofs to long-lived computations is not involved but tedious. 
It would be useful to have a general technique to ``reduce'' the long-lived case to finite cases. 
We also suggest that the framework, the proof set-up and the diagrammatic proof descriptions used in this paper
could be used to establish the correctness of other memory consistency models for various 
multiprocess machines, or networks or languages (for example, C++).

Let us call a correct implementation \emph{exact} if every computation of the specification level  
is an interpretation of some computation of the target level. 
Our implementations in this paper are correct but not exact; 
there are computations that satisfy partition consistency that could not happen in our implementations. 
For example,
abstract memory consistency models such as P-RAM and PC-G allow a kind of cyclic causality, 
such as the computation: 

\begin{eqnarray*}
p: \ \ \opread{x}{1},  & & \opwrite{y,2}\\
q:\ \ \opread{y}{2},   & & \opwrite{x,1} 
\end{eqnarray*}

\noindent
This computation contains a cycle. 
The first process must read the value written by the second process before writing 
but the second process must read the value written by the first process before writing.
This problem can be overcome by adding a causality constraint to the memory consistency definitions.
Our implementations prohibit such cyclic computations. 
Hence, our implementations are stronger than the memory consistency models they implement
--- the specifications allow such computations, but our implementations do not. 
Though this computation may seem impossible in actual implementations, 
it could conceivably be possible if there is a prediction system in place. 
Our proof method could still be used with such a system.
As a second example, consider the computation:

\begin{center}
\begin{tabular}
{lccc}
$p :$ & \opwrite{x,1}, & \opread{y}{3}, & \opread{x}{1} \\
$q :$ & \opread{x}{1}, & \opwrite{x,2}, & \opwrite{y,3}
\end{tabular}
\end{center}

\noindent
This computation is possible in a P-RAM implementation that broadcasts to itself,
provided there is no guarantee that a process applies its own write before any other process applies that write. 
The first process broadcasts \opwrite{x,1} which is received and applied by the second process.
The second process then broadcasts \opwrite{x,2} and \opwrite{y,3}.
The first process receives \opwrite{x,2} before its own \opwrite{x,1} and so overwrites the 2 with a 1, 
even though \opwrite{x,1} ``caused'' \opwrite{x,2}.
This computation also could not happen in the implementations in this paper.  
This shows that our implementations are not exact, and 
that the simplest causality constraint added to the memory consistency definition is insufficient to make
it exact. 
Whether or not there is a simple strengthening of the partition consistency predicate to make our current 
implementations exact remains a question for future research.

A related issue to exactness is optimality. 
We believe that our implementations use minimal synchronization in the following sense. For every synchronization that is added by the transformation, there exists a program whose transformation would create computations that do not satisfy \PCName{} if that synchronization is removed. Confirming this intuition is beyond the scope of this paper.

Since the transformations in this paper are general, they are not optimal for all programs. Transforming individual programs may lead to more efficient implementations. Hence, another approach that we have not followed but is pursued by others (see Section \ref{related-work.section}) aims to determine for each program what synchronization is necessary and sufficient to preserve correctness on the target machine.

A different direction that could complement this research is the assessment of the performance gains of 
\PCName{} instantiations over sequential consistency. 
Our preliminary experiments on Westgrid's 128 node ``matrix'' cluster \cite{westgrid} are inconclusive. 
But some of these instantiations, and particularly the weak sequential consistency model, 
show a potential to outperform sequential consistency. 
This complementary study will be the subject of future work.



\section{Acknowledgments}
This research was supported by the Natural Sciences and Engineering Research Council of Canada through
discovery grant number 41900-07.
Two anonymous referees provided insightful comments that helped us improve this submission. 
\bibliographystyle{plain}
\bibliography{references}

\end{document}